\title{Stochastic Matching with Few Queries: New Algorithms and Tools\footnote{Portion of the work was completed while some of the authors were at Simons Institute for Theory of Computing.}}
\author{
Soheil Behnezhad\thanks{Department of Computer Science, University of Maryland. Emails: \texttt{\{soheil, farhadi, hajiagha\}@cs.umd.edu}. Supported in part by NSF CAREER award CCF-1053605, NSF AF:Medium grant CCF-1161365, NSF BIGDATA grant IIS-1546108, NSF SPX grant CCF-1822738, UMD AI in Business and Society Seed Grant, and UMD Year of Data Science Program Grant. Soheil Behnezhad was an intern at Upwork for part of the work.} \and
\hspace{-0.3cm} Alireza Farhadi\footnotemark[2] \and
\hspace{-0.3cm} MohammadTaghi Hajiaghayi\footnotemark[2] \and
\hspace{-0.3cm} Nima Reyhani\thanks{Airbnb. Email: \texttt{nima.reyhani@airbnb.com}. Portion of the work was completed while the author was at Upwork.}
}
\date{}
\newcommand{\E}[0]{\ensuremath{\mathbb{E}}}
\newcommand{\matching}[1]{\ensuremath{M(#1)}}
\newcommand{\expmatching}[1]{\ensuremath{\mathbb{M}(#1)}}
\newcommand{\opt}[0]{\ensuremath{\textsc{opt}}}
\newcommand{\alg}[0]{\ensuremath{\textsc{alg}}}
\newcommand{\qw}[0]{\ensuremath{\varphi}}
\newcommand{\bu}{0.501}
\newcommand{\bube}{0.5011}
\newcommand{\bdbe}{0.4989}
\DeclareMathOperator*{\argmax}{arg\,max}
\newcommand{\Ot}[1]{\ensuremath{\widetilde{O}(#1)}}
\newtheorem{result}{Result}
\newtheorem{theorem}{Theorem}[section]
\newtheorem{lemma}[theorem]{Lemma}
\newtheorem{corollary}[theorem]{Corollary}
\newtheorem{definition}[theorem]{Definition}
\newtheorem{claim}[theorem]{Claim}
\newtheorem{observation}[theorem]{Observation}
\newtheorem{remark}[theorem]{Remark}
\definecolor{mygreen}{RGB}{20,125,20}
\definecolor{mylightgray}{RGB}{230,230,230}
\algnewcommand{\IIf}[1]{\State\algorithmicif\ #1\ \algorithmicthen}
\algnewcommand{\EndIIf}{\unskip\ \algorithmicend\ \algorithmicif}
\newcounter{proc}
\newenvironment{tbox}{
\vspace{0.2cm}
\begin{tcolorbox}[width=\textwidth,
                  enhanced,
                  boxsep=2pt,
                  left=1pt,
                  right=1pt,
                  top=4pt,
                  boxrule=1pt,
                  arc=0pt,
                  colback=white,
                  colframe=black,
                  breakable]
}{
\end{tcolorbox}
}
\newcommand{\tboxhrule}[0]{\vspace{0.1cm} \hrule \vspace{0.2cm}}
\newenvironment{titledtbox}[1]{\begin{tbox}#1 \tboxhrule}{\end{tbox}}
\newenvironment{procedure}[1]{\refstepcounter{proc}\begin{titledtbox}{\textbf{Procedure \theproc.} #1}}{\end{titledtbox}}
\newenvironment{procinput}{
\textbf{Input:}
}{
\vspace{0.2cm}
}
\newcommand{\thickhline}[0]{\Xhline{1.3pt}}
\newcolumntype{?}{!{\vrule width 1.3pt}}
\newcommand{\smparagraph}[1]{\vspace{-0.3cm}\paragraph{#1}}
\newcommand{\argparagraph}[2]{\vspace{0.3cm}\noindent\textbf{#1} (#2)\textbf{.}\hspace{0.2cm}}
\newenvironment{graytbox}{
\vspace{0.1cm}
\begin{tcolorbox}[width=\textwidth,
                  enhanced,
                  frame hidden,
                  boxsep=3pt,
                  left=1pt,
                  right=1pt,
                  top=1pt,
                  boxrule=1pt,
                  arc=0pt,
                  colback=mylightgray,
                  colframe=black,
                  breakable
                  ]
}{
\end{tcolorbox}
}
\begin{document}
\maketitle

\begin{abstract}
\setlength{\parskip}{0.4em}
	We consider the following {\em stochastic matching} problem on both weighted and unweighted graphs: A graph $G=(V, E)$ along with a parameter $p\in (0, 1)$ is given in the input. Each edge of $G$ is {\em realized} independently with probability $p$. The  goal is to select a degree bounded (dependent only on $p$) subgraph $H$ of $G$ such that the expected size/weight of maximum realized matching of $H$ is close to that of $G$.
	
	This model of stochastic matching has attracted significant attention over the recent years due to its various applications in kidney exchange, online labor markets, and other matching markets. The most fundamental open question is the best approximation factor achievable for such algorithms that, in the literature, are referred to as {\em non-adaptive} algorithms. Prior work has identified breaking (near) half-approximation as a barrier for both weighted and unweighted graphs. Our main results are as follows:
	
	\begin{itemize}
	
	\item We analyze a simple and clean algorithm and show that for unweighted graphs, it finds an (almost)  $4\sqrt{2}-5$ ($\approx 0.6568$) approximation by querying $O(\frac{\log (1/p)}{p})$ edges per vertex. This improves over the  state-of-the-art $0.5001$ approximation of Assadi et al.~[EC'17].
	
	\item We show that the same algorithm achieves a $0.501$ approximation for weighted graphs by querying $O(\frac{\log (1/p)}{p})$ edges per vertex. This is the first algorithm to break $0.5$ approximation barrier for weighted graphs. It also improves the per-vertex queries of the state-of-the-art by Yamaguchi and Maehara~[SODA'18] and Behnezhad and Reyhani~[EC'18].
	\end{itemize}
	
	Prior results were all interestingly based on similar algorithms and differed only in the analysis. Our algorithms are fundamentally different, yet very simple and natural. For the analysis, we introduce a number of procedures that construct heavy fractional matchings. We consider the new algorithms and our analytical tools to be the main contributions of this paper.
\end{abstract}


\clearpage

\section{Introduction}\label{sec:intro}

We consider the following {\em stochastic matching} problem on both weighted and unweighted graphs. In its most general form, an edge-weighted graph $G=(V, E, w)$ along with a parameter $p \in (0, 1)$ is given in the input and each edge of $G$ is {\em realized} independently with probability $p$. We are unaware of the edge realizations yet our goal is to find a heavy realized matching. To do this, we can select a degree-bounded (i.e., dependent only on $p$) subgraph $Q$ of $G$, {\em query} all of its edges simultaneously, and report its maximum realized matching.  Denoting the expected weight of the maximum realized matching of any subgraph $H$ of $G$ by $\expmatching{H}$, the goal is choose $Q$ such that it maximizes $\expmatching{Q}/\expmatching{G}$ --- which is also known as the approximation factor.


The restriction on the number of queries per vertex comes from the fact that the querying process is often {\em time consuming} and/or {\em expensive} in the applications of stochastic matching. Without this restriction, the  solution is trivial as one can simply query all the edges of $G$ and report the maximum matching among those that are realized. 

The algorithms in this setting are categorized as {\em non-adaptive} since they query all the edges simultaneously without any prior knowledge about the realizations. In contrast, {\em adaptive} algorithms have multiple rounds of adaptivity and the queries conducted at each round can depend on the outcome of the prior queries. Non-adaptive algorithms are considered practically more desirable since the queries are not stalled behind each other. In fact, one can see a non-adaptive algorithm as  an adaptive algorithm that is restricted to have only one {\em round} of adaptivity; therefore, it is not hard to see that it is generally much more complicated to design and analyze non-adaptive algorithms.

While $(1-\epsilon)$-approximate adaptive algorithms are known, even for weighted graphs, the literature has identified breaking half approximation to be a barrier for non-adaptive algorithms \cite{DBLP:conf/sigecom/BlumDHPSS15, DBLP:conf/sigecom/AssadiKL16, DBLP:conf/sigecom/AssadiKL17, DBLP:conf/soda/YamaguchiM18, DBLP:conf/sigecom/BehnezhadR18}. Prior to our work, no such algorithm was known for weighted graphs and even for unweighted graphs, the state-of-the-art non-adaptive algorithm of Assadi et al.~\cite{DBLP:conf/sigecom/AssadiKL17} achieves only a slightly better approximation factor of $0.5001$.


We introduce new algorithms and techniques to bypass these bounds. For unweighted graphs, we achieve a 0.6568 approximation and show that the same algorithm bypasses 0.5 approximation for weighted graphs. In both algorithms, we query only $O(\log(1/p)/p)$ edges per-vertex. These results answer several open questions of the literature that we elaborate more on in the forthcoming paragraphs. Apart from the approximation factor, it is not hard to see that any algorithm achieving a constant approximation has to query $\Omega(1/p)$ edges per vertex (see e.g., \cite{DBLP:conf/sigecom/AssadiKL16}). As such, the number of per-vertex queries conducted by our algorithms is optimal up to a factor of $O(\log 1/p)$.



\paragraph{Prior work.} The stochastic matching problem has been intensively studied during the past decade due to its diverse applications from {\em kidney exchange} to {\em labor markets} and {\em online dating} (we overview these applications in Section~\ref{sec:applications}). Directly related to the setting that we consider are the papers by Blum et al.~\cite{DBLP:conf/sigecom/BlumDHPSS15} (which introduced this variant of stochastic matching), Assadi et al.~\cite{DBLP:conf/sigecom/AssadiKL16,DBLP:conf/sigecom/AssadiKL17}, Yamaguchi and Maehara \cite{DBLP:conf/soda/YamaguchiM18}, and Behnezhad and Reyhani \cite{DBLP:conf/sigecom/BehnezhadR18}. Table~\ref{table:relatedwork} gives a brief survey of known results due to these papers as well as a comparison to our results. We give a more detailed description of the main differences below.


\begin{table}
\centering
\resizebox{9.5cm}{!}{%
\begin{tabular}{?l|l|c|c?} 
\thickhline
                            & Reference  & \multicolumn{1}{l|}{Apx factor}                                                                                                                              & \multicolumn{1}{l?}{Per-vertex queries}                    \\ 
\hline
\multirow{4}{*}{Unweighted} &      \cite{DBLP:conf/sigecom/BlumDHPSS15}      & $0.5-\epsilon$                                                                                                                                               & $\widetilde{O}(1/p^{2/\epsilon})$                          \\ 
\cline{2-4}
                            &       \cite{DBLP:conf/sigecom/AssadiKL16}     & $0.5-\epsilon$                                                                                                                                               & $\widetilde{O}(1/\epsilon p)$                              \\ 
\cline{2-4}
                            &       \cite{DBLP:conf/sigecom/AssadiKL17}     & $0.5001$                                                                                                                                                     & $\widetilde{O}(1/p)$                                       \\ 
\hhline{|~|-|-|-|}
                            & \textbf{This paper} & \begin{tabular}[c]{@{}>{\cellcolor[rgb]{1,1,1}}c@{}}$0.6568$\\ $\big(\approx 4\sqrt{2}-5\big)$ \end{tabular} & $\widetilde{O}(1/p)$   \\ 
\thickhline
\multirow{4}{*}{Weighted}   &      \cite{DBLP:conf/soda/YamaguchiM18}      & \multirow{3}{*}{$0.5-\epsilon$ }                                                                                                                             & $\widetilde{O}(W\log(n)/\epsilon p)$                       \\ 
\cline{2-2}\cline{4-4}
                            & \cite{DBLP:conf/soda/YamaguchiM18} (B)        &                                                                                                                                                              & $\widetilde{O}(W/\epsilon p)$                              \\ 
\cline{2-2}\cline{4-4}
                            &       \cite{DBLP:conf/sigecom/BehnezhadR18}     &                                                                                                                                                              & $\widetilde{O}(1/\epsilon p^{4/\epsilon})$                 \\ 
\hhline{|~|-|-|-|}
                            & \textbf{This paper} & $0.501$                                                                                                                  & $\widetilde{O}(1/p)$   \\
\thickhline
\end{tabular}
}%
\caption{Bounds known for non-adaptive algorithms. We have hidden $\log(1/\epsilon p)$ factors to simplify comparison. The result indicated with (B) in the reference assumes that the input graph is bipartite.}
\label{table:relatedwork}
\vspace{-0.1cm}
\end{table}

Blum et al. introduced the following algorithm:

\begin{tbox}
	\textbf{\hypertarget{alga}{Algorithm}} $\mathcal{A}$ (\cite{DBLP:conf/sigecom/BlumDHPSS15}): {\em Pick a maximum matching $M_i$ from $G$ and remove all of its edges. Repeat this for $R$ iterations, then query the edges in $M_1 \cup \ldots \cup M_R$ simultaneously and report the maximum realized matching among them.}
\end{tbox}

\newcommand{\alga}[0]{Algorithm~\hyperlink{alga}{$\mathcal{A}$}}
\newcommand{\algant}[0]{Algorithm~$\mathcal{A}$}

It is easy to see that $R$, in \alga{}, determines the per-vertex queries. This means that it suffices to argue that a small value for $R$ is sufficient to get our desired approximation factors. Blum et al.~\cite{DBLP:conf/sigecom/BlumDHPSS15} showed that for unweighted graphs, setting $R = 1/p^{O(1/\epsilon)}$ is sufficient to get a $0.5-\epsilon$ approximation. Interestingly, the follow-up results were achieved by the same algorithms (with minor changes) and differed mainly in the analysis. Assadi et al.~\cite{DBLP:conf/sigecom/AssadiKL16} showed that setting $R= \widetilde{O}(1/\epsilon p)$ suffices to achieve a $0.5-\epsilon$ approximation improving the exponential dependence on $1/\epsilon$.\footnote{The algorithm of Assadi et al.~\cite{DBLP:conf/sigecom/AssadiKL16} also incorporates a {\em sparsification step} to ensure $\opt = \Omega(n)$.} Yamaguchi and Maehara~\cite{DBLP:conf/soda/YamaguchiM18} generalized these results to weighted graphs.\footnote{The generalization of Blum et al.'s algorithm to weighted graphs is simply to pick maximum weighted matchings in each round/iteration.} They showed that it suffices to set $R=O(W \log n /\epsilon p)$ to achieve the same approximation factor of $0.5-\epsilon$ where $W$ denotes the maximum integer edge weight. Behnezhad and Reyhani \cite{DBLP:conf/sigecom/BehnezhadR18} further showed that the same approximation factor of $0.5-\epsilon$ can be achieved for weighted graphs by setting $R = O(1/\epsilon p^{4/\epsilon})$. While this removes the dependence on $W$ and $n$, making the bound a constant, it has a worse dependence on $1/\epsilon$ than that of \cite{DBLP:conf/soda/YamaguchiM18}.

Observe that the approximation factor of all the algorithms mentioned above is the same. The only exception in the literature is the algorithm of Assadi et al.~\cite{DBLP:conf/sigecom/AssadiKL17} which achieves a $0.5001$ approximation for unweighted graphs. Their algorithm first extracts a {\em large} $b$-matching (which depends on the expected size of the realized matching) from the graph and then applies \alga{} on the remaining graph. They interestingly show that the edges chosen by \alga{} can be used to augment the realized matching among the edges of the $b$-matching which leads to bypassing the half approximation barrier for unweighted graphs.

\smparagraph{Our contribution.} Despite the theoretical guarantees of the literature for \alga{}, it has its drawbacks. Blum et al.~\cite[Theorem 5.2]{DBLP:journals/corr/BlumHPS14} give examples on which it does not achieve better than a $\sfrac{5}{6}$ approximation. It also seems notoriously difficult (if not impossible) to analyze anything better than a $0.5$ approximation for \alga{} alone. We consider another algorithm which is also very simple and natural:
\begin{tbox}
	\textbf{Algorithm} \hypertarget{algb}{$\mathcal{B}$} (Formally as Algorithm~\ref{alg:nonadaptive}): First draw $R$ realizations $\mathcal{G}_1, \ldots, \mathcal{G}_R$ of $G$ independently. Then from each of these realizations $\mathcal{G}_i$, pick a maximum (weighted) matching $M_i$. Finally, query the edges that appear in $M_1 \cup \ldots \cup M_R$ simultaneously and report the maximum realized matching among them.
\end{tbox}

\newcommand{\algb}[0]{Algorithm~\hyperlink{algb}{$\mathcal{B}$}}
\newcommand{\algbnt}[0]{Algorithm~$\mathcal{B}$}

Similar to \alga{}, here $R$ determines the number of per-vertex queries. We analyze \algb{} for both weighted and unweighted graphs.

\begin{graytbox}
\begin{result}[formally as Theorem~\ref{thm:nonadaptiveweighted}]\label{res:weightednonadaptive}
	For $R=O(\frac{\log (1/p)}{p})$, \algb{} achieves a $0.501$ approximation on weighted graphs.
\end{result}
\end{graytbox}

Result~\ref{res:weightednonadaptive} implies the first non-adaptive algorithm that breaks the $0.5$ approximation barrier for weighted graphs. The number of per-vertex queries of this result also improves that of $0.5-\epsilon$ approximations of \cite{DBLP:conf/soda/YamaguchiM18} and \cite{DBLP:conf/sigecom/BehnezhadR18}. 


\begin{graytbox}
\begin{result}[formally as Theorem~\ref{thm:nonadaptiveunweighted}]\label{res:unweighted}
	For $R=O(\frac{\log (1/p)}{p})$, \algb{} achieves a \mbox{$0.6568$} approximation on unweighted graphs.
\end{result}
\end{graytbox}

Result~\ref{res:unweighted} improves over the state-of-the-art $0.5001$ approximate algorithm of Assadi et al.~\cite{DBLP:conf/sigecom/AssadiKL17}.\footnote{For the case of unweighted graphs, in an independent work, Assadi and Bernstein~\cite{assadisosa} give an (almost) $2/3$ approximation which is slightly better than our factor. Their algorithm, however, is highly tailored for unweighted graphs and gives no guarantee for the weighted case.}

%
%

In our analysis, we devise different {\em procedures}, that given query outcomes, they construct large {\em fractional matchings} over the realized edges. Then based on the size of this fractional matching, we get that there must also be a large integral realized matching. We give more high-level ideas and intuitions about these procedures in Section~\ref{sec:techniques}.

\subsection{Applications}\label{sec:applications}

The stochastic matching problem has a wide range of applications from {\em kidney exchange} to {\em labor markets} and {\em online dating}. In all these applications, the goal is to find a large (or heavy) matching and the main bottleneck is determining which edges exist in the graph. We overview some of these applications below.

\smparagraph{Kidney exchange.} Transplant of a kidney from a living {\em donor} is possible if the recipient ({\em patient}) happens to be medically compatible with his/her  donor. This is not always the case, however, kidney exchange provides a way to overcome this. In its simplest form with {\em pairwise exchanges}, two incompatible donor/patient pairs can exchange kidneys. That is, the donor of the first pair donates kidney to the patient of the second pair and vice versa. This gives rise to the notion of a {\em compatibility graph} where we have one vertex for each incompatible donor/patient pair and each edge determines the possibility of an exchange. Therefore, the pairwise exchanges that take place can be expressed as a matching of this graph. There is, however, one crucial problem. The medical records of the patients such as their blood- or tissue-types only rule out a subset of incompatibilities. For the rest, we need more accurate medical tests that are both costly and time consuming.

The stochastic matching setting helps in finding a large matching among the pairs who also pass the extra tests while conducting very few medical tests per pair. There is a rich literature on such algorithmic approaches for kidney exchange particularly in stochastic settings \cite{DBLP:conf/sigecom/AkbarpourLG14, DBLP:conf/soda/AndersonAGK15, anderson2015finding, DBLP:conf/ijcai/AwasthiS09, DBLP:conf/aaai/DickersonPS12, DBLP:conf/sigecom/DickersonPS13, DBLP:conf/aaai/DickersonS15, DBLP:journals/jea/ManloveO14, unver2010dynamic}. We refer interested readers to the paper of \cite{DBLP:conf/sigecom/BlumDHPSS15} for a more detailed discussion about the application of stochastic matching in kidney exchange.

\smparagraph{Online labor markets.} Online labor markets facilitate working relationships between freelancers and employers. In such platforms, it quite often happens that the users (from either party) have more options than they can consider. We can represent this with a bipartite graph with freelancers on one side and employers on the other. The edges of the {\em compatibility graph}, again, determine possible matches. While the initial job descriptions rule out some of the edges, it is after an interview between an employer and the freelancer that they decide whether to work with each other. Stochastic matching, for such platforms, can be used to recommend interviews. This way, we ensure that with very few interviews, most of the users will find a desired match.



\smparagraph{Further related work.}

Multiple variants of stochastic matching have been considered by prior work. A well-studied setting, first introduced by Chen et al.~\cite{DBLP:conf/icalp/ChenIKMR09}, is the {\em query-commit} model. In this model, the queried edges that happen to be realized have to be included in the final matching \cite{DBLP:journals/ipl/Adamczyk11, DBLP:journals/algorithmica/BansalGLMNR12, DBLP:conf/icalp/ChenIKMR09, DBLP:conf/icalp/CostelloTT12, DBLP:conf/ipco/GuptaN13}. Another related setting is the model of \cite{DBLP:conf/sigecom/BlumGPS13} which allows to query only two edges per vertex. We refer to \cite{DBLP:conf/sigecom/BlumDHPSS15} for a more extensive overview of other models relevant to the one we consider.

\section{Preliminaries}\label{sec:preliminaries}
\paragraph{Notation.} For any edge set $E$, we denote by $\matching{E}$ the weight of the maximum weighted matching in $E$. We may also abuse notation throughout the paper and use $\matching{E}$ to refer to the set of edges in the maximum weighted matching of $E$. When it is clear from the context, we may use maximum matching instead of maximum {\em weighted} matching. For any $U \subseteq V$, we use $G[U]$ to denote the induced subgraph of $G$ over $U$.

\subsection{The Model of Stochastic Matching}

We are given a graph $G=(V, E)$ with edge weights $w: E \to \mathbb{R}_+$ along with a fixed parameter $p\in [0, 1]$. Each of the edges in $E$ is {\em realized} independently from other edges with probability $p$. The {\em realized graph} $G_p=(V, E_p)$ includes an edge $e \in E$ if and only if it is realized. We are not initially aware of the realized graph $G_p$. Our goal, however, is to compute a heavy matching of $G_p$. To do so, we can {\em query} each edge in $E$ and the outcome is whether the edge is realized.

For any $E' \subseteq E$, we denote by $\expmatching{E'} := \E[\matching{E' \cap E_p}]$ the expected weight of the realized matching in $E'$. The benchmark in the stochastic matching problem is the omniscient optimum matching $\expmatching{E}$, which we also denote by $\opt$.  A {\em non-adaptive} algorithm in this setting, has to pick a degree-bounded (dependent only on $1/p$) subgraph $Q$ of $G$ such that $\expmatching{Q}/\opt$, which determines the approximation factor, is maximized. If the algorithm is randomized, which is the case in our paper, it should succeed with high probability.\footnote{We note that throughout the paper, for simplicity, we analyze the approximation factor of our algorithms in expectation. However, it is easy to boost the success probability to $1-o(1)$ by running several instances of the algorithm to obtain candidate solutions $Q_1, \ldots, Q_k$, and then reporting $Q := \argmax_{Q_i} \expmatching{Q_i}$ as the solution.}

%

\subsection{Background on the Matching Polytope}
Fix a graph $G=(V, E)$. A vector $x \in \mathbb{R}^{E}$ is a {\em fractional matching} of $G$ if for any $e \in E$, we have $x_e \geq 0$ and for any $v \in V$ we have $x_v := \sum_{e \ni v} x_e \leq 1$. An integral matching can be seen as a fractional matching where for any $e \in E$ we have $x_e \in \{0, 1\}$. The {\em matching polytope} $\mathcal{P}(G)$ of $G$, is the convex hull of all integral matchings of $G$ represented as above.  Edmonds~\cite{edmonds1965maximum} showed in 1965 that $\mathcal{P}(G)$ is the solution set of linear program:
\begin{flalign*}
	&& & x_e \geq 0 \qquad &\forall e \in E && \\
	&& & x_v \leq 1 \qquad &\forall v \in V && \\
	&& & x(U) \leq \lfloor |U|/2 \rfloor &\forall U \subseteq V \text{ with odd } |U| &&
\end{flalign*}
where $x(U)$ denotes $\sum_{e \in G[U]} x_e$. Note that the first two constraints only ensure that $x$ is a valid fractional matching. Constraints of the third type are known as {\em blossom inequalities}. A corollary of Edmond's theorem is the following:

\begin{corollary}\label{cor:edmonds}
	Let $x$ be a fractional matching of an edge weighted graph $G$ that satisfies blossom inequalities, i.e., $x \in \mathcal{P}(G)$. Then $G$ has an integral matching $y$ where $\sum_e y_e w_e \geq \sum_e x_e w_e$.
\end{corollary}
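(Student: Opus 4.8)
The plan is to derive this as an immediate consequence of Edmonds' characterization of $\mathcal{P}(G)$ stated just above, combined with an averaging argument. Since $x$ is assumed to satisfy the non-negativity constraints, the degree constraints $x_v \le 1$, and all the blossom inequalities $x(U) \le \lfloor |U|/2 \rfloor$, Edmonds' theorem tells us that $x$ lies in the matching polytope $\mathcal{P}(G)$. By definition $\mathcal{P}(G)$ is the convex hull of the incidence vectors of integral matchings of $G$, so $x$ can be written as a (finite, by Carath\'eodory) convex combination $x = \sum_{i} \lambda_i \chi_{M_i}$, where each $M_i$ is an integral matching of $G$, each $\lambda_i \ge 0$, and $\sum_i \lambda_i = 1$.

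The next step is to evaluate the weight of $x$ under this decomposition. Because the weight functional $z \mapsto \sum_e z_e w_e$ is linear in $z$, we get
\[
\sum_e x_e w_e \;=\; \sum_i \lambda_i \sum_e (\chi_{M_i})_e\, w_e \;=\; \sum_i \lambda_i\, w(M_i),
\]
so $\sum_e x_e w_e$ is a weighted average of the quantities $w(M_i)$. An averaging argument then shows that there must exist some index $i^\star$ with $w(M_{i^\star}) \ge \sum_e x_e w_e$; otherwise every term would be strictly smaller and the convex combination could not meet the bound. Setting $y := \chi_{M_{i^\star}}$ gives an integral matching of $G$ with $\sum_e y_e w_e = w(M_{i^\star}) \ge \sum_e x_e w_e$, as required.

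There is no real obstacle here: the content is entirely packaged inside Edmonds' theorem, which is quoted as given, and the remaining argument is the standard observation that a point in a polytope is dominated (in any linear objective) by one of its defining vertices. The only minor point worth stating explicitly is that the convex combination can be taken to be finite, which follows from Carath\'eodory's theorem applied in $\mathbb{R}^{E}$; this is needed only so that "there exists a maximizing index $i^\star$" is literally a maximum over a finite set rather than a supremum, though even the infinite case causes no difficulty.
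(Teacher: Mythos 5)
Your proof is correct and follows essentially the same route the paper intends: since $x \in \mathcal{P}(G)$ is a convex combination of incidence vectors of integral matchings, linearity of the weight functional plus averaging yields an integral matching of at least the same weight. The paper states this as an immediate corollary of Edmonds' theorem without spelling out the decomposition, so your write-up simply makes the standard argument explicit.
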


We can even relax the blossom inequalities and consider only subsets of size at most $1/\epsilon$, and ensure that the weight of no fractional matching exceeds maximum weight of integral matchings by a larger than $1/(1-\epsilon)$ factor. This is captured by the following folklore lemma.

\begin{lemma}[folklore]\label{lem:folklore}
	Let $x$ be a fractional matching of an edge weighted graph $G$ where for any $U \subseteq V$ with $|U| \leq 1/\epsilon$, it satisfies $x(U) \leq \lfloor |U|/2 \rfloor$. Then $G$ has an integral matching $y$ where $\sum_e y_e w_e \geq (1-\epsilon) \sum_e x_e w_e$.
\end{lemma}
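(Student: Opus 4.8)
The plan is to show that the rescaled vector $(1-\epsilon)x$ lies in the matching polytope $\mathcal{P}(G)$ and then invoke Corollary~\ref{cor:edmonds}. Since $x$ is a fractional matching, for every vertex $v$ we have $(1-\epsilon)x_v \le x_v \le 1$, and $(1-\epsilon)x_e \ge 0$ for every edge $e$, so the first two families of constraints defining $\mathcal{P}(G)$ hold automatically. What remains is to verify the blossom inequalities $(1-\epsilon)\,x(U) \le \lfloor |U|/2\rfloor$ for every $U \subseteq V$ with $|U|$ odd.

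I would split this into two cases according to the size of $U$. If $|U| \le 1/\epsilon$, the hypothesis already gives $x(U) \le \lfloor |U|/2\rfloor$, and multiplying by $1-\epsilon \le 1$ only decreases the left-hand side, so the inequality is preserved. If $|U| > 1/\epsilon$, write $|U| = 2k+1$. In this range I only have the trivial bound coming from the degree constraints: since every edge inside $U$ contributes to the degree of two vertices of $U$, $x(U) \le \tfrac12\sum_{v \in U} x_v \le |U|/2 = k + \tfrac12$. The key observation is that the gap between this bound and the target $\lfloor |U|/2\rfloor = k$ is exactly $\tfrac12$, and a short computation shows that $(1-\epsilon)(k+\tfrac12) < k$ holds precisely when $\epsilon(2k+1) > 1$, i.e. when $|U| > 1/\epsilon$; hence $(1-\epsilon)\,x(U) < \lfloor |U|/2\rfloor$ in this case as well.

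Combining the two cases, $(1-\epsilon)x \in \mathcal{P}(G)$. Applying Corollary~\ref{cor:edmonds} to the fractional matching $(1-\epsilon)x$ yields an integral matching $y$ of $G$ with $\sum_e y_e w_e \ge \sum_e (1-\epsilon)x_e w_e = (1-\epsilon)\sum_e x_e w_e$, which is exactly the claimed bound.

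There is no real obstacle here; the only mildly delicate point is the arithmetic in the large-$|U|$ case, where one must check that scaling down by $(1-\epsilon)$ is exactly enough to absorb the $+\tfrac12$ slack of the trivial degree bound once $|U|$ exceeds $1/\epsilon$. The threshold case $|U| = 1/\epsilon$ is covered by the hypothesis rather than by the computation, so no rounding subtleties arise, and the argument does not need $1/\epsilon$ to be an integer.
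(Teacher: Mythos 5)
Your proof is correct and follows essentially the same route as the paper's sketch: scale the fractional matching down slightly (you use the factor $1-\epsilon$, the paper uses $1/(1+\epsilon)$), verify that the scaled vector satisfies all blossom inequalities --- via the hypothesis for $|U| \le 1/\epsilon$ and the trivial degree bound $x(U) \le |U|/2$ for larger odd $U$ --- and then apply Corollary~\ref{cor:edmonds}. Your write-up simply makes explicit the case analysis and arithmetic that the paper's sketch leaves as ``one can show easily.''
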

\begin{proof}[Proof sketch]
	Define $z = x/(1+\epsilon)$. Since $x_v \leq 1$ for any $v$, one can show easily that $z$ satisfies all blossom inequalities. Therefore, by Corollary~\ref{cor:edmonds}, there must exist an integral matching of weight at least that of $z$ which by definition is $\sum_e z_e w_e = (\sum_e x_e w_e)/(1+\epsilon) \geq (1-\epsilon)\sum_e x_e w_e$.
\end{proof}

We refer interested readers to Section~25.2 of \cite{schrijver2003combinatorial} for a comprehensive overview of the matching polytope.

\section{Technical Overview}\label{sec:techniques}

To give an intuition about the true differences between our algorithm (\algb{}) and the standard non-adaptive algorithm of the literature (\alga{}), we start by restating the bad example of Blum et al.~\cite[Theorem 5.2]{DBLP:journals/corr/BlumHPS14} for \alga{} and describing how \algb{} overcomes it. We then proceed to give intuitions on how we analyze the performance of \algb{}.

\smparagraph{A comparison of \algant{} and \algbnt{}.} Consider the graph $G=(V, E)$ of Figure~\ref{fig:badexample}-(a) whose vertex set is partitioned into six subsets $A$, $B_1$, $B_2$, $C_1$, $C_2$, and $D$, each of size $N$. The edge set of the graph contains complete bipartite graphs between pairs $(A, B_1)$, $(A, B_2)$, $(D, C_1)$, and $(D, C_2)$ and perfect matchings between pairs $(B_1, C_1)$ and $(B_2, C_2)$. Assume also that the realization probability $p$ is $0.5$.

\begin{figure}[h]
  \centering
  \includegraphics[scale=0.85]{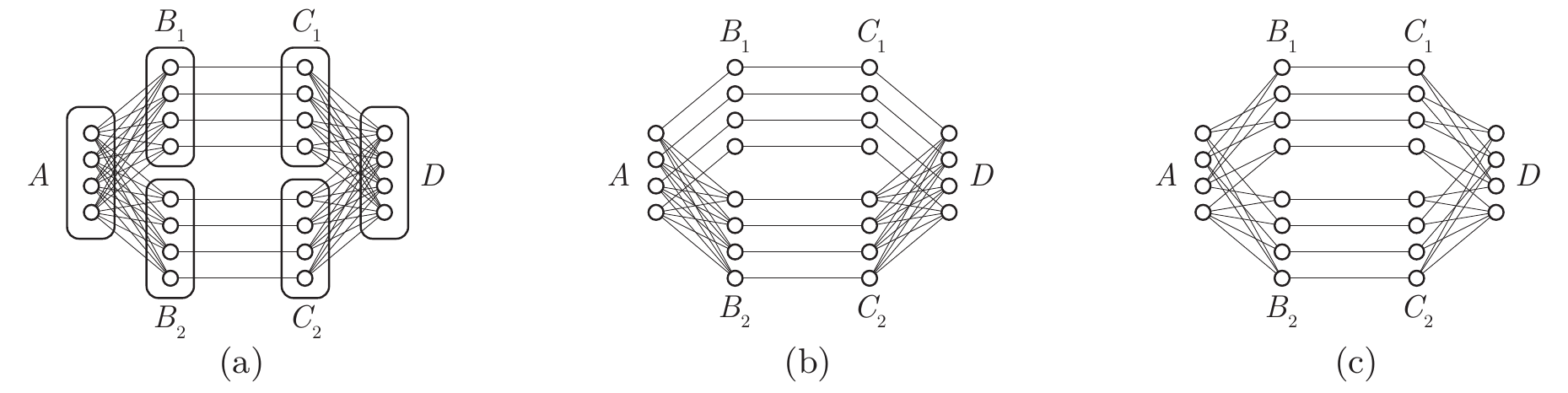}
  \vspace{-0.2cm}
  \caption{Figure (a) illustrates the input graph. Figure (b) illustrates a potential subset of queried edges by \algant{}. Figure (c) illustrates the expected structure of queried edges of \algbnt{}.}
  \label{fig:badexample}
\end{figure}
\vspace{-0.1cm}


It is not hard to confirm that the expected omniscient optimum matching of $G_p$ is an almost perfect matching of size $3N - o(N)$. It suffices to add the realized edges between $(B_1, C_1)$ and $(B_2, C_2)$ to $\opt$ which roughly matches half of the vertices of each of these sets in expectation and then find large realized matchings between the remaining vertices and those in $A$ and $D$.

Recall that \alga{} picks an arbitrary maximum matching $M_i$ in each iteration and removes it from the graph. Suppose that these matchings are as follows: The first matching $M_1$ contains the edges in $(B_1, C_1)$, a perfect matching in $(A, B_2)$, and a perfect matching in $(D, C_2)$. Matching $M_2$ contains the edges in $(B_2, C_2)$, a perfect matching in $(A, B_1)$, and a perfect matching in $(D, C_1)$. Each of the remaining matchings $M_3, \ldots, M_R$ is the union of a perfect matching in $(A, B_2)$ and a perfect matching in $(D, C_2)$. The queried edges by \alga{} are illustrated in Figure~\ref{fig:badexample}-(b). Since for every vertex in $B_1$ or $C_2$, only two edges are queried and $p=0.5$, we expect $1/4$ fraction of these vertices to have no realized queried edges. This means that \alga{} cannot construct a near perfect matching.

Since \algb{} incorporates a randomization throughout the process, particularly in choosing realizations $\mathcal{G}_1, \ldots, \mathcal{G}_R$ from which it picks matchings $M_1, \ldots, M_R$, bad cases such as the one described above cannot happen. In particular, for the graph of Figure~\ref{fig:badexample}, for every vertex in $B$ or $C$, in roughly half of the realizations, they are matched to a vertex in $A$ and $D$, thus we query $\widetilde{\Omega}(R/2)$ edges for each of these vertices and it is not hard to show that for a constant $R$ depending only on $\epsilon$ and $p$, \algb{} achieves a $1-\epsilon$ approximation for this example (see Figure~\ref{fig:badexample}-(c)).

\paragraph{Roadmap for analyzing \algbnt{}.} To convey the main intuitions behind the analysis, we make a few simplifying assumptions. First, assume that the input graph is unweighted. Denote the set of queried edges of \algb{} by $S$ and further denote by $S_p$ those edges in $S$ that are realized. Our goal is to show that in expectation, there exists a matching of size $0.65\opt$ in $S_p$, or in other words, $\expmatching{S} \geq 0.65\opt$. To do this, by Lemma~\ref{lem:folklore}, it suffices to show that there exists a fractional matching of size $0.65\opt$ in $S_p$ that also satisfies blossom inequalities. Let us further assume that $G$ is bipartite so that any fractional matching satisfies blossom inequalities automatically.

Denote by $q_e$ the probability that edge $e$ appears in the omniscient optimum matching.\footnote{We assume that given a realization, the edges that belong to the maximum matching are unique. This can be achieved by using a deterministic matching algorithm.} Recall that in each iteration of \algb{}, we draw a realization and add its maximum matching to $S$. Therefore, $q_e$ also denotes the probability that we sample edge $e$ in each iteration of \algb{}. One can easily confirm that for any vertex $v$, we have $\sum_{e \ni v}q_e \leq 1$. Therefore, one can think of $q_e$'s as a fractional matching with some other nice properties. Denote this fractional matching by $q$. The reader soon notices the following useful properties of $q$:
\begin{enumerate}[label={(P\arabic*)}]
	\item For any edge $e$, we have $q_e \leq p$.\\ {\em Proof sketch. Each edge is realized w.p.\footnote{Throughout, we use w.p. to abbreviate ``with probability".} $p$ and thus appears in \opt{} w.p. at most $p$.}
	\item For any set $F \subseteq E$, the expected matching \expmatching{F} of $F$ has size at least $q(F) := \sum_{e\in F} q_e$.\\ {\em Proof sketch. Suffices for each realization $E_p$ of $E$ to consider matching $F \cap \matching{E_p}$.}
\end{enumerate}

\begin{figure}[hbt]
  \centering
  \vspace{-0.4cm}
  \includegraphics[scale=0.9]{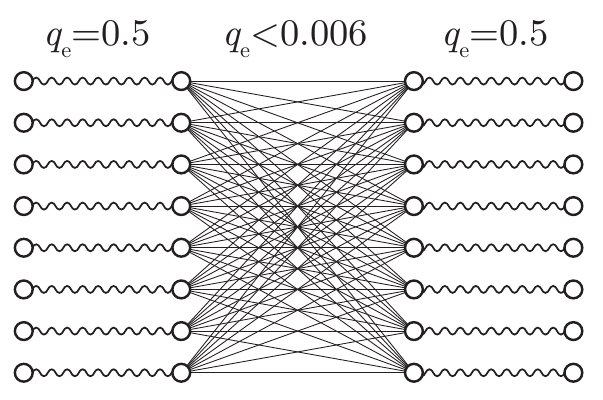}
  \vspace{-0.2cm}
  \caption{}
  \vspace{-0.3cm}
  \label{fig:crucialnoncrucial}
\end{figure}

We set a threshold $\tau \approx \delta p$ for a sufficiently small constant $\delta < 1$ and partition $E$ into two subsets of {\em crucial} edges $C := \{ e \, | \, q_e \geq \tau\}$ and {\em non-crucial} edges  $N := \{e \, | \, q_e < \tau\}$. Figure~\ref{fig:crucialnoncrucial} illustrates the values of $q_e$ over a simple example for which $p=0.5$. In this example, each wavy edge on the side that is realized appears in \opt{}, thus they all have $q_e = p = 0.5$ and  are crucial. The edges in between are significantly less likely to be in \opt{} and for all of them $q_e < 0.006$, thus they are all considered non-crucial.

Note that $q$ is merely a function of the graph's structure and is independent of our algorithms. Our goal is to show that within only $R = \Ot{1/\tau} = \Ot{1/p}$ iterations, \algb{} achieves our desired guarantee. To do this, we prove two canonical lemmas.

\argparagraph{Crucial edges lemma}{Formally as Lemma~\ref{lem:sampleallcrucial}} \algb{} samples almost all crucial edges. Therefore, by (P2), the expected matching $\expmatching{S \cap C}$ has size at least $(1-\epsilon) q(C)$ where $\epsilon$ is any desirably small constant ($\epsilon$ and $\delta$ are interdependent).

\vspace{0.2cm}

For non-crucial edges, the argument above does not work. The reason is that, as illustrated in Figure~\ref{fig:crucialnoncrucial}, the number of non-crucial edges connected to each vertex can be much more than the maximum degree of $S$ (which determines the number of per-vertex queries), thus, we can only sample a small portion of non-crucial edges which means $q(S \cap N)$ can be arbitrarily smaller than $q(N)$. Instead, we take a different approach for non-crucial edges.

\argparagraph{Non-crucial edges lemma}{Formally as Lemma~\ref{lem:non-cruciallemma}} One can construct a fractional matching $x$ over the realized non-crucial edges of $S$ (i.e., over the edges in $E_p \cap S \cap N$) whose size is at least $(1-\epsilon)q(N)$.  Moreover, for any vertex $v$, $x_v$ is no more than $\max\{q^N_v, \epsilon\}$ where we call $q^N_v := \sum_{e \ni v : e\in N} q_e$ the {\em non-crucial budget} of each vertex.

\vspace{0.2cm}

The precise proof of the non-crucial edges lemma is out of the scope of this section. However, it relies critically on the fact that $q_e$ of non-crucial edges is small. For example, if we use the same technique to construct a fractional matching for the crucial edges, we only end up with a fractional matching of size $\approx 0.4 q(C)$.

The combination of the two lemmas above immediately implies a $0.5 - \epsilon$ approximation. For this, one can easily show that $q(C) + q(N) = \opt$, and thus, either $q(C) \geq 0.5\opt$ or $q(N) \geq 0.5\opt$. For the former case, we can use the crucial edges lemma to argue that we get an almost $0.5$ approximation and for the latter we can use the non-crucial edges lemma. However, as mentioned before, our goal is to provide a much better approximation guarantee than $0.5-\epsilon$. Therefore, we have to show that the realized portions of the crucial and non-crucial edges can be augmented to construct a much larger matching. To do this, we have to  devise more involved procedures that construct large fractional matchings over the realized edges of $S$ by combining both crucial and non-crucial edges. Note that these procedures are merely analytical tools and our algorithm is still \algb{}.

For unweighted graphs, the procedure that we use --- formalized as Procedure~\ref{proc:crucial} --- is roughly as follows: We first use the non-crucial edges lemma to construct a fractional matching $x$ of size $(1-\epsilon)q(N)$ on the non-crucial edges without ``looking" at the realization of crucial edges. Independently, we reveal realized crucial edges, and pick a large realized matching $\mu^C$ among them.\footnote{For technical details,  matching $\mu^C$ is not simply the largest realized matching of crucial edges and has to be drawn according to a specific distribution. See Procedure~\ref{proc:crucial} for more details.} Then in our fractional matching $x$, we allocate the maximum possible fractional matching value to the edges in $\mu^C$ while ensuring that $x$ remains a valid fractional matching.

In Theorem~\ref{thm:nonadaptiveunweighted}, we give an analysis that shows Procedure~\ref{proc:crucial}  in expectation constructs a fractional matching of size $(1-\epsilon)(4\sqrt{2}-5)\opt$. This implies that \algb{} achieves an (almost) $(4\sqrt{2}-5)\approx 0.6568$ approximation. We note that in the analysis, the second property of non-crucial edges lemma, where we show the non-crucial budget of each vertex is not violated by the constructed fractional matching plays an important role.

While we have no upper bound on the best provable approximation factor for \algb{}, we show that at least for Procedure~\ref{proc:crucial}, our analysis is tight. That is, we give an example in Lemma~\ref{lem:unweightedtight} for which the fractional matching constructed by Procedure~\ref{proc:crucial} has size no more than $(4\sqrt{2}-5+o(1))\opt$.

\smparagraph{Generalization to weighted graphs.} In generalizing our results to weighted graphs, we follow the same approach in partitioning the edges into crucial and non-crucial subsets. In fact, both the crucial and non-crucial edges lemmas can be adapted seamlessly to the weighted graphs leading to a simple (almost) half approximation as described above. However, we show that a large class of procedures (including Procedure~\ref{proc:crucial}) achieve no more than a $0.5$ approximation for weighted graphs. The authors find this strikingly surprising which further highlights the true challenge in beating half approximation for weighted graphs. As a result, the procedure that we use to bypass half approximation for weighted graphs (formalized as Procedure~\ref{proc:crucialweighted}) is much more intricate and achieves an approximation factor of only $0.501$ (see Theorem~\ref{thm:nonadaptiveweighted}).

\section{The Algorithm}\label{sec:nonadaptive}

In this section, we introduce a non-adaptive algorithm formalized as Algorithm~\ref{alg:nonadaptive} as well as a number of analytical tools that we use in analyzing it for weighted and unweighted graphs. We note that for the sake of brevity, we did not attempt to optimize the constant factors in the description of Algorithm~\ref{alg:nonadaptive}.

\begin{algorithm}
  \caption{A non-adaptive algorithm for the weighted stochastic matching problem.}
  \label{alg:nonadaptive}
  \begin{algorithmic}[1]
  	\Statex \textbf{Input:} Input graph $G=(V, E)$, edge weights $w: E \to \mathbb{R}_+$ and realization probability $p \in [0, 1]$.
  	\Statex \textbf{Parameter:} $R = \frac{2000 \log(1/\epsilon)\log(1/\epsilon p)}{\epsilon^4 p}$.
  	\State $S \gets \emptyset$
	\For{$r = 1, \ldots, R$}
		\State Construct a realization $\mathcal{G}_r=(V, \mathcal{E}_r)$ of $G$, where any edge $e \in E$ appears in $\mathcal{E}_r$ independently with probability $p$.
		\State Add the edges in maximum weighted matching $\matching{\mathcal{E}_r}$ of $\mathcal{G}_r$ to to $S$.
	\EndFor
	\State Query the edges in $S$ and report the maximum weighted matching of it.
  \end{algorithmic}
\end{algorithm}

The main challenge in analyzing Algorithm~\ref{alg:nonadaptive} comes from the fact that the realizations $\mathcal{G}_1, \ldots, \mathcal{G}_R$ that are picked may be very different from the actual realization $G_p$ of $G$ on which the algorithm has to perform well. Take, for instance, the maximum matching $M_1$ of $\mathcal{G}_1$ that we add to $S$ during the first iteration of Algorithm~\ref{alg:nonadaptive}. Since the realization $\mathcal{G}_1$ is drawn from the same distribution that the actual realization $G_p$ is drawn from, one can argue that $M_1$ is as large as $M(E_p)$ in expectation. However, the problem is that only $p$ fraction of the edges in $M_1$ are expected to appear in $E_p$. This means that the realized matching $M(M_1 \cap E_p)$ found by round 1 guarantees only an approximation factor of $p$ which can be arbitrarily small. To achieve our desired approximation factor, we need to argue that the realized edges of $M_1, \ldots, M_R$ can be combined with each other to construct a heavy matching. To show this, we introduce a procedure that constructs a large fractional matching over the realized edges of $S$ and use this to argue that there must exist a heavy realized matching among the edges in $S$.

For simplicity of the analysis, we assume that for any realization $\mathcal{G}=(V, \mathcal{E})$ of $G$, the maximum weighted matching denoted by $\matching{\mathcal{E}}$ is unique. This can be guaranteed by either using a deterministic algorithm for finding the matching $\matching{\mathcal{E}}$ or initially perturbing the edge weights by sufficiently small factors so that the maximum weighted matching becomes unique. Having this, we start with the following definition.

\begin{definition}
For any edge $e$, we denote by $q_e := \Pr_{E_p}[e \in \matching{E_p}]$ the probability with which $e$ appears in the (unique) maximum weighted matching of realization $E_p$. We refer to $q_e$ as the {\em matching probability} of edge $e$. Moreover, for any edge subset $F \subseteq E$, we denote by $q(F) := \sum_{e \in F} q_e$ the sum of matching probabilities of the edges in $F$.

We further use $\qw_e$ to denote $q_e \cdot w_e$ and use $\qw(F)$ to denote $\sum_{e \in F}\qw_e$. We call $\qw_e$ (resp. $\qw(F)$) the {\em expected matching weight} of $e$ (resp. $F$).
\end{definition}

Now, based on their matching probabilities, we partition the edges into two sets of {\em crucial} and {\em non-crucial} edges.

\begin{definition}[Crucial and non-crucial edges]
	For threshold $\tau = \frac{\epsilon^3 p}{20 \log(1/\epsilon)}$, we call any edge with $q_e < \tau$ a {\em non-crucial} edge and any edge with $q_e \geq \tau$ a {\em crucial} edge. We denote by $N$ the set of all non-crucial edges in $E$ and denote by $C$ the set of all crucial edges in $E$.
\end{definition}

We start with a couple of simple observations that will help both in gaining more insights on the definitions above and will be useful in our proofs later.

\begin{observation}\label{obs:optisqwhqwl}
	$\opt = \qw(N) + \qw(C).$
\end{observation}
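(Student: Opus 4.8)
The plan is to establish Observation~\ref{obs:optisqwhqwl} by a direct linearity-of-expectation computation, using the uniqueness of the maximum weighted matching in each realization together with the definition of $\qw_e$.

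First I would recall that $\opt = \expmatching{E} = \E_{E_p}[\matching{E_p}]$, where we interpret $\matching{E_p}$ as the weight of the maximum weighted matching of the realized graph. Since we have assumed (via a deterministic algorithm or a tie-breaking perturbation) that $\matching{E_p}$ is a unique set of edges for each realization $E_p$, I can write its weight as a sum over edges using indicator variables: for each realization, $\matching{E_p} = \sum_{e \in E} w_e \cdot \mathbbm{1}[e \in \matching{E_p}]$. Taking expectations over the realization and applying linearity,
\begin{flalign*}
	&& \opt = \E_{E_p}\Big[\sum_{e \in E} w_e \cdot \mathbbm{1}[e \in \matching{E_p}]\Big] = \sum_{e \in E} w_e \cdot \Pr_{E_p}[e \in \matching{E_p}] = \sum_{e \in E} w_e \, q_e = \sum_{e \in E} \qw_e = \qw(E). &&
\end{flalign*}
Then I would split the sum over $E$ according to the partition $E = N \cup C$ into non-crucial and crucial edges (which is a disjoint partition by the definition of crucial/non-crucial edges, since every edge $e$ satisfies exactly one of $q_e < \tau$ or $q_e \geq \tau$). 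This yields $\qw(E) = \qw(N) + \qw(C)$, and combining with the display above gives $\opt = \qw(N) + \qw(C)$ as claimed.

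I do not anticipate a genuine obstacle here — the statement is essentially a bookkeeping identity. The only point that requires a moment of care is the use of uniqueness of the maximum weighted matching: without it, the indicator $\mathbbm{1}[e \in \matching{E_p}]$ would not be well-defined, and the probability $q_e$ would be ambiguous. Since the paper has already committed to this assumption right before the definition of $q_e$, invoking it is legitimate. A secondary (and even more trivial) point is that $N$ and $C$ genuinely partition $E$, which is immediate from the definition since the conditions $q_e < \tau$ and $q_e \geq \tau$ are mutually exclusive and exhaustive. So the proof is short: expand $\opt$ by linearity of expectation, recognize the per-edge probability as $q_e$, identify the resulting sum as $\qw(E)$, and split over the partition.
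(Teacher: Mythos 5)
Your proof is correct and follows essentially the same route as the paper: the paper's proof simply states $\opt = \sum_{e \in E} q_e \cdot w_e$ ``by definition'' and then splits over the disjoint sets $C$ and $N$, while you spell out that first step explicitly via indicators and linearity of expectation. Nothing further is needed.
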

\begin{proof}
	By definition, we know $\opt = \sum_{e \in E} q_e \cdot w_e = \sum_{e \in E} \qw_e$. Since $E = C \cup N$ and $C \cap N = \emptyset$, we have $\opt =\sum_{e \in N} \qw_e + \sum_{e \in C} \qw_e = \qw(N) + \qw(C)$.
\end{proof}

\begin{observation}\label{obs:samplingprob}
	An edge $e \in E$ is chosen to be in set $S$ by Algorithm~\ref{alg:nonadaptive} with probability exactly $1-(1-q_e)^R$.
\end{observation}
\begin{proof}
	In each iteration of Algorithm~\ref{alg:nonadaptive} edge $e$ appears in the maximum weighted matching $\matching{\mathcal{G}_i}$ with probability exactly $q_e$. Since Algorithm~\ref{alg:nonadaptive} is composed of $R$ independent iterations (i.e., the realizations $\mathcal{G}_i$ picked at different rounds are independent of each other), the probability that edge $e$ is not picked in any of these rounds is $(1-q_e)^R$ and therefore it appears in $S$ with probability $1-(1-q_e)^R$.
\end{proof}

As demonstrated by Observation~\ref{obs:samplingprob}, the crucial edges have a higher chance of appearing in the sample $S$. In fact, each crucial edge is sampled in each iteration of Algorithm~\ref{alg:nonadaptive} with probability at least $\tau$ and the number of iterations $R$ of Algorithm~\ref{alg:nonadaptive} is much larger than $1/\tau$; thus we expect almost every crucial edge to be sampled in $S$. We formalize this intuition in the following lemma whose proof we defer to Appendix~\ref{sec:otheromitted}.

\begin{lemma}[crucial edges lemma]\label{lem:sampleallcrucial}
	Let $S$ be the sample obtained by Algorithm~\ref{alg:nonadaptive}. Then, we have $\E[\qw(S \cap C)] \geq (1-\epsilon) \qw(C).$
\end{lemma}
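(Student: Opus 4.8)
\textbf{Proof proposal for Lemma~\ref{lem:sampleallcrucial} (crucial edges lemma).}

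The plan is to bound $\E[\qw(S \cap C)]$ edge by edge, using linearity of expectation together with Observation~\ref{obs:samplingprob}. For a single crucial edge $e \in C$, we have $\E[\qw(\{e\} \cap S)] = \qw_e \cdot \Pr[e \in S] = \qw_e \big(1 - (1-q_e)^R\big)$, so summing over $e \in C$ gives
\begin{equation*}
	\E[\qw(S \cap C)] = \sum_{e \in C} \qw_e \big(1 - (1-q_e)^R\big).
\end{equation*}
Thus it suffices to show that $1 - (1-q_e)^R \geq 1 - \epsilon$, i.e. $(1-q_e)^R \leq \epsilon$, for \emph{every} crucial edge $e$. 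Since $e$ is crucial, $q_e \geq \tau = \frac{\epsilon^3 p}{20\log(1/\epsilon)}$, and since $1 - q_e \leq e^{-q_e}$ we get $(1-q_e)^R \leq e^{-q_e R} \leq e^{-\tau R}$.

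Now I would simply plug in the parameter $R = \frac{2000 \log(1/\epsilon)\log(1/\epsilon p)}{\epsilon^4 p}$ and check that $\tau R$ is large enough. We have
\begin{equation*}
	\tau R = \frac{\epsilon^3 p}{20\log(1/\epsilon)} \cdot \frac{2000 \log(1/\epsilon)\log(1/\epsilon p)}{\epsilon^4 p} = \frac{100\log(1/\epsilon p)}{\epsilon} \geq 100\log(1/\epsilon p) \geq \log(1/\epsilon),
\end{equation*}
where the last inequality is extremely loose (for any $\epsilon, p \in (0,1)$ we have $\log(1/\epsilon p) \geq \log(1/\epsilon)$, and the factor $100/\epsilon \geq 1$). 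Hence $e^{-\tau R} \leq e^{-\log(1/\epsilon)} = \epsilon$, which gives $1 - (1-q_e)^R \geq 1 - \epsilon$ for each crucial edge, and summing yields $\E[\qw(S \cap C)] \geq (1-\epsilon)\sum_{e \in C}\qw_e = (1-\epsilon)\qw(C)$, as claimed.

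There is essentially no hard step here: the only thing to be careful about is the arithmetic showing $\tau R$ is at least $\log(1/\epsilon)$ (or $\ln(1/\epsilon)$ if one prefers natural logs — one should fix the base of the logarithm in $R$ and $\tau$ consistently, but given the enormous slack the constant $2000$ provides, any reasonable convention works). One subtlety worth a sentence in the write-up: the bound $\Pr[e\in S] = 1-(1-q_e)^R$ from Observation~\ref{obs:samplingprob} already accounts for the independence of the $R$ rounds, so no concentration argument or union bound is needed — linearity of expectation does all the work, and the lemma is really just a restatement of "each crucial edge is sampled with probability at least $1-\epsilon$." If one wanted the stronger statement that $\qw(S\cap C) \geq (1-\epsilon)\qw(C)$ holds with high probability (not just in expectation), one would instead union-bound over $e \in C$ the event $\{e \notin S\}$, each of probability $\leq \epsilon' $ for a smaller $\epsilon'$, but the expectation statement as written needs none of that.
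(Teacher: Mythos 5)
Your proposal is correct and follows essentially the same route as the paper's own proof: both apply Observation~\ref{obs:samplingprob} and linearity of expectation, then show each crucial edge is sampled with probability at least $1-\epsilon$ from $q_e \geq \tau$ and $\tau R \geq \log(1/\epsilon)$ (the paper uses $(1-\tau)^{1/\tau} \leq 1/e$ where you use $1-q_e \leq e^{-q_e}$, an immaterial difference). Your remark about fixing the logarithm base is apt, since the paper's final step $(1/e)^{\log(1/\epsilon)} = \epsilon$ also implicitly takes natural logs, and the slack in the constants absorbs any convention.
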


\begin{observation}\label{obs:expmatchinggtq}
	$\expmatching{S} \geq \qw(S)$.
\end{observation}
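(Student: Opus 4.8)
The plan is to prove Observation~\ref{obs:expmatchinggtq}, namely $\expmatching{S} \geq \qw(S)$, by exhibiting a fractional matching over the realized edges of $S$ whose expected weight is exactly $\qw(S)$, and then appealing to the matching polytope machinery (or more directly to property (P2) discussed in the technical overview). The key point is that $\qw_e = q_e w_e$ is precisely the per-edge contribution to $\opt$, and $q$ restricted to any edge subset $F$ is a valid fractional matching; restricting further to $S$ keeps it valid, so we want to transfer this to the \emph{realized} subgraph of $S$.

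First I would recall the definition $\expmatching{S} = \E[\matching{S \cap E_p}]$. The natural approach is to lower bound $\matching{S \cap E_p}$ for a fixed realization $E_p$ by the weight of a cleverly chosen matching, and then take expectations. Following the proof sketch of (P2) in the overview, for a fixed realization $E_p$ consider the edge set $S \cap \matching{E_p}$ — that is, the edges of the omniscient optimum matching of $E_p$ that also lie in $S$. This is a subset of a matching, hence itself a matching, and it is contained in $S \cap E_p$, so $\matching{S \cap E_p} \geq w(S \cap \matching{E_p}) = \sum_{e \in S \cap \matching{E_p}} w_e$. Taking expectations over $E_p$ and using linearity, $\expmatching{S} \geq \sum_{e \in S} w_e \cdot \Pr_{E_p}[e \in \matching{E_p}] = \sum_{e \in S} w_e q_e = \qw(S)$.

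There is one subtlety to handle carefully: in Algorithm~\ref{alg:nonadaptive}, $S$ is itself a random set (it depends on the realizations $\mathcal{G}_1, \ldots, \mathcal{G}_R$), whereas $E_p$ is the independent ``true'' realization used to evaluate $\expmatching{\cdot}$. So the argument above should be read as conditioning on a fixed outcome of $S$ (equivalently, proving $\matching{S \cap E_p} \geq w(S \cap \matching{E_p})$ pointwise and taking expectation only over $E_p$ with $S$ fixed); since the bound $\expmatching{S} \geq \qw(S)$ holds for every fixed $S$, it holds in particular after one more layer of expectation over the randomness defining $S$. I would state the observation as holding for any fixed edge subset $S \subseteq E$, which makes this clean and also matches how it will be invoked later.

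I do not expect a genuine obstacle here — the statement is essentially a restatement of property (P2) with weights, and the only thing to be mildly careful about is the two independent sources of randomness (the sampling randomness producing $S$ versus the realization randomness $E_p$), which is dispatched by treating $S$ as fixed. One could alternatively phrase the proof via Corollary~\ref{cor:edmonds}: the vector $x$ with $x_e = q_e$ for $e \in S$ and $x_e = 0$ otherwise is a fractional matching of $G$ lying in $\mathcal{P}(G)$ with $\sum_e x_e w_e = \qw(S)$, but this gives $\matching{S} \geq \qw(S)$ for the non-stochastic quantity and does not immediately account for realizations; the pointwise argument above is the more direct route to the stochastic statement, so that is the one I would write out.
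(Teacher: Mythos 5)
Your proof is correct and follows essentially the same route as the paper: both lower bound $\matching{S \cap E_p}$ by the weight of the matching $S \cap \matching{E_p}$ and conclude by linearity of expectation, since each $e \in S$ lies in $\matching{E_p}$ with probability $q_e$. Your extra remark about treating $S$ as fixed (separating the sampling randomness from the realization randomness) is a valid clarification of a point the paper leaves implicit, but it does not change the argument.
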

\begin{proof}
	Consider $\mu = \matching{E_p} \cap S$, which is clearly a valid realized matching of $S$. It suffices to show that $\E[\text{weight of }\mu] \geq \qw(S)$. Note that any edge $e \in S$ that appears in $\matching{E_p}$ will appear in $\mu$, therefore, each edge $e \in S$ appears in $\mu$ with probability $q_e$. This means that $\E[\text{weight of }\mu] = \sum_{e\in S} q_e \cdot w_e = \qw(S)$ as desired.	
\end{proof}

The combination of Lemma~\ref{lem:sampleallcrucial} and Observation~\ref{obs:expmatchinggtq} implies that Algorithm~\ref{alg:nonadaptive} achieves an expected matching of weight at least $(1-\epsilon)\qw(C)$. This implies that if $\qw(C)$ is sufficiently close to $\opt$ (which is equvialent to $\qw(C) + \qw(N)$ by Observation~\ref{obs:optisqwhqwl}), Algorithm~\ref{alg:nonadaptive} obtains a good approximation. However, it might be the case that indeed the expected weight $\qw(C)$ of the crucial edges is very small or even $0$ with $\qw(N)$ being close to $\opt$. To handle this, we need a different argument for non-crucial edges. The challenge is that the matching probability of a non-crucial edge can be arbitrarily small, and may even depend on $n$. Consider for example the complete bipartite graph $G_{n, n}$ with all edge weights of 1 (i.e., the graph is unweighted). One can show that the expected matching of $G_{n, n}$ is as large as $n - o(1)$ with high probability (see e.g., \cite{DBLP:conf/sigecom/BlumDHPSS15}) while the matching probability of every edge\footnote{Here for the sake of this example, we assume that the algorithm to obtain the maximum matching of a realization of $G_{2n}$ is not biased towards including any specific edge.} in $G_{n, n}$ is roughly $\sfrac{1}{n}$. Therefore, since $S$ is of  constant degree, $q(S)$ will not be even a constant fraction of $n - o(1)$ and we cannot use Observation~\ref{obs:expmatchinggtq} to argue that the \expmatching{S} is large.

To alleviate the above-mentioned problem, we need to be able to get a large matching among the non-crucial edges too. This is the issue that we address next.

\paragraph{A lemma for non-crucial edges.} We describe a procedure -- formalized as Procedure~\ref{proc:non-crucial} -- to construct a heavy fractional matching on the realized portion of the non-crucial edges $S \cap E_p$ of $S$ which also enjoys some other properties of interest. For simplicity of notation, we use $S_p$ to denote $S \cap E_p$.

\begin{procedure}{Constructs a fractional matching  $x^N$ on non-crucial realized edges of $S$.
}\label{proc:non-crucial}
For any edge $e \in S_p$ initially set $\tilde x^N_e \gets 0$. Then update $\tilde x^N$ as follows:
\begin{enumerate}[label={(\arabic*)}]
	\item For any realized sampled non-crucial edge $e$ (i.e., $e \in S_p \cap N$), set $\tilde x^N_e \gets \min\{f_e/p, 2\tau/p \}$ where $f_e$ denotes the fraction of iterations of Algorithm~\ref{alg:nonadaptive} in which edge $e$ is part of the picked matching $M(\mathcal{E}_r)$.
	\item Initially set the {\em scaling-factor} $s_e$ of each edge $e$ to be $s_e = 1$. Then loop over the vertices $v \in V$ in an arbitrary order and for any $e$ incident to $v$, update $$s_e \gets \min\Big\{s_e, \max\{q^N_v, \epsilon\}/\tilde x^N_v\Big\},$$ where $q^N_v := \sum_{e: e\in N, v \in e}q_e$ denotes the {\em non-crucial-weight} of vertex $v$.
	\item Scale down the fractional matching in the following way: for any edge $e$, set $x^N_e \gets \tilde x^N_e \cdot s_e$.
\end{enumerate}
\end{procedure}

The following lemma highlights the properties of the procedure above.

\begin{lemma}[non-crucial edges lemma]\label{lem:non-cruciallemma}
The fractional matching $x^N$ obtained by Procedure~\ref{proc:non-crucial} has the following properties:
	{
		\setlength{\abovedisplayskip}{5pt}
		\setlength{\belowdisplayskip}{2pt}
		\begin{enumerate}
			\item For any $U \subseteq V$ with $|U| \le 1/\epsilon$, $x^N$ fills only $\epsilon$ fraction of its blossom inequality. That is, 		
				\begin{equation*}
					x^N(U) \leq \epsilon \lfloor |U|/2 \rfloor \qquad \forall U \subseteq V: |U| \leq 1/\epsilon.
				\end{equation*}
			\item The {\em non-crucial budgets} of the vertices are (almost) preserved. More precisely, $$x^N_v \leq \max \{q^N_v, \epsilon\} \qquad \forall v \in V.$$
			\item The expected weight of the fractional matching is sufficiently close to that of non-crucial edges, i.e., $$\E\bigg[\sum_{e \in S_p \cap N} x^N_e \cdot w_e \bigg]\geq (1-10\epsilon) \qw(N).$$
		\end{enumerate}
	}
\end{lemma}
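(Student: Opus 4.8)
\textbf{Proof proposal for Lemma~\ref{lem:non-cruciallemma}.}

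My plan is to verify the three properties in order, tracking how the three steps of Procedure~\ref{proc:non-crucial} interact. The key quantitative fact I will lean on throughout is that the procedure only ever assigns fractional value to \emph{non-crucial} realized edges, and for such an edge $e$ the initial value $\tilde x^N_e$ is capped at $2\tau/p$; so every edge in the support has tiny fractional value, on the order of $\tau/p = \Theta(\epsilon^3/\log(1/\epsilon))$. I will also want the observation that, before scaling, $\E[\tilde x^N_e]$ is close to $q_e/p$ for each non-crucial realized edge: conditioned on realization, $f_e$ concentrates around $q_e$ (it is an average over $R$ independent iterations, $R \gg 1/\tau$), the cap $2\tau/p$ rarely binds since $q_e < \tau$, and dividing by $p$ converts ``probability $e\in S_p$'' into the right normalization. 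Combined with $\Pr[e \in S_p] \approx p$ for non-crucial $e$, this gives $\E[\mathbbm{1}[e\in S_p]\cdot \tilde x^N_e] \approx q_e$, which is the heart of property~(3).

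For property~(1): since every edge in the support has $x^N_e \leq \tilde x^N_e \leq 2\tau/p$, for any vertex set $U$ with $|U|\le 1/\epsilon$ the number of edges inside $G[U]$ is at most $\binom{|U|}{2} \le \frac{1}{2\epsilon^2}$, so $x^N(U) \le \frac{1}{2\epsilon^2}\cdot\frac{2\tau}{p} = \frac{\tau}{\epsilon^2 p}$. Plugging in $\tau = \frac{\epsilon^3 p}{20\log(1/\epsilon)}$ gives $x^N(U) \le \frac{\epsilon}{20\log(1/\epsilon)} \le \epsilon$, and since $\lfloor |U|/2\rfloor \ge 1$ whenever $U$ is nonempty with the relevant sizes, this is at most $\epsilon\lfloor |U|/2\rfloor$. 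Property~(2) is essentially by construction: step~(2) repeatedly shrinks $s_e$ for edges at $v$ so that the post-step-(2) load at $v$ is at most $\max\{q^N_v,\epsilon\}$; one must check that processing vertices in sequence does not later re-inflate a load that was already capped, but since scaling factors only decrease and $x^N_v$ is a sum of $\tilde x^N_e s_e$ with all $s_e$ monotonically nonincreasing, the bound established when $v$ is processed is preserved. This needs a short induction on the vertex ordering.

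The main obstacle is property~(3): showing the scaling in steps~(2)--(3) destroys at most a $10\epsilon$ fraction of the weight. Before scaling, $\E[\sum_{e\in S_p\cap N}\tilde x^N_e w_e] \ge (1-O(\epsilon))\qw(N)$ by the concentration argument above. The danger is that at some vertex $v$ the pre-scaling load $\tilde x^N_v$ substantially exceeds $\max\{q^N_v,\epsilon\}$, forcing a large scale-down that kills weight incident to $v$. I would control this by showing $\tilde x^N_v$ concentrates around $q^N_v$: it is a sum over non-crucial edges at $v$ of (roughly) $f_e/p$, whose expectation is about $\sum_{e\ni v, e\in N} q_e/p \cdot \Pr[e\in S_p] \approx q^N_v$; and crucially, because each summand is bounded by $2\tau/p$ while the mean is $\Theta(q^N_v)$ which — when $q^N_v \ge \epsilon$ — dominates the per-term bound by a factor $\Omega(\epsilon^2/\tau\cdot p) = \Omega(\log(1/\epsilon))$, a Chernoff/Bernstein bound gives $\tilde x^N_v \le (1+\epsilon)\max\{q^N_v,\epsilon\}$ with probability $1 - \text{poly}(\epsilon)$ (the $\max$ with $\epsilon$ handles the small-budget case, where the additive slack $\epsilon$ itself dwarfs the fluctuations). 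When this event holds at both endpoints of $e$, we have $s_e \ge \frac{1}{1+\epsilon} \ge 1-\epsilon$, so the edge keeps a $(1-\epsilon)$ fraction of its value; the rare bad events, together with the initial $O(\epsilon)$ loss, a union bound over the two endpoints, and the edges where the cap $2\tau/p$ bound on $f_e/p$ bit, are absorbed into the $10\epsilon$ budget by choosing constants generously. Assembling these estimates and bookkeeping the various $\epsilon$-losses into the stated $(1-10\epsilon)$ is the delicate part.
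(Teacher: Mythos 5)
Your plan is correct and follows essentially the same route as the paper's proof: property~(1) from the per-edge cap $x^N_e \le 2\tau/p$, property~(2) directly from the construction of the scaling factors, and property~(3) by first showing $\E[\sum_{e\in S_p\cap N}\tilde x^N_e w_e]\ge(1-O(\epsilon))\qw(N)$ via a Chernoff bound controlling when the cap $2\tau$ binds, then a per-vertex concentration argument (exploiting the $2\tau/p$ bound on each summand and the $\max\{q^N_v,\epsilon\}$ slack) showing $s_e\ge 1-O(\epsilon)$ except with probability $O(\epsilon)$, and a union bound over the two endpoints. The concentration details you defer are exactly the ones carried out in the paper's appendix (Claims~\ref{claim:flargewhp}--\ref{clm:non-crucialedgeschernoff}), so no new idea is missing.
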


\paragraph{The intuition behind Procedure~\ref{proc:non-crucial}.} Observe that the fractional matching constructed by Procedure~\ref{proc:non-crucial} relies critically on $f_e$, the fraction of iterations in which edge $e$ is sampled by the algorithm. Recall that the probability with which Algorithm~\ref{alg:nonadaptive} samples an edge $e$ is precisely equal to $q_e$. Therefore it is not hard to see that $\E[f_e] = q_e$. Similar to $q$, we can see the collection of $f_e$'s on all edges as a fractional matching. In this regard, since $\E[f_e] = q_e$, we have $\E[\sum_{e \in N} f_e \cdot w_e] = \E[q_{e \in N} \cdot w_e] = \qw(N)$. Despite these similarities, note that by definition, $f$ is non-zero only on the edges sampled by Algorithm~\ref{alg:nonadaptive}. This is desirable since we want to construct a large fractional matching only on the sampled edges. However, we further want our fractional matching to be non-zero only on the {\em realized} sampled edges. To do this, the final fractional matching $x$ that we construct is roughly as follows: $x_e$ is $f_e / p$ if $e$ is realized and it is 0 otherwise. Since each edge is realized with probability $p$, we have $\E[x_e] = p \cdot (f_e/p) + (1-p) \cdot 0 = f_e$. Note, however, that we have to make sure that $x$ is a valid fractional matching. That is, $x$ should not assign a fractional matching of larger than 1 to any vertex. (Properties~1 and 2 even impose  stricter restrictions) To do this, we may have to {\em manually} scale down the value of $x$ after observing the realization. However, we need to argue that this does not hurt the total size of it by a significant factor. For this, we use the fact that $f_e$ for most non-crucial edges is very small due to its value being close to $q_e$ which is at most $\tau$ for all non-crucial edges. This, combined with the independence of edge realizations, indicates e.g., that it is very unlikely that $x$ exceeds 1 by a larger than $1+\epsilon$ factor. Note that unfortunately the same procedure does not provide a good approximation on the crucial edges. The reason is that for crucial edges, $f_e$ can be as large as $p$ and the probability that $x$ exceeds 1 will not negligible.

As for the proof of Lemma~\ref{lem:non-cruciallemma}, note that the first and the second properties are directly satisfied by Procedure~\ref{proc:non-crucial}. To see this, observe that for any edge $e$, we have $x^N_e \leq 2 \tau/p \le \epsilon^3$. This means that for any subset $U$ of the vertices, we have
\begin{align*}
x^N(U) &\leq \epsilon^3 \cdot \binom{|U|}{2} = \epsilon^2 \cdot \dfrac{|U| \cdot \big(|U|-1\big)}{2},
\end{align*}
which implies for any $U$ with $|U| \leq 1/\epsilon$, that
\begin{align*}
x^N(U) &\leq \epsilon^3 \cdot \dfrac{|U|\cdot\big(|U|-1\big)}{2} \leq \epsilon^2 \cdot \dfrac{|U|-1}{2} \leq \epsilon^2 \big\lfloor |U|/2 \big\rfloor \leq \epsilon \big\lfloor |U|/2 \big\rfloor,
\end{align*}
completing the proof of property 1. Property 2 is also simple to prove. In fact, steps 2 and 3 of Procedure~\ref{proc:non-crucial} are solely written to satisfy this property. To see this, take a vertex $v$, if $\tilde x^N_v \leq \max\{q^N_v, \epsilon\}$ the non-crucial budget of $v$ is preserved since the scaling-factors are no more than 1. Otherwise, by the end of step 2 we ensure that for any edge incident to $v$ we have $s_e \leq \max\{q^N_v, \epsilon\}/\tilde x^N_v$. Thus, once completing step 3, we have
\begin{equation*}
	x^N_v = \tilde x^N_v \cdot s_e \leq \tilde x^N_v \cdot \max\{q^N_v, \epsilon\}/\tilde x^N_v = \max\{q^N_v, \epsilon\},
\end{equation*}
which is the desired bound for property 2. It only remains to prove that the fractional matching assigned to the realized sampled non-crucial edges is large as required by property 3. The proof of this part is rather technical and to prevent interruptions to the flow of the paper, we defer it to Appendix~\ref{sec:missingproofs}.

%
%

\paragraph{Implications.} By coupling Lemma~\ref{lem:sampleallcrucial} and Lemma~\ref{lem:non-cruciallemma} we immediately get an analysis that ensures Algorithm~\ref{alg:nonadaptive} obtains an (almost) $1/2$ approximation. To see this, recall by Observation~\ref{obs:optisqwhqwl} that $\opt = \qw(C) + \qw(N)$, thus, either $\qw(C) \geq \opt/2$ or $\qw(N) \geq \opt/2$. If $\qw(C) \geq \opt/2$, then Lemma~\ref{lem:sampleallcrucial} implies that the expected matching weight of our sample is at least $(1-\epsilon)\opt/2$. On the other hand, if $\qw(N) \geq \opt/2$, the fractional matching obtained by Lemma~\ref{lem:non-cruciallemma} which also satisfies blossom inequalities, implies that an integral matching of size at least $(1-10\epsilon)\opt/2$ must exist in the realization.

\begin{corollary}\label{cor:halfapprox}
	For any desirably small $\epsilon$, Algorithm~\ref{alg:nonadaptive} provides a $(\sfrac{1}{2}-\epsilon)$ approximation for weighted graphs by querying $\Ot{1/\epsilon^4 p}$ edges per vertex.
\end{corollary}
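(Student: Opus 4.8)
The plan is to feed the two canonical lemmas established above into a one-line case analysis. By Observation~\ref{obs:optisqwhqwl} we have $\opt = \qw(C) + \qw(N)$, so at least one of $\qw(C) \ge \opt/2$ or $\qw(N) \ge \opt/2$ must hold; I would show that in each of these two cases Algorithm~\ref{alg:nonadaptive} produces a sample $S$ with $\E[\expmatching{S}] \ge (\tfrac12 - O(\epsilon))\opt$, and then rescale $\epsilon$ by a constant to turn the $O(\epsilon)$ slack into the clean $(\tfrac12 - \epsilon)$ bound (this only changes the parameter $R$ by a constant factor, which is absorbed in the $\widetilde O$).

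In the first case, $\qw(C) \ge \opt/2$, I would simply combine Lemma~\ref{lem:sampleallcrucial} with Observation~\ref{obs:expmatchinggtq}: since $\expmatching{S} \ge \qw(S) \ge \qw(S \cap C)$ for every sample $S$, taking expectations over the randomness of the algorithm gives $\E[\expmatching{S}] \ge \E[\qw(S \cap C)] \ge (1-\epsilon)\qw(C) \ge (1-\epsilon)\opt/2$. In the second case, $\qw(N) \ge \opt/2$, I would invoke Procedure~\ref{proc:non-crucial} and Lemma~\ref{lem:non-cruciallemma}: conditioned on any realization and any run of the algorithm it produces a fractional matching $x^N$ supported on the realized sampled non-crucial edges with $x^N(U) \le \epsilon\lfloor |U|/2\rfloor$ for all $U$ with $|U| \le 1/\epsilon$ (property~1) and $\E[\sum_e x^N_e w_e] \ge (1-10\epsilon)\qw(N)$ (property~3). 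Property~1 says exactly that $x^N$ — indeed with enormous slack — satisfies the relaxed blossom inequalities required by Lemma~\ref{lem:folklore}, so that lemma guarantees an integral realized matching inside $S$ of weight at least $(1-\epsilon)\sum_e x^N_e w_e$; taking expectations yields $\E[\expmatching{S}] \ge (1-\epsilon)(1-10\epsilon)\qw(N) \ge (\tfrac12 - O(\epsilon))\opt$.

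For the query bound, each of the $R$ iterations adds a single matching $\matching{\mathcal{E}_r}$ to $S$, contributing at most one edge per vertex, so the maximum degree of $S$ is at most $R = \tfrac{2000\log(1/\epsilon)\log(1/\epsilon p)}{\epsilon^4 p} = \Ot{1/\epsilon^4 p}$. The only genuinely substantive ingredients are the two lemmas themselves; given them, the one point that needs care is that the rounding step is applied correctly — but this is immediate because the bound of Lemma~\ref{lem:folklore} holds pointwise for every realization, so linearity of expectation transfers the per-realization integral guarantee to $\E[\expmatching{S}]$, and property~1 of Lemma~\ref{lem:non-cruciallemma} makes $x^N$ trivially admissible for it. (As with all our results, the statement is phrased in expectation; the footnote in Section~\ref{sec:preliminaries} explains how to boost the success probability to $1-o(1)$.)
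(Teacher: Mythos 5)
Your proposal is correct and follows essentially the same route as the paper: the case split $\qw(C)\ge\opt/2$ versus $\qw(N)\ge\opt/2$ via Observation~\ref{obs:optisqwhqwl}, handled respectively by Lemma~\ref{lem:sampleallcrucial} with Observation~\ref{obs:expmatchinggtq} and by Lemma~\ref{lem:non-cruciallemma} rounded through Lemma~\ref{lem:folklore}, plus the observation that each of the $R$ iterations adds at most one edge per vertex. Your write-up is in fact slightly more explicit than the paper's (spelling out the pointwise rounding, validity of $x^N$ for Lemma~\ref{lem:folklore}, and the $\epsilon$-rescaling), but there is no substantive difference.
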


Note that Corollary~\ref{cor:halfapprox} already improves the number of per-vertex queries of known results for weighted graphs due to \cite{DBLP:conf/sigecom/BehnezhadR18, DBLP:conf/soda/YamaguchiM18}. Our goal, however, is to provide a much better guarantee on the approximation factor. Suppose for example, that $\qw(N) = \qw(C) = \opt/2$. In this case, to achieve any approximation factor better than $1/2$, we need to argue that the crucial edges and the non-crucial edges can augment each other to obtain a matching that is much heavier than what they achieve individually. This is the issue that we address in the next two sections.

\section{Beyond Half Approximation -- Unweighted Graphs}
In this section, we devise a process that constructs a large fractional matching on the realized graph by assigning values to both crucial and non-crucial edges. For non-crucial edges, we follow Procedure~\ref{proc:non-crucial} in obtaining the fractional matching. For crucial edges, however, we take a different approach in constructing the fractional matching. Before describing the actual procedure, we emphasize on the following property of Procedure~\ref{proc:non-crucial} which is necessary for augmenting it with crucial edges.

\begin{observation}
	Procedure~\ref{proc:non-crucial} does not look at how the crucial edges are realized.
\end{observation}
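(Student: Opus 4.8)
The plan is to prove the observation by a direct inspection of Procedure~\ref{proc:non-crucial}, since the claim is structural rather than quantitative: it merely asserts that no line of the procedure consults the realization status of a crucial edge. First I would walk through the three steps and record, for each, the exact list of quantities it reads. Step~(1) touches only the set $S_p \cap N$ of realized sampled non-crucial edges, the numbers $f_e$ (the fraction of iterations of Algorithm~\ref{alg:nonadaptive} in which the non-crucial edge $e$ was picked, which is fixed the instant the algorithm finishes drawing its internal realizations $\mathcal{G}_1,\ldots,\mathcal{G}_R$ and is unrelated to the true realization $G_p$), and the constants $p$ and $\tau$. Step~(2) reads only the vector $\tilde x^N$ produced in Step~(1) together with the structural quantities $q^N_v := \sum_{e : e \in N,\, v \in e} q_e$, which depend solely on the graph and on the matching probabilities of non-crucial edges. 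Step~(3) is a pure rescaling $x^N_e \gets \tilde x^N_e \cdot s_e$ of quantities already computed. Hence none of the three steps ever tests whether a crucial edge is realized.

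Second, I would phrase the conclusion in the form needed by the later augmentation argument (Procedure~\ref{proc:crucial}): the sample $S$ and all the $f_e$ values are functions of the graph together with the internal realizations $\mathcal{G}_1,\ldots,\mathcal{G}_R$ only, and given these, $x^N$ becomes a deterministic function of $S_p \cap N \subseteq E_p \cap N$, i.e.\ of which non-crucial edges are realized. Since the edge realizations are mutually independent, $E_p \cap N$ is independent of $E_p \cap C$, so conditioned on the execution of Algorithm~\ref{alg:nonadaptive} the fractional matching $x^N$ is independent of the realization of the crucial edges. This is exactly what ``does not look at how the crucial edges are realized'' means, and it is what will allow Procedure~\ref{proc:crucial} to reveal the crucial realizations afterwards and treat them as fresh randomness.

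I do not anticipate a genuine obstacle here. The only point requiring care is the bookkeeping separating the algorithm's internally sampled realizations $\mathcal{G}_r$ from the true realization $G_p$, and remembering that $S_p = S \cap E_p$ so that the word ``realized'' in Step~(1) refers to $E_p$ rather than to any $\mathcal{E}_r$. Once this distinction is made explicit, the proof is complete by inspection.
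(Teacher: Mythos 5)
Your proof is correct and follows essentially the same route as the paper, which states this observation without proof precisely because it holds by direct inspection of Procedure~\ref{proc:non-crucial}: every quantity the procedure reads ($S_p \cap N$, the $f_e$'s from the internal realizations $\mathcal{G}_1,\ldots,\mathcal{G}_R$, the $q^N_v$'s, and the constants $p,\tau,\epsilon$) is determined without reference to $E_p \cap C$. Your added remark that independence of edge realizations makes $x^N$ independent of the crucial-edge realization, conditioned on the algorithm's execution, is exactly the way the observation is exploited later when combining with Procedure~\ref{proc:crucial}, so nothing is missing.
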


Intuitively, the observation above tells us that the large fractional matching that we obtain on realized non-crucial edges does not adversarially affect the realization of crucial edges since Procedure~\ref{proc:non-crucial} is essentially unaware of the realization of crucial edges. As such, if we are able to construct a large realized fractional matching on the crucial edges, that also (1) does not violate the crucial budget of the vertices, or the blossom inequalities, and that (2) does not ``look" at the realization of the non-crucial edges, we can plug the two fractional matchings together to obtain a valid fractional matching that combines both non-crucial and crucial edges. This is, unfortunately, not possible on the crucial edges and the main obstacle is preserving the per-vertex budgets.

To illustrate the above-mentioned problem, consider a graph with $2n$ vertices and $n$ edges where each vertex is connected to exactly one edge, i.e., the graph is a matching of size $n$. Any of these edges that is realized will be part of the realized matching, thus, for any edge $e$ in this graph we have $q_e = p$; which means they are all crucial edges and we have $\qw(C) = pn$. Note that the crucial budget $q^C_v$ of each of the vertices is $p$. Therefore, if we want to preserve these crucial budgets on the realized crucial edges, the fractional value that we assign to each realized edge would be at most $p$ (instead of 1); implying that the expected fractional matching that we get would have a total weight of $p^2n$ in expectation which is only a $p$ fraction of $\qw(C)$.

Recall that preserving the crucial/non-crucial per-vertex budgets was to ensure that once we combine the crucial and non-crucial fractional matchings, the total fractional matching connected to each vertex does not exceed 1. To achieve this, a slightly weaker constraint is also sufficient. Consider a vertex $v$ with non-crucial budget $q^N_v$ and crucial budget $q^C_v$. If $q^N_v + q^C_v$ (i.e., $q_v$) is much smaller than 1, we can allow the crucial fractional matching to assign a value of (roughly) up to $1-q^N_v$ to the edges connected to $v$. This, for instance, resolves the issue of the example in the previous paragraph. Thus, it only remains to argue that one can find a large such fractional matching on realized crucial edges. We formalize the procedure for doing this as Procedure~\ref{proc:crucial}.

\begin{procedure}{Constructing a fractional matching $x^C$ for unweighted graphs on the realized crucial edges of $S$.}\label{proc:crucial}
\begin{procinput}The realized portion $R^C := S_p \cap C$ of the sampled crucial edges.	
\end{procinput}

For any matching $\mu \in S_p \cap C$ define the {\em appearance-probability} $q(\mu | R^C)$ of $\mu$ to be the probability with which $\mu$ is the portion of $S_p \cap C$ that appears in the omniscient optimum, given the realization $R^C$ of the crucial edges. Formally,
\begin{equation*}
	q(\mu|R^C) = \Pr\Big[\mu = \big(\matching{E_p} \cap S_p \cap C\big) \Big| E_p \cap C =  R^C\Big].	
\end{equation*}
Among all matchings in $S_p \cap C$, we draw one according to the appearance-probabilities. Let us denote this matching by $\mu^C$. For any edge $e=(u, v) \in \mu^C$, set $$x^C_e \gets (1-\epsilon) \min \big\{ 1 - q^N_v, 1-q^N_u  \big\},$$
and for any other edge $e \in S_p \cap C$ we set $x^C_e \gets 0$.
\end{procedure}

We first show that by combining Procedures~\ref{proc:non-crucial} and \ref{proc:crucial} we can obtain a $\approx 0.6568$ approximation for unweighted graphs. Define fractional matching $x$ as follows
\begin{equation}
	x_e := x^N_e \qquad \forall e \in N, \qquad\qquad x_e := x^C_e \qquad \forall e \in C.
\end{equation}

\begin{claim}
	$x$ is a valid fractional matching that satisfies blossom inequalities of size up to $1/\epsilon$.
\end{claim}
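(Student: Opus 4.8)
The plan is to verify two things: (i) $x$ is a valid fractional matching, i.e., $x_v := \sum_{e \ni v} x_e \leq 1$ for every vertex $v$ (nonnegativity is immediate since both $x^N$ and $x^C$ are nonnegative by construction); and (ii) $x$ satisfies every blossom inequality $x(U) \leq \lfloor |U|/2 \rfloor$ for $|U| \leq 1/\epsilon$. For the second part I would leverage the two pieces separately: Procedure~\ref{proc:non-crucial} (via property~1 of Lemma~\ref{lem:non-cruciallemma}) already guarantees $x^N(U) \leq \epsilon \lfloor |U|/2 \rfloor$, and for the crucial part I expect to show $x^C(U) \leq (1-\epsilon)\lfloor |U|/2 \rfloor$ because $x^C$ is supported on the single matching $\mu^C$ with each value at most $1-\epsilon \leq 1$; a matching restricted to a vertex set $U$ contains at most $\lfloor |U|/2 \rfloor$ edges, so summing the per-edge bound gives the claim. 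Adding the two bounds yields $x(U) = x^N(U) + x^C(U) \leq \epsilon\lfloor|U|/2\rfloor + (1-\epsilon)\lfloor|U|/2\rfloor = \lfloor|U|/2\rfloor$, so the blossom inequalities hold.

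For the per-vertex constraint, fix $v$ and split $x_v = \sum_{e \ni v, e \in N} x^N_e + \sum_{e \ni v, e \in C} x^C_e = x^N_v + x^C_v$. By property~2 of Lemma~\ref{lem:non-cruciallemma}, $x^N_v \leq \max\{q^N_v, \epsilon\}$. For the crucial term, $v$ is incident to at most one edge of the matching $\mu^C$; if it is incident to none then $x^C_v = 0$, and if it is incident to some $e = (u,v) \in \mu^C$ then $x^C_v = x^C_e = (1-\epsilon)\min\{1 - q^N_v, 1 - q^N_u\} \leq (1-\epsilon)(1 - q^N_v) \leq 1 - q^N_v$. I would then combine: if $q^N_v \geq \epsilon$, then $x_v \leq q^N_v + (1 - q^N_v) = 1$; if $q^N_v < \epsilon$, then $x_v \leq \epsilon + (1-\epsilon)(1 - q^N_v) \leq \epsilon + (1-\epsilon) = 1$. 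Either way $x_v \leq 1$, which is exactly what is needed.

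The main obstacle — really the only subtle point — is making sure the two fractional matchings can be "plugged together" without interference, i.e., that the bound on $x^C_v$ in terms of $1 - q^N_v$ genuinely leaves enough room after $x^N_v$ is accounted for. This is where the design of Procedure~\ref{proc:crucial} (capping $x^C_e$ at $(1-\epsilon)\min\{1-q^N_v, 1-q^N_u\}$ rather than at $1$) and property~2 of the non-crucial lemma (bounding $x^N_v$ by $\max\{q^N_v,\epsilon\}$ rather than by $1$) interlock precisely; the $\max\{\cdot,\epsilon\}$ slack in property~2 is harmless because the crucial side only uses a $(1-\epsilon)$ fraction of its budget, so the two $\epsilon$-losses never stack badly. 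I do not expect to need anything about the realizations or the distribution from which $\mu^C$ is drawn for this claim — it is a purely structural/deterministic statement about the vectors $x^N$ and $x^C$ once they are produced, so the proof is a short case analysis plus the two cited properties of Lemma~\ref{lem:non-cruciallemma}.
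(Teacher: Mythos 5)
Your proposal is correct and follows essentially the same route as the paper: the blossom inequalities are handled exactly as in the paper's proof, by adding the bound $x^N(U)\leq \epsilon\lfloor |U|/2\rfloor$ from property~1 of Lemma~\ref{lem:non-cruciallemma} to the bound $x^C(U)\leq(1-\epsilon)\lfloor |U|/2\rfloor$ coming from $\mu^C$ being an integral matching with per-edge value at most $1-\epsilon$. Your explicit case analysis showing $x_v=x^N_v+x^C_v\leq 1$ (using property~2 and the cap $(1-\epsilon)\min\{1-q^N_u,1-q^N_v\}$) is a welcome addition that the paper's written proof leaves implicit, and it is carried out correctly.
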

\begin{proof}
	Fix any arbitrary subset $U \subseteq V$ of size at most $1/\epsilon$. Lemma~\ref{lem:non-cruciallemma} guarantees that the fractional matching on non-crucial edges of $U$ has size at most $\epsilon \lfloor\frac{|U|-1}{2} \rfloor$. On the other hand, since $\mu^C$ is an integral matching, it has at most $\lfloor \frac{|U|-1}{2} \rfloor$ edges in $U$. Since the fractional matching that we assign each edge of $\mu^C$ is at most $1-\epsilon$, overall the total size of the fractional matching assigned to the edges in $U$ cannot be more than $\epsilon \lfloor\frac{|U|-1}{2} \rfloor + (1-\epsilon) \lfloor \frac{|U|-1}{2} \rfloor = \lfloor\frac{|U|-1}{2} \rfloor$.
\end{proof}

\begin{theorem}\label{thm:nonadaptiveunweighted}
	If $G$ is unweighted, the constructed fractional matching $x$ of Procedure~\ref{proc:crucial} has size $\E\big[\sum_e x_e\big] \geq (1-2\epsilon)(4\sqrt{2}-5)\opt$. Therefore, Algorithm~\ref{alg:nonadaptive}, in expectation, achieves an approximation factor of at least $(1-2\epsilon)(4\sqrt{2}-5)$.
\end{theorem}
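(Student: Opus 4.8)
The plan is to lower-bound $\E[\sum_e x_e] = \E[\sum_{e \in N} x^N_e] + \E[\sum_{e \in C} x^C_e]$ by analyzing the two contributions and then optimizing a trade-off. The non-crucial part is handled directly by Lemma~\ref{lem:non-cruciallemma}, property~3, which gives $\E[\sum_{e \in S_p \cap N} x^N_e] \geq (1-10\epsilon)q(N)$ (recall $w_e = 1$ here, so $\qw(N) = q(N)$). The real work is the crucial part. For a fixed vertex $v$, I would track the fractional value $x^C_v$ it receives. By Procedure~\ref{proc:crucial}, $v$ gets a nonzero contribution only if it is matched in $\mu^C$, and when matched to $u$ it receives $(1-\epsilon)\min\{1-q^N_v, 1-q^N_u\} \geq (1-\epsilon)(1-q^N_v - q^N_u)$; a cleaner lower bound is $(1-\epsilon)(1 - q^N_v)$ minus a correction, but the cross term $q^N_u$ is what will make this delicate. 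The key observation is that the probability $v$ is matched in $\mu^C$ equals the probability, under a fresh realization conditioned on the realized crucial edges $R^C$, that $v$ is matched by a crucial edge in the omniscient optimum — and by the crucial edges lemma the sampled crucial edges $S \cap C$ essentially cover all of $C$, so this probability is close to $q^C_v := \sum_{e \ni v, e \in C} q_e$, the crucial budget of $v$.

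So the heart of the argument is: with the constraint that on crucial edges we may assign up to $1 - q^N_v$ per vertex, and that each vertex $v$ is ``available'' to be matched by a crucial edge with probability $\approx q^C_v$, how large a matching can Procedure~\ref{proc:crucial} guarantee? I would set up a per-vertex accounting where vertex $v$ with $q^C_v = c_v$ and $q^N_v = n_v$ contributes, in expectation, roughly $(1-\epsilon)(1-n_v) \cdot \Pr[v \text{ matched in }\mu^C]$ to $2\sum_e x^C_e$ (each edge counted from both endpoints). The quantity $\Pr[v \text{ matched in } \mu^C]$ is at least some function of $c_v$ that degrades because crucial edges compete for the budget — this is exactly the place where the $0.4 \cdot q(C)$ figure quoted in the technical overview comes from if one is naive, and where the augmentation with the slack $1 - n_v$ buys the improvement. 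I expect the analysis proceeds by: (i) showing $\E[x^C_e] \geq (1-\epsilon)(1 - q^N_u - q^N_v)\cdot q'_e$ where $q'_e$ is the ``appearance probability'' of $e$, which sums to $\approx q^C_v$ around each vertex; (ii) summing over edges to get $\E[\sum_{e \in C} x^C_e] \gtrsim (1-\epsilon)\sum_{e \in C} q_e (1 - q^N_u - q^N_v)$, i.e., roughly $q(C) - 2\sum_{e\in C} q_e (q^N_u + q^N_v)/2$; (iii) bounding the subtracted cross term using $\sum_{e \ni v, e \in C} q_e \leq q^C_v \leq 1 - q^N_v$ and the structure of $q$.

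The final step is the optimization. Writing $\alpha := q(C)/\opt$ and $\beta := q(N)/\opt = 1 - \alpha$, I would combine the two bounds into something of the form $\E[\sum_e x_e] \geq (1 - O(\epsilon))\opt \cdot g(\alpha)$ where $g$ accounts for the loss from the cross terms and from the fact that when $q^N_v$ is large the crucial slack $1 - q^N_v$ is small. The worst case should balance the regime where non-crucial budgets are spread thin (favoring crucial augmentation) against where they are concentrated, and the extremal configuration — presumably a small gadget as in Lemma~\ref{lem:unweightedtight} — pins $g(\alpha)$ at its minimum $4\sqrt{2} - 5$, with the $\sqrt{2}$ arising from solving a quadratic in the budget split (something like minimizing $\frac{(1-t)^2}{\text{something}} + t$ or balancing $2t(1-t)$ against $t^2$).

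The main obstacle I anticipate is step (i)–(ii): rigorously relating $\Pr[v \text{ matched in } \mu^C]$ and the appearance probabilities $q(\mu | R^C)$ back to the budgets $q^C_v$, while simultaneously controlling the cross term $q^N_u + q^N_v$ on each crucial edge. Conditioning on $R^C$ changes the matching distribution in a way that is not simply a product over vertices, so the bound $\E[x^C_e] \geq (1-\epsilon)(1 - q^N_u - q^N_v)q_e$ needs a careful coupling argument: one must argue that the omniscient optimum, restricted to its crucial edges, is (up to the $\epsilon$ loss from the crucial edges lemma) realizable inside $S_p \cap C$, so that drawing $\mu^C$ from the appearance-probability distribution reproduces the marginal $q_e$ on each crucial edge. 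I also expect some care is needed because $\mu^C$ is drawn from a distribution over matchings of $S_p \cap C$ that is not the uniform-max-matching distribution, precisely (as the footnote warns) to make these marginals come out right. Once the marginals are controlled, the rest is a finite one-variable optimization that yields $4\sqrt 2 - 5$.
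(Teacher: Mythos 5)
Your overall architecture matches the paper's (non-crucial part via property~3 of Lemma~\ref{lem:non-cruciallemma}, crucial part via the marginal $\Pr[e \in \mu^C] \approx q_e$ coming from the appearance-probability distribution, then an extremal optimization), but there is a genuine gap at your steps (i)--(ii): you relax the assigned value $\min\{1-q^N_u,\,1-q^N_v\}$ to $1-q^N_u-q^N_v$ and thus aim for $\E[\sum_{e\in C} x^C_e] \gtrsim q(C) - \sum_{e=(u,v)\in C} q_e\,(q^N_u+q^N_v) = q(C) - \sum_v q^N_v q^C_v$, where $q^C_v$ is the \emph{full} crucial budget of $v$. This chain cannot certify anything beyond a half approximation: since $\sum_v q^N_v = 2q(N)$ and $\sum_v q^C_v = 2q(C)$, the subtracted ratio $\sum_v q^N_v q^C_v \big/ \sum_v \tfrac{1}{2}(q^N_v+q^C_v)$ is maximized per vertex at $q^N_v = q^C_v = \tfrac12$, where it equals $\tfrac12$ (e.g.\ $q(N)=q(C)=\opt/2$ with every vertex half-crucial, half-non-crucial gives exactly $\tfrac12\opt$ subtracted), so your bound bottoms out at $\tfrac12\opt$, not $(4\sqrt2-5)\opt$. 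The paper keeps the minimum by \emph{orienting} each crucial edge toward its endpoint of larger non-crucial budget, so that only the incoming mass $q^{C-}_v$ (which sums to $q(C)$, not $2q(C)$) multiplies $q^N_v$; the quantity to bound is then $\sum_v q^N_v q^{C-}_v \big/ \sum_v (q^{C-}_v + q^N_v/2)$, and a separate mathematical lemma (Lemma~\ref{lem:mathratio}: $\sum a_ib_i / \sum(a_i+b_i/2) \le 6-4\sqrt2$ whenever $a_i,b_i\ge 0$, $a_i+b_i\le 1$) gives the $4\sqrt2-5$ factor. Your tight example check does not detect the loss because there one endpoint of every crucial edge has zero non-crucial budget, so the min and the sum-relaxation coincide.

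A second, related issue is your final step: optimizing a single aggregate parameter $\alpha = q(C)/\opt$ is not the right shape of argument. The subtracted cross term depends on how the crucial and non-crucial budgets are distributed \emph{across vertices} subject to $q^{C-}_v + q^N_v \le 1$, not just on the global split, which is exactly why the paper proves the per-vertex ratio lemma (a one-variable calculus computation for a single vertex, extended to sums by an induction/mediant argument) rather than a one-variable optimization in $\alpha$. Your identification of the marginal property of $\mu^C$ (that drawing by appearance-probabilities reproduces $\Pr[e\in\mu^C\mid e\in S]\ge q_e$, combined with $\Pr[e\in S]\ge 1-\epsilon$ for crucial $e$) is correct and is the same mechanism the paper uses; that part needs no extra coupling machinery beyond what you describe. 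The missing ingredients are the orientation trick preserving the min and the per-vertex extremal lemma.
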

\begin{proof}
	Let us denote by $\alg := \sum_e x_e$ the size of our fractional matching $x$. We know by definition that $\alg = \sum_{e \in N} x^N_e + \sum_{e \in C} x^C_e$. It can be deducted by property 3 of Lemma~\ref{lem:non-cruciallemma} that 
	\begin{equation}\label{eq:p2non-crucial}
		\E\Big[\sum_{e \in N}x^N_e\Big] \geq (1-\epsilon) \qw(N) = (1-\epsilon) q(N),
	\end{equation}
	where the latter equality is due to the assumption that the graph is unweighted. Our goal, now, is to show that $\E[\sum_{e \in C}x^C_e]$ is also large. Take a crucial edge $e=(u, v)$, we know that Algorithm~\ref{alg:nonadaptive} picks $e$ with probability at least $1-\epsilon$ since $e$ is a crucial edge. Assuming that $e$ is picked by Algorithm~\ref{alg:nonadaptive}, $e$ is part of the matching $\mu^C$ picked by Procedure~\ref{proc:crucial} with probability at least $q_e$. And if $e$ is part of $\mu^C$, the fractional matching that will be assigned to it is $(1-\epsilon)\min\{1-q^N_v, 1-q^N_u\}$. Thus, for any crucial edge $e=(u, v)$, we have
	\begin{equation*}
		\E\big[x^C_e\big] = (1-\epsilon)\cdot q_e \cdot (1-\epsilon)\min\{1-q^N_v, 1-q^N_u\} \geq (1-2\epsilon)q_e \cdot \min\{1-q^N_v, 1-q^N_u\}.
	\end{equation*}
	To get rid of the minimization above, we make the crucial edges directed towards their endpoint with the higher non-crucial budget. Formally, a crucial edge $e = (u, v)$ is directed towards its endpoint $u$ if $q^N_u > q^N_v$ and in case of a tie (i.e., if $q^N_u = q^N_v$), we break it arbitrarily. For any vertex $v$ we denote its incoming crucial edges by $N^{C-}(v)$ and use $q^{C-}_v := \sum_{u \in N^{C-}(v)} q_{(u, v)}$ to denote the total matching probabilities of the edges that are directed towards $v$. With these definitions, we have
	\begin{align}\label{eq:p2h}
		\nonumber\E\Big[\sum_{e \in C} x^C_e \Big] &= \sum_v (1-2\epsilon) (1-q^N_v)q^{C-}_v\\ 
		\nonumber &= (1-2\epsilon)\sum_v\big(q^{C-}_v - q^N_v q^{C-}_v\big) \\
		\nonumber &= (1-2\epsilon)\sum_v q^{C-}_v - (1-2\epsilon) \sum_v q^N_v q^{C-}_v \\
		&= (1-2\epsilon)q(C) - (1-2\epsilon)\sum_v q^N_v q^{C-}_v.
	\end{align}
	Combining (\ref{eq:p2non-crucial}) and (\ref{eq:p2h}) we get
	\begin{align*}
		\E[\alg] = \E\big[\sum_{e \in N} x^N_e \big] + \E[\sum_{e \in C} x^C_e \big] &\geq (1-\epsilon)q(N) + (1-2\epsilon)q(C) - (1-2\epsilon)\sum_v q^N_v q^{C-}_v,\\
		&\geq (1-2\epsilon) \bigg( q(N) + q(C) - \sum_v q^N_v q^{C-}_v\bigg).
	\end{align*}
	On the other hand, recall that $\opt = q(N) + q(C)$, thus we have
	\begin{equation}\label{eq:XBST}
		\frac{\E[\alg]}{\opt} \geq \frac{(1-2\epsilon)\Big( q(N) + q(C) - \sum_v q^N_v q^{C-}_v\Big)}{q(N)+q(C)} \geq (1-2\epsilon)\bigg( 1 - \frac{\sum_v q^N_v q^{C-}_v}{q(N)+q(C)}\bigg).
	\end{equation}
	Note that since each crucial edge is directed towards exactly one of its endpoints, we have $q(C) = \sum_v q^{C-}_v$. On the other hand, we have $\sum_v q^N_v = 2q(N)$ since the matching probability of each non-crucial edge $(u, v)$ will contribute both to $q^N_v$ and $q^N_u$. Combining these two observations, we have 
	\begin{equation}\label{eq:CTAU}
		q(N) + q(C) = \sum_v q^{C-}_v + \frac{q^N_v}{2}.
	\end{equation}
	Combining (\ref{eq:XBST}) and (\ref{eq:CTAU}) we get
	\begin{equation}\label{eq:XBGC}
		\frac{\E[\alg]}{\opt} \geq (1-2\epsilon)\bigg( 1 - \frac{\sum_v q^N_v q^{C-}_v}{\sum_v q^{C-}_v + \frac{q^N_v}{2}}\bigg).
	\end{equation}
	We use the following mathematical lemma to show the desired bound on this ratio.
	\begin{lemma}\label{lem:mathratio}
		Given any set of numbers $a_1, \ldots, a_n$ and $b_1, \ldots, b_n$ such that
		\begin{enumerate}[label=(\roman*)]
			\item $a_i \geq 0$, $b_i \geq 0$, and $a_i + b_i \leq 1$ for any $i\in[n]$, and
			\item $\sum_{i=1}^N a_i + b_i > 0$,
		\end{enumerate}
		we have $\frac{\sum_{i=1}^{n}a_ib_i}{\sum_{i=1}^{n} a_i + \frac{b_i}{2}} \leq 6-4\sqrt{2}.$
	\end{lemma}
	
	For any vertex $v$ we have $q^N_v \in (0, 1)$ and $q^{C-}_v \in (0, 1)$ and clearly $q^N_v + q^{C-}_v \leq 1$ since $q$ is a valid fractional matching and the amount of matching incident to each vertex is at most 1, therefore, condition (i) of Lemma~\ref{lem:mathratio} is satisfied. Furthermore, condition (ii) of Lemma~\ref{lem:mathratio} also holds so long as $\opt > 0$ which is always the case unless the graph is empty, thus we have
	\begin{equation*}
		\frac{\sum_v q^N_v q^{C-}_v}{\sum_v q^{C-}_v + \frac{q^N_v}{2}} \leq 6-4\sqrt{2}, \qquad \text{therefore,} \qquad  1- \frac{\sum_v q^N_v q^{C-}_v}{\sum_v q^{C-}_v + \frac{q^N_v}{2}} \geq 1-(6-4\sqrt{2}) = 4\sqrt{2}-5.
	\end{equation*}
	Replacing this in Inequality (\ref{eq:XBGC}) we get $\frac{\E[\alg]}{\opt} \geq (1-2\epsilon)(4\sqrt{2}-5)$ or equivalently the desired bound in Theorem~\ref{thm:nonadaptiveunweighted} that $\E[\alg] \geq (1-2\epsilon)(4\sqrt{2}-5)\opt$.
\end{proof}

We next show that our analysis in Theorem~\ref{thm:nonadaptiveunweighted} for the fractional matching $x$ constructed via the above-mentioned procedures is tight.

\begin{lemma}\label{lem:unweightedtight}
	There exists a bipartite unweighted graph $G$, for which the fractional matching $x$ construct via Procedures~\ref{proc:non-crucial} and \ref{proc:crucial} has an approximation factor of less than $4\sqrt{2} - 5 + o(1)$.
\end{lemma}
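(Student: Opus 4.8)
The plan is to reverse-engineer the proof of Theorem~\ref{thm:nonadaptiveunweighted}. There the size of $x$ is bounded below by $(1-2\epsilon)\,\opt\cdot\bigl(1-\tfrac{\sum_v q^N_v q^{C-}_v}{\sum_v(q^{C-}_v+q^N_v/2)}\bigr)$, and Lemma~\ref{lem:mathratio} is applied term by term: the per-vertex ratio $\tfrac{q^N_v q^{C-}_v}{q^{C-}_v+q^N_v/2}$ attains the extremal value $6-4\sqrt{2}$ exactly at $(q^{C-}_v,q^N_v)=(\sqrt{2}-1,\,2-\sqrt{2})$, where moreover $q^{C-}_v+q^N_v=1$. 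So I want a bipartite graph in which \emph{every} vertex that is the head of a crucial edge sits precisely at this point --- incoming crucial budget $\sqrt2-1$ and non-crucial budget $2-\sqrt2$ --- while every other vertex carries zero non-crucial budget and is never a crucial head, so that identity~(\ref{eq:CTAU}) and the whole chain of estimates in Theorem~\ref{thm:nonadaptiveunweighted} become equalities up to $o(1)$.

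\emph{The construction.} Take $N$ large. The ``real'' vertices are $L=\{\ell_1,\dots,\ell_N\}$ on the left and $R=\{r_1,\dots,r_N\}$ on the right, joined by a complete bipartite graph; these will be the non-crucial edges. To each real vertex attach $k=\Theta(1/p)$ private degree-$1$ ``auxiliary'' leaves on the opposite side, with $k$ chosen so $(1-p)^k$ is as close as possible to $2-\sqrt2$; these pendant edges will be the crucial edges. Perturb the edge weights infinitesimally so the maximum-weight matching is unique and so that, among matchings of equal cardinality, those using more pendant edges win (equivalently, pick such a deterministic maximum-matching routine, which the model allows). A short exchange argument shows that in this unique matching a real vertex $v$ is matched through a pendant edge \emph{iff} at least one of its $k$ pendant edges is realized: were $v$ unmatched one could add the pendant; were $v$ matched to a real neighbour one could swap to its heaviest realized pendant, strictly raising the perturbed weight at no loss of cardinality. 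Hence $q^{C-}_v=1-(1-p)^k=\sqrt2-1\pm o(1)$, every pendant edge has matching probability $\Theta(p)\ge\tau$ (so it is crucial) and is directed toward its real endpoint (whose non-crucial budget is positive). Conditioned on $v$ escaping all its pendant edges --- an event of probability $(1-p)^k=2-\sqrt2\pm o(1)$, independent across vertices --- the surviving part of $K_{N,N}$ is complete bipartite with $(2-\sqrt2\pm o(1))N$ vertices per side, so it has a near-perfect matching and matches $v$ with probability $1-o(1)$; thus $q^N_v=2-\sqrt2-o(1)$, and by symmetry every $K_{N,N}$-edge has matching probability $\Theta(1/N)<\tau$ for large $N$, i.e.\ is non-crucial, exactly as designed. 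Auxiliary vertices have $q^N=0$ and are never crucial heads.

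\emph{Finishing.} With these values, $q(N)=(1-o(1))N(2-\sqrt2)$, $q(C)=(1-o(1))\cdot 2N(\sqrt2-1)$ and $\opt=q(N)+q(C)=(1-o(1))N\sqrt2$. Non-crucial edges touch only real vertices, and $x^N_v\le\max\{q^N_v,\epsilon\}=2-\sqrt2$ there (second property of Lemma~\ref{lem:non-cruciallemma}), so $\sum_{e\in N}x^N_e=\tfrac12\sum_{v\in L\cup R}x^N_v\le N(2-\sqrt2)=q(N)$ (the third property keeps it $\ge(1-10\epsilon)q(N)$). On a crucial edge $e=(v,z)$ we have $\min\{1-q^N_v,1-q^N_z\}=\sqrt2-1+o(1)$ and $\Pr[e\in\mu^C]\le q_e$, so $\E\bigl[\sum_{e\in C}x^C_e\bigr]\le(1+o(1))(\sqrt2-1)q(C)$. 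Adding, $\E\bigl[\sum_e x_e\bigr]\le(1+o(1))\bigl(N(2-\sqrt2)+2N(\sqrt2-1)^2\bigr)=(1+o(1))N(8-5\sqrt2)$, so the approximation factor of $x$ is at most $(1+o(1))\tfrac{8-5\sqrt2}{\sqrt2}=(1+o(1))(4\sqrt2-5)<4\sqrt2-5+o(1)$; since this matches the lower bound of Theorem~\ref{thm:nonadaptiveunweighted}, the analysis there is tight on $G$.

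\emph{Main obstacle.} The delicate part is not the arithmetic but certifying that $G$ realizes the targeted matching probabilities. The crucial side is clean --- the pendant-favouring tie-break makes the exchange argument above exact --- so the real work is the non-crucial side: after an independent random deletion of a $(\sqrt2-1)$-fraction of each side of $K_{N,N}$, one must show its (perturbation-selected) maximum matching still matches a given surviving vertex with probability $1-o(1)$, which needs the maximum matching of $K_{a,b}$ to be ``spread'' over the larger side --- obtainable via generic perturbations, or by placing real vertices on the shorter side of a slightly unbalanced complete bipartite gadget --- and this same symmetry computation is what pins $q_e=\Theta(1/N)$ on the non-crucial edges. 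Once these local estimates are in place, Lemma~\ref{lem:mathratio} is tight at $(\sqrt2-1,2-\sqrt2)$ and the bound follows.
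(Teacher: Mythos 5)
Your construction is correct and is essentially the paper's own proof: the paper's instance is just the special case $k=1$, $p=\sqrt{2}-1$ (pendant crucial edges from $A$ and $A'$ plus a complete bipartite non-crucial block between $B$ and $B'$), engineered to sit at the same tight point $(q^{C-}_v,q^N_v)=(\sqrt{2}-1,\,2-\sqrt{2})$ of Lemma~\ref{lem:mathratio}, and your closing arithmetic $(8-5\sqrt{2})/\sqrt{2}=4\sqrt{2}-5$ is identical to the paper's. Two minor notes: for an arbitrary fixed constant $p$ an integer $k$ need not make $(1-p)^k=2-\sqrt{2}\pm o(1)$, but since only existence is claimed you can simply take $p=\sqrt{2}-1$, $k=1$ as the paper does; and the ``spread''/symmetric tie-breaking issue you flag is glossed over by the paper as well, and is in fact harmless here because the crucial budgets $q^{C-}_v$ are uniform over the real vertices, so the upper bound only needs the aggregate $q(N)$ rather than per-vertex values of $q^N_v$.
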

\begin{proof}
	\begin{figure}
	  \centering
	  \includegraphics[scale=0.90]{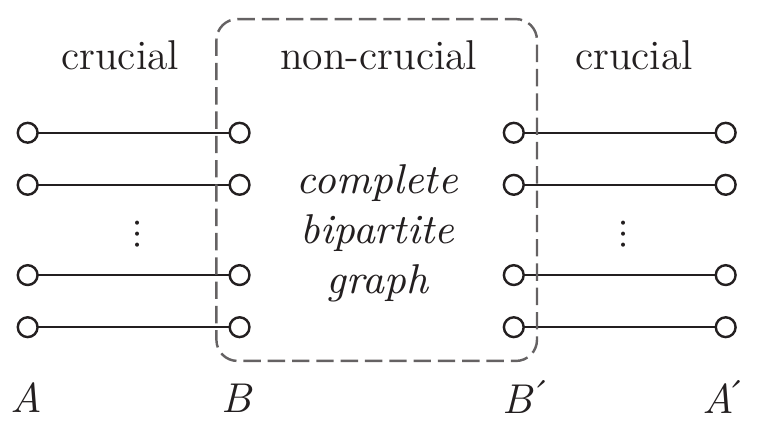}
	  \caption{An unweighted bipartite graph for which the fractional matching composed of Procedures~\ref{proc:non-crucial} and \ref{proc:crucial} does not provide a better than $4\sqrt{2}-5$ approximation.}
	  \label{fig:unweightedhardness}
	\end{figure}
	For a sufficiently large $L$, construct a graph $G'$ (refer to Figure~\ref{fig:unweightedhardness} for the illustration of the graph) with four sets $A, B, A', B'$ of $L$ vertices, i.e., the graph has $4L$ vertices in total. There is a complete bipartite graph between the vertices in $B$ and $B'$. There is also a perfect matching between $A$ and $B$ and a perfect matching between $B'$ and $A'$. Moreover, we set the realization probability $p$ of the graph to be $p=\sqrt{2}-1$. The optimal way of constructing a matching in a realization $G'_p$ of $G'$ is to first add all the realized edges between $A$ and $B$ or $A'$ and $B'$ to the matching; and then complement it via the realized edges between the unmatched vertices in $B$ and $B'$. Since there is a complete bipartite graph between the unmatched vertices in $B$ and $B'$, one can find a realized matching that is almost perfect. That is, this realized matching matches $1-o(1)$ fraction of the unmatched vertices in $B$ and $B'$. Thus, overall, we have
	\begin{align*}
		\E[\opt] &= \underbrace{p \times 2L}_{\text{matching between $A$ and $B$ or between $A'$ and $B'$}} + \underbrace{(1-o(1))(1-p)L}_{\text{matching between $B$ and $B'$}}\\
		&\geq (1+p-o(1))L\\
		&\geq (\sqrt{2}-o(1))L.
	\end{align*}
	The crucial edges of $G$ are those between $A$ and $B$ and those between $A'$ and $B'$. The rest of the edges are non-crucial. We have $\qw(C) = 2pL = (2\sqrt{2}-2)L$ and we have $\qw(N) = (2-\sqrt{2}-o(1))L$. Thus, the non-crucial budget of each vertex in $B$ or $B'$, which is $\qw(N)/L$, is equal to $(2-\sqrt{2}-o(1))$.

	The fractional matching that we construct by combining Procedures~\ref{proc:non-crucial} and \ref{proc:crucial} first obtains a fractional matching of size $(1-\epsilon)\qw(L)$ on the non-crucial edges. However, on each of the crucial edges $e=(u, v)$ that are realized, it puts a fractional matching of size $$(1-\epsilon)\min\{1-q^N_u, 1-q^N_v\} = (1-\epsilon)\Big(1-\big(2-\sqrt{2}-o(1)\big)\Big) = (1-\epsilon)(\sqrt{2}-1+o(1)).$$
	Meaning that overall, we construct a fractional matching of size only $(1-\epsilon)\Big(\sqrt{2}-1+o(1)\Big)\cdot p \cdot 2L = (1-\epsilon)(6-4\sqrt{2}+o(1))L$ on the crucial edges. Overall, the approximation factor would be
	\begin{align*}
		\frac{\E[\alg]}{\E[\opt]} &= \frac{(1-\epsilon)\Big(\overbrace{(2-\sqrt{2}-o(1))L}^{\text{Procedure~\ref{proc:non-crucial}}} + \overbrace{(6-4\sqrt{2}+o(1))L}^{\text{Procedure~\ref{proc:crucial}}} \Big)}{(\sqrt{2}-o(1))L}\\
		&\leq \frac{(1-\epsilon)\big(8-5\sqrt{2}+o(1)\big)}{\sqrt{2}-o(1)}\\
		&\leq (1-\epsilon)\big(4\sqrt{2}-5+o(1)\big).
	\end{align*}
	This completes the proof and almost matches the guarantee provided by Theorem~\ref{thm:nonadaptiveunweighted}.
\end{proof}

\section{Beyond Half Approximation -- Weighted Graphs}

We showed in the previous section that Procedure~\ref{proc:crucial} guarantees a $\approx 0.6568$-approximation for unweighted graphs. However, unfortunately, it does not provide anything better than a half approximation for weighted graphs. Recall by Corollary~\ref{cor:halfapprox} that we already achieve an almost half-approximation by combining Lemmas~\ref{lem:non-cruciallemma} and \ref{lem:sampleallcrucial}. Thus, Procedure~\ref{proc:crucial} does not have any benefits in the case of weighted graphs. In this section, we modify this procedure to bypass the half approximation barrier for weighted graphs.

\begin{figure}
  \centering
  \includegraphics[scale=0.95]{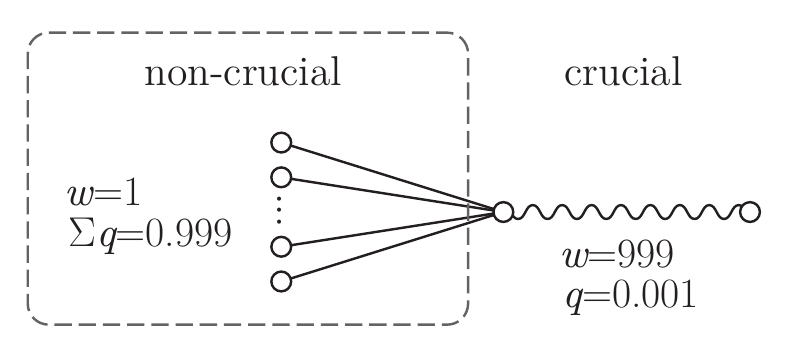}
  \caption{Example showing that Procedure~\ref{proc:crucial} does not provide a better than 0.5005 approximation on weighted graphs.}
  \label{fig:weightedapproxhard}
\end{figure}

We start the discussion of this section by an example that illustrates the main difficulty in the analysis of weighted graphs which also shows why Procedure~\ref{proc:crucial} does not provide a better than half approximation. Consider a star graph (Figure~\ref{fig:weightedapproxhard}) with one crucial edge $e$ of weight $w_e = 999$ and matching probability $q_e = 0.001$. The rest of the edges are non-crucial, each with a weight of 1 and sum of their matching probabilities is 0.999. These weights and probabilities are set in a way that makes the expected matching of both crucial and non-crucial edges equal (i.e., $\qw(C) = \qw(N) = 0.999$) while at the same time, assigning significantly different matching probabilities to them (observe that $q(C)=0.001$ while $q(N)=0.999$). The total expected matching of the graph is $\qw(C)+\qw(N) = 1.998$, however, the weight of the fractional matching obtained by Procedure~\ref{proc:crucial} is only\footnote{For clarity of exposition we hide the $1-\epsilon$ factors here.} 
\begin{equation*}
	\underbrace{0.999 \times 1}_{\text{from non-crucial edges}} + \underbrace{0.001}_{\text{\parbox{2.3cm}{probability that $e$ appears in $\mu^C$}}} \times  \underbrace{(1-0.999)}_{\text{\parbox{2.9cm}{budget remaining for $e$}}} \times 999 = 0.999999,
\end{equation*}
which provides only a $0.5005$-approximation. Using the same approach one can construct examples that show the approximation factor of Procedure~\ref{proc:crucial} is at most $0.5 + o(1)$.

\begin{remark}
	We remark that for weighted graphs, there is no procedure that allocates budgets to crucial and non-crucial edges prior to looking at the actual realizations, that has approximation factor better than $0.5 + o(1)$.
\end{remark}

To overcome the above-mentioned challenge, we devise a procedure that has dynamic budgets. That is, the procedure first looks at the realization of crucial edges, and then adjusts the budgets of non-crucial edges. Before delving into the details of the procedure, we describe how it is possible to obtain a near optimal approximation for the example of Figure~\ref{fig:weightedapproxhard}. Similar to the case of unweighted graphs, we can first use Procedure~\ref{proc:non-crucial} to construct a fractional matching on the non-crucial edges that does not violate the non-crucial budgets of the vertices. This provides a fractional matching of weight $\qw(N)$ and a half approximation. Next, we look at the realization of the crucial edges. If our crucial edge $e$ is not realized, then we report the fractional matching that we already have. However, if $e$ happens to be realized, we remove the fractional matching on the non-crucial edges and assign a fractional matching of 1 to edge $e$ which has a significantly higher weight. The expected weight of the fractional matching provided by this procedure is
\begin{equation*}
	\underbrace{0.999 \times (0.999 \cdot 1)}_{\text{if $e$ is not realized}} + \underbrace{0.001 \times 999}_{\text{if $e$ is realized}} \simeq 1.997,
\end{equation*}
which is very close to the expected matching of the original graph which is 1.998. The main intuition, here, was to {\em allow a crucial edge \underline{that is realized} to decrease the fractional matching on its incident non-crucial edges if that increases the total weight}. We formalize this approach in the following procedure and show that indeed it provides better than 0.5 approximation for weighted graphs. 

\begin{procedure}{Constructing a fractional matching $x$ for weighted graphs on the realized edges of $S$.}\label{proc:crucialweighted}
Consider the realization on sampled crucial edges and their realized portion $R^C := S_p \cap C$. Among all matchings in $S_p \cap C$, we draw one according to their appearance-probabilities based on $R^C$ (refer to Procedure~\ref{proc:crucial} for definition of appearance-probabilities). Let us denote this matching by $\mu^C$. For any edge $e=(u, v) \in \mu^C$, set $$x^C_e \gets (1-\epsilon) \argmax_{0 \le \alpha \le 1} \bigg( \frac{\min\{q^N_v, 1-\alpha \}}{q^N_v} \cdot \qw^N_v +  \frac{\min\{q^N_u, 1-\alpha \}}{q^N_u} \cdot \qw^N_u + \alpha \cdot w_e \bigg),$$
and for any other edge $e \in S_p \cap C$ we set $x^C_e \gets 0$.
\\\\
Let $x^N$ be the fractional matching of non-crucial edges constructed by the Procedure \ref{proc:non-crucial}. We define the fractional matching $x$ as follows.
\begin{align*}
	x_e := x^N_e \qquad \forall e \in N, \qquad\qquad x_e := x^C_e \qquad \forall e \in C.
\end{align*}
For any vertex $v$ with $x(v) > 1$, scale down the fractional matching on its non-crucial edges by an appropriate factor.
\end{procedure}

\begin{theorem}\label{thm:nonadaptiveweighted}
	Algorithm~\ref{alg:nonadaptive}, in expectation, provides a $\bu$ approximation for weighted graphs.
\end{theorem}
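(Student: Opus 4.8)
The plan is to show that the fractional matching $x$ produced by Procedure~\ref{proc:crucialweighted} has expected weight at least $(1-O(\epsilon))\cdot\bu\cdot\opt$ and then invoke Edmonds' theorem. First I would check that $x$ (after the final scaling step) is a valid fractional matching satisfying every blossom inequality of size at most $1/\epsilon$: the non-crucial part fills at most an $\epsilon$-fraction of each such inequality by property~1 of Lemma~\ref{lem:non-cruciallemma}, while the crucial part is a sub-matching of the integral matching $\mu^C$ scaled by at most $1-\epsilon$, so it fills at most a $(1-\epsilon)$-fraction, and together they obey $x(U)\le\lfloor|U|/2\rfloor$ (this mirrors the claim preceding Theorem~\ref{thm:nonadaptiveunweighted}). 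By Lemma~\ref{lem:folklore} it then suffices to lower-bound $\E[\sum_e x_e w_e]$, and since $\opt=\qw(N)+\qw(C)$ by Observation~\ref{obs:optisqwhqwl} I would track the approximation ratio as a function of how $\opt$ splits between crucial and non-crucial mass. Combining Lemma~\ref{lem:sampleallcrucial} with Lemma~\ref{lem:non-cruciallemma} already yields expected weight $\max\{(1-\epsilon)\qw(C),(1-10\epsilon)\qw(N)\}$ (Corollary~\ref{cor:halfapprox}), so one only needs a strict improvement in the balanced regime $\qw(C)\approx\qw(N)\approx\opt/2$.

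The engine of the improvement is the pointwise optimality of $\alpha^*_e=\argmax_{0\le\alpha\le1}g_e(\alpha)$ in Procedure~\ref{proc:crucialweighted}, where $g_e$ is the concave piecewise-linear function being maximized for a crucial edge $e=(u,v)\in\mu^C$. Three of its values matter: $g_e(0)=\qw^N_u+\qw^N_v$ (keep all non-crucial weight at both endpoints), $g_e(1)=w_e$ (wipe out the non-crucial weight at $u$ and $v$, collect $w_e$ in full), and, writing $\beta_{uv}:=\min\{1-q^N_u,1-q^N_v\}$, $g_e(\beta_{uv})=\qw^N_u+\qw^N_v+\beta_{uv}\,w_e$ (use only the ``free'' budget left at each endpoint, at zero cost); hence the value realized at $e$ exceeds each of these. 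Since Procedure~\ref{proc:non-crucial} never inspects the crucial realizations, $x^N$ retains expected weight $\ge(1-10\epsilon)\qw(N)$ even conditioned on the crucial edges, and an independence argument identical to the one in Theorem~\ref{thm:nonadaptiveunweighted} shows each crucial edge lies in $\mu^C$ with probability at least $(1-\epsilon)q_e$ (it must be sampled, probability $\ge 1-\epsilon$, then lie in the optimum matching, an independent event of probability $q_e$). After charging the non-crucial weight that survives the final scaling step to the endpoints of the $\mu^C$ edges — and absorbing the mild extra loss on a non-crucial edge whose two endpoints are both matched by $\mu^C$ — this produces a bound of the form
\[
\E\Big[\textstyle\sum_e x_e w_e\Big]\ \ge\ (1-O(\epsilon))\Big(\qw(N)+\textstyle\sum_{e=(u,v)\in C}q_e\max\big\{\beta_{uv}w_e,\ w_e-\qw^N_u-\qw^N_v\big\}\Big),
\]
which is never below $\qw(N)$ (the first term in the max is nonnegative); a symmetric use of $g_e(\alpha^*_e)\ge g_e(1)=w_e$ recovers $\E[\sum_e x_e w_e]\ge(1-O(\epsilon))\qw(C)$, so Procedure~\ref{proc:crucialweighted} is never worse than the half-approximation of Corollary~\ref{cor:halfapprox}.

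To strictly beat $1/2$ in the balanced regime I would split according to whether a constant fraction of $\qw(C)$ sits on ``light'' crucial edges (those with $w_e$ within a constant factor of the average non-crucial weights at $u,v$) or on ``heavy'' ones. In the light case the free-budget term $\beta_{uv}w_e$ already contributes a gain proportional to $\qw(C)$: the non-crucial budgets $q^N_u,q^N_v$ cannot all be close to $1$, since $q^N_v+q^{C-}_v\le1$ and the crucial mass is non-negligible — essentially the mechanism quantified by Lemma~\ref{lem:mathratio} in the unweighted case. In the heavy case, heavy crucial edges carry only a tiny amount of $q$-mass (as in the star of Figure~\ref{fig:weightedapproxhard}), so choosing $\alpha^*_e\approx1$ on them destroys a negligible amount of non-crucial weight and we collect essentially $\qw(C)+\qw(N)=\opt$; intermediate mixtures interpolate between the two. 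Optimizing the worst such mixture against the surrogate bound above yields the constant $\bu$. I expect the main obstacle to be exactly this last worst-case optimization together with the bookkeeping around the scaling step — bounding $\sum_e x_e w_e$ rigorously rather than by a slightly optimistic surrogate, given the interaction between the dynamic crucial budgets and the non-crucial edges they share, is what both forces the intricate form of Procedure~\ref{proc:crucialweighted} and keeps the provable constant down to the unenlightening $\bu$.
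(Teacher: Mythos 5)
Your overall skeleton (reduce to the balanced regime $\qw(C)\approx\qw(N)\approx\opt/2$, verify validity and the size-$1/\epsilon$ blossom constraints, then lower-bound the expected weight of the fractional matching of Procedure~\ref{proc:crucialweighted} and finish with Lemma~\ref{lem:folklore}) matches the paper, and so does the idea of splitting the crucial mass into edges that gain via the dynamic budget ($\alpha\approx 1$) versus edges that gain via the leftover free budget. The gap is in your surrogate bound: you lower-bound the per-edge gain by $q_e\max\{\beta_{uv}w_e,\;w_e-\qw^N_u-\qw^N_v\}$, i.e.\ you only evaluate $g_e$ at $\alpha=\beta_{uv}$ (keep both endpoints' non-crucial weight) and at $\alpha=1$ (wipe out both). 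This misses the one-sided option $\alpha=1-q^N_{u}$, where you destroy the non-crucial weight of only the \emph{cheaper} endpoint while respecting the other endpoint's budget — which is exactly the paper's ``semi-heavy'' mechanism — and without it the surrogate provably gets stuck at $\opt/2$ on balanced instances, so the final ``optimize the worst mixture'' step cannot output $\bu$. Concretely: let $v$ have non-crucial mass $q^N_v=\rho$ (small) but large non-crucial weight $\qw^N_v=W$ (few, very heavy non-crucial edges), and let $v$ have crucial edges of weight $w_e= W$ and mass $\sigma$ each to $k\approx(1-\rho)/\sigma$ distinct partners $u_i$, where every $u_i$ is nearly saturated by essentially weightless non-crucial edges ($q^N_{u_i}=1-\sigma$, $\qw^N_{u_i}\approx 0$). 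Then $\qw(C)\approx\qw(N)\approx W$, i.e.\ the balanced regime, yet on every crucial edge $\beta_{u_iv}\le\sigma$ and $w_e\le\qw^N_{u_i}+\qw^N_v$, so both terms of your max are (essentially) zero and the surrogate yields only $(1-O(\epsilon))\qw(N)\approx\opt/2$. Procedure~\ref{proc:crucialweighted} itself does much better here: each realized crucial edge takes $\alpha\approx 1-q^N_v=1-\rho$ at the cost of only the negligible $\qw^N_{u_i}$, and the paper's proof captures this through the semi-heavy case of its first claim.

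The same example shows why your light-case argument does not go through as stated: the claim ``$q^N_u,q^N_v$ cannot all be close to $1$ because $q^N_v+q^{C-}_v\le 1$'' controls individual vertices, but the free budget of an edge is the \emph{minimum} over its two endpoints, and a single saturated endpoint kills it even though that endpoint carries almost no crucial mass and almost no non-crucial \emph{weight}. In the weighted setting $q$-mass and weight decouple, which is precisely what the paper's machinery is built to handle: the heavy/semi-heavy classification with $\delta=0.09$, the three-type orientation of the remaining crucial edges $C^\star$ with Observation~\ref{obs:heavybound} (an edge directed to $v$ has $w_e\le 2(1+\delta)\qw^N_v$) and Observation~\ref{obs:qlimit}, and the per-vertex ratio bound of Claim~\ref{clm:singlev}, which is the weighted analogue of Lemma~\ref{lem:mathratio} and only works because the weight of the crucial edges charged to $v$ is bounded by a constant multiple of $\qw^N_v$ times the leftover mass $1-q^N_v$. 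To repair your proof you would need to add the third evaluation point of $g_e$ (the semi-heavy gain) to the surrogate and then supply an argument, along the lines of the paper's orientation plus Claim~\ref{clm:singlev}, showing that crucial edges for which all three gains are small cannot carry a constant fraction of $\qw(C)$; the two-point surrogate alone is not sufficient.
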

\begin{proof} 
	Recall that by Lemma \ref{lem:non-cruciallemma}, we know that Algorithm \ref{alg:nonadaptive} provides a fractional matching with an expected weight of at least $(1-10\epsilon) \qw(N)$. Also, it satisfies the blossom inequalities of size up to $1/\epsilon$. Therefore, by Lemma~\ref{lem:folklore}, the expected weight of the matching of this algorithm is $(1-11\epsilon) \qw(N)$. Also, by Lemma \ref{lem:sampleallcrucial}, the expected weight of the matching on only crucial edges is at least $(1-\epsilon) \qw(C)$. Since $\opt = \qw(C) + \qw(N)$, if at least one of $\qw(C)$ or $\qw(N)$ are at least $\bube \cdot \opt$, we can beat the $(\bube-11\epsilon)$ approximation factor and get $\bu$ approximation by choosing $\epsilon$ small enough. Otherwise, we have
	$$
		\bdbe \cdot \opt \le \qw(N), \qw(C) \le \bube \cdot \opt \,. 
	$$
	
	In this case, we show that the expected weight of the matching constructed by Algorithm~\ref{alg:nonadaptive} is at least $\bu \cdot \opt$. We first define two types of crucial edges and show that if the weight of these edges are greater than a specific threshold, Procedure \ref{proc:crucialweighted} produces a matching with the expected matching at least $\bu \cdot \opt$. Let $\delta=0.09$, we define these edges as follows.
	\begin{description}
		\item[Heavy edges.] We say that a crucial edge $e=(v,u) \in C$ is \textit{heavy} if $w_e \ge (1+\delta)(\qw^N_v + \qw^N_u)$. We use $H$ to denote the set of heavy edges. The weight of any heavy edge is larger than the sum of fractional matching of non-crucial edges of its both end. Therefore, in Procedure \ref{proc:crucialweighted}, a realized heavy edge reduces the fractional matching of non-crucial edges of its both ends to $0$, and we have $x_e^C = (1-\epsilon)$.
		\item[Semi-heavy edges.] We say that a non-heavy crucial edge $e=(v,u) \in (C \setminus H)$ is \textit{semi-heavy}, if for at least one of its endpoints, say w.l.o.g., vertex $v$, we have $w_e \ge 2(1+\delta) \qw^N_v$ and for its other endpoint we have $q^N_u \le (1-\delta)$ and $q^N_v \ge q^N_u$.  We use $H^\star$ to denote the set of semi-heavy edges. The weight of any semi-heavy edge is larger than the fractional matching of non-crucial edges of one of its endpoints. Therefore, it reduces the fractional matching of non-crucial edges on this endpoint. Formally, for any semi-heavy edge $e$ we have $x_e^C \ge (1-\epsilon) (1-q^N_u) \ge (1-\epsilon) \delta$.
	\end{description}
	In the following claim, we show that if a large ``portion" of critical edges are heavy or semi-heavy, we can construct a fractional matching with an expected weight of $ \bu \cdot \opt$.
	\begin{claim}
		If $\qw(H)+\qw(H^\star) \ge 0.09 \cdot \qw(C)$, then the expected weight of the matching produced by Algorithm \ref{alg:nonadaptive} is at least $ \bu \cdot \opt$.
	\end{claim}
	\begin{proof}
		Consider a heavy edge $e=(v,u) \in H$, if this edge realized, Procedure \ref{proc:crucialweighted} sets $x^C_e = (1-\epsilon)$, and removes the fractional matching of non-crucial edges of both ends. Therefore, it adds a weight of $(1-\epsilon) w_e - (\qw^N_v+\qw^N_u)$ to our fractional matching which is at least
		\begin{align}
			(1-\epsilon) w_e - (\qw^N_v+\qw^N_u) &\ge (1-\epsilon) w_e - \frac{1}{1+\delta} w_e 
			& \text{Since } e \text{ is heavy and } w_e \ge (1+\delta)(\qw^N_v+\qw^N_u). \nonumber \\
			& = (\frac{\delta}{1+\delta}-\epsilon) w_e \label{ineq:heavy} \,.
		\end{align}
		Moreover, suppose that $e'=(v',u') \in H^\star$ is a semi-heavy edge. By definition of semi-heavy edges, we know that for one of endpoints of $e'$, say $v'$, we have $w_e \ge 2(1+\delta) \qw^N_{v'}$, and for the other endpoint we have $q^N_{u'}\le (1-\delta)$ and $q^N_{v'} \ge q^N_{u'}$.  If edge $e'$ realized, it reduces the fractional matching of non-crucial edges of $v'$ to at most $q^N_{u'}$ and Procedure uses at least $(1-\epsilon) (1-q^N_{u'})$ fraction of the edge $e'$.  Therefore, the weight that it adds to the weight of the fractional matching produced by Procedure \ref{proc:crucialweighted} is at least  
			\begin{align}
			&(1-\epsilon) (1- q^N_{u'}) w_{e'} - (q^N_{v'}-q^N_{u'})\qw^N_{v'} \nonumber \\
			&\ge(1-\epsilon) (1-q^N_{u'}) w_{e'} - (1-q^N_{u'})\qw^N_{v'} \nonumber \\
			&=(1-q^N_{u'}) ((1-\epsilon)w_{e'} -\qw^N_{v'}) \nonumber \\
			&\ge (1-q^N_{u'}) ((1-\epsilon) w_{e'} - \frac{1}{2(1+\delta)} w_{e'}) 
			& \text{Since } e' \text{ is semi-heavy and } w_{e'} \ge 2(1+\delta)\qw^N_{v'} \nonumber \\
			& = (1-q^N_{u'}) \Big (\frac{1+2\delta}{2(1+\delta)}-\epsilon \Big ) w_{e'} \nonumber \\
			& \ge \delta \Big (\frac{1+2\delta}{2(1+\delta)}-\epsilon \Big ) w_{e'}
			& q^N_{u'} \le (1-\delta)  \nonumber\\
			&  \Big (\frac{\delta+2\delta^2}{2(1+\delta)}-\epsilon \Big ) w_{e'} \label{ineq:semi} \,.
		\end{align}
		It follows from inequalities (\ref{ineq:heavy}) and (\ref{ineq:semi}) that the weight of the expected matching is at least
		\begin{align*}
		&(1-10\epsilon) \qw(N) + (\frac{\delta}{1+\delta}-\epsilon) \qw(H) + (\frac{\delta+2\delta^2}{2(1+\delta)}-\epsilon) \qw(H^\star) \,.
		\end{align*}
		Since $\delta=0.09$, we have $\frac{\delta}{1+\delta} \ge 0.048$ and $\frac{\delta+2\delta^2}{2(1+\delta)} \ge 0.048$. Therefore, the expected weight of the fractional matching is at least
		\begin{align*}
		&(1-10\epsilon) \qw(N) + (0.048-\epsilon) (\qw(H)+\qw(H^\star))
		\\& \ge (1-10\epsilon) \qw(N) + (0.048-\epsilon) (0.09 \qw(C))
		\\& = (1-10\epsilon) (\opt - \qw(C)) + (0.00432 - \epsilon) \qw(C) & \qw(N)+\qw(C) = \opt.
		\\& \ge (1-10\epsilon) \opt - \qw(C) (1-0.0.00432)
		\\& \ge (1-10\epsilon) \opt - \bube \cdot \opt (1-0.00432)
		& \qw(C) \le \bube \cdot \opt
		\\&\ge (0.50106-10\epsilon) \cdot \opt \,,
		\end{align*}
		By Lemma \ref{lem:folklore}, we also lose a factor of $(1-\epsilon)$ to satisfy the blossom inequalities. Therefore, by choosing $\epsilon$ small enough, we can get  $\bu$ approximation which proves the claim.
	\end{proof}
By the previous claim, we know that if $\qw(H)+\qw(H^\star) \ge 0.09 \qw(C)$, we already get our desired $\bu$ approximation. Therefore, from now on, we assume that $\qw(H)+\qw(H^\star) < 0.09 \qw(C)$. Though, for ease of exposition, we do not explicitly mention this condition in the forthcoming statements. Define $C^\star := C \setminus (H \cup H^\star)$ to be the set of crucial edges that are not heavy or semi-heavy. We have $\qw(C^\star) \ge (1-0.09) \qw(C) = 0.91 \qw(C)$.
\begin{claim}
The expected weight of the matching returned by Algorithm \ref{alg:nonadaptive} is at least $ (1-2 \epsilon)0.551(\qw(L)+\qw(C^\star))-8 \epsilon\opt$.
\end{claim}
\begin{proof}
We partition the edges in $C^\star$ into three types, and according to these types, we make the edges directed towards one of their endpoints. Let $e=(v,u) \in C^\star$. W.l.o.g., assume that $q^N_v \ge q^N_u$. We define the following three types:
\begin{description}
\item[Type 1.] If $\qw^N_v \ge \qw^N_u$, this edge is type~$1$. In this case we direct $e$ towards $v$.
\item[Type 2.] If $\qw^N_v < \qw^N_u$ and $w_e \le 2(1+\delta) \qw^N_v$, this edge is type $2$. In this case we direct $e$ towards $v$.
\item[Type 3.] For any edge that is not of type 1 or 2, we have  $\qw^N_v < \qw^N_u$ and $w_e > 2(1+\delta) \qw^N_v$. These edges are type $3$, and we direct them towards $u$.
\end{description}
The following observation demonstrates a critical property of edge directions defined above.
\begin{observation}
\label{obs:heavybound}
Let $e=(v,u) \in C^\star$ be a crucial edge which is directed towards $v$. Then we have $w_e \le 2(1+\delta)\qw^N_v$.
\end{observation}
\begin{proof}
It suffices to show that this property holds for all three types of edges. For any type~1 edge $e=(u, v)$, we have $\qw^N_v \ge \qw^N_u$. Since $e$ is not heavy, we have
$$
w_e \le (1+\delta)(\qw^N_v + \qw^N_u) \le 2(1+\delta)\qw^N_v \,.
$$ 
For any type~2 edge $e=(u, v)$, we have our desired inequality $w_e \le 2(1+\delta) \qw^N_v$ automatically by definition. For any type~3 edge $e=(u, v)$, if $e$ is directed towards $v$, we have $\qw^N_v > \qw^N_u$. Also $e$ is not heavy, therefore
$$
w_e \le (1+\delta)(\qw^N_v + \qw^N_u) < 2(1+\delta)\qw^N_v,
$$
which completes the proof.
\end{proof} 

The following observation is also another important property of direction of edges .

\begin{observation}
\label{obs:qlimit}
Let $e=(v,u) \in C^\star$ be an edge such that $q^N_v \ge q^N_u$. If $e$ is directed towards $u$, we have
$$
q^N_v \le q^N_u + \delta.
$$
\end{observation}

\begin{proof}
Since $q^N_v \ge q^N_u$, the only case that we direct $e$ towards $u$ is when $e$ is a type $3$ edge. In this case $w_e > 2(1+\delta) \qw^N_v$. Since $e$ is not semi-heavy, we must have $q^N_u > (1-\delta)$. Therefore we get our desired bound that $q^N_v \le 1 < q^N_u+ \delta.$
\end{proof}

Let $\bar x$ be a fractional matching obtained by combining Procedures~\ref{proc:non-crucial} and \ref{proc:crucial}. More specifically, let $\bar x^N$ be the fractional matching of Procedure \ref{proc:non-crucial} on non-crucial edges and $\bar x^C$ be the fractional matching of Procedure \ref{proc:crucial} on crucial edges. That is,
\begin{align*}
	\bar x_e := \bar x^N_e \qquad \forall e \in N, \qquad\qquad \bar x_e := x^C_e \qquad \forall e \in C.
\end{align*}
We show that expected weight of fractional matching $\bar x$ is at least $0.543(\qw(L)+\qw(C^\star))$.

 For any vertex $v$ we denote its incoming crucial edges in $C^\star$ by $N^{C-}(v)$ and use $\qw^{C-}_v := \sum_{u \in N^{C-}(v)} \qw_{(u, v)}$ to denote the expected weight of the matching of the edges that are directed towards $v$. If a crucial edge $e=(v,u)$ is directed towards vertex $v$, the budget that this edge can have in Procedure \ref{proc:crucial} is $(1-\epsilon) (1- \max\{ q^N_v,q^N_u\})$. If $e$ is directed towards $v$, by Observation \ref{obs:qlimit}, this value is at least
$(1-\epsilon) (1-\delta-q^N_v)$.
Our algorithm picks each crucial edge with the probability at least $1-\epsilon$. Therefore, for crucial edges in $C^\star$ we have
\begin{align*}
\E \Big [ \sum_{e \in C^\star} \bar x^C_e \cdot w_e \Big] &\ge \sum_v (1-\epsilon) (1-\epsilon) (1-\delta- q^N_v) \qw^{C-}_v
\\&\ge (1-2\epsilon) \sum_v (1-\delta- q^N_v) \qw^{C-}_v
\\&= (1-2\epsilon) (1-\delta) \qw(C^\star) - (1-2\epsilon) \sum_v q^N_v \qw^{C-}_v  \,.
\end{align*}
Therefore, the weight of the matching returned by our algorithm is at least
\begin{align*}
\E [ \alg] &\ge \E [ \sum_{e \in N} \bar x^N_e \cdot w_e] + \E [ \sum_{e \in C^\star} \bar x^C_e \cdot w_e] \\
&\ge (1-10\epsilon) \qw(N)+ (1-2\epsilon) \Big((1-\delta)\qw(C^\star) - \sum_v q^N_v \qw^{C-}_v\Big) \,.
\end{align*}
Since $\qw(N) \le \opt$, we have
\begin{align*}
\E [ \alg] -8 \epsilon \opt \ge   (1-2\epsilon) \Big(\qw(N)+ (1-\delta)\qw(C^\star) - \sum_v q^N_v \qw^{C-}_v\Big) \,.
\end{align*}
Therefore,
\begin{align*}
\frac{\E[\alg]-8 \epsilon \opt}{\qw(N)+ \qw(C^\star)} &\geq \frac{(1-2\epsilon)\Big( \qw(N) + (1-\delta) \qw(C^\star) - \sum_v q^N_v \qw^{C-}_v\Big)}{\qw(N)+\qw(C^\star)}\\ &\geq (1-2\epsilon)\bigg( 1 - \frac{\sum_v \delta \cdot \qw^{C-}_v + q^N_v \qw^{C-}_v}{\qw(N)+\qw(C^\star)}\bigg) \\
&=(1-2\epsilon)\bigg( 1 - \frac{\sum_v \delta \cdot \qw^{C-}_v + q^N_v \qw^{C-}_v}{\sum_v \qw^{C-}_v + \frac{\qw^{N}_v}{2}}\bigg).
	\end{align*}

\begin{claim}
\label{clm:singlev}
For each vertex $v$, we have
\begin{align}
\label{eq:SV}
\dfrac{\delta \cdot \qw^{C-}_v + q^N_v \qw^{C-}_v}{\qw^{C-}_v + \frac{\qw^{N}_v}{2}} \le 0.449 \,.
\end{align}
\end{claim}
\begin{proof}
By Observation \ref{obs:heavybound}, we know that for each edge $e$ directed towards $u$, we have $w_e \le 2(1+\delta) \qw^N_v$. Therefore,
$$
\qw^{C-}_v \le 2(1+\delta) q^C_v \qw^N_v \le 2(1+\delta) (1-q^N_v) \qw^N_v \,.
$$ 
Since the left side of (\ref{eq:SV}) is increasing in $\qw^{C-}_v$, and we have $\qw^{C-}_v \le 2(1+\delta)(1-q^N_v) \qw^N_v$, it takes its maximum value when $\qw^{C-}_v = 2(1+\delta) (1-q^N_v) \qw^N_v$. Thus,
$$
\dfrac{\delta \cdot \qw^{C-}_v + q^N_v \qw^{C-}_v}{\qw^{C-}_v + \frac{\qw^{N}_v}{2}} 
\le \dfrac{2(1+\delta) (1-q^N_v) \qw^N_v (\delta + q^N_v)}{\qw^N_v(\frac{1}{2}+ 2(1+\delta) (1-q^N_v))} =  \dfrac{2(1+\delta) (1-q^N_v) (\delta + q^N_v)}{\frac{1}{2}+ 2(1+\delta) (1-q^N_v)}.
$$
We hide the tedious mathematical calculations here; however, by setting $\delta=0.1$, one can verify that the value above is at most 
$$
\frac{171-10 \sqrt{146}}{110} \le 0.457
$$
 for $0\le q^N_v \le 1$.
\end{proof}
We use the following simple observation to complete the proof of the claim.
\begin{observation}
For positive real values $a,b,c,d, \alpha$, suppose that $\frac{a}{b} \le \alpha$ and $\frac{c}{d} \le \alpha$. Then, $\frac{a+c}{b+d} \le \alpha$.
\end{observation}
Using the observation above and Claim \ref{clm:singlev}, we have
$$
\frac{\sum_v \delta \cdot \qw^{C-}_v + q^N_v \qw^{C-}_v}{\sum_v \qw^{C-}_v + \frac{\qw^{N}_v}{2}} \le 0.449 \,.
$$
Therefore, we have
\begin{align*}
\frac{\E[\alg]-8 \epsilon \opt}{\qw(N)+ \qw(C^\star)} & \ge (1-2\epsilon)\bigg( 1 - \frac{\sum_v \delta \cdot \qw^{C-}_v + q^N_v \qw^{C-}_v}{\sum_v \qw^{C-}_v + \frac{\qw^{N}_v}{2}}\bigg) \\
&\ge (1-2 \epsilon) (1-0.449) = (1-10 \epsilon) 0.551 \,.
\end{align*}
This implies that
	$$
	\E[\alg] \ge (1-10 \epsilon)0.551(\qw(N)+\qw(C^\star)) -8 \epsilon\opt\,,
	$$
	which is the desired bound.
\end{proof}
By the claim above, we have
	\begin{align*}
	\E[\alg] &\ge (1-2 \epsilon)0.551(\qw(N)+\qw(C^\star))-8 \epsilon\opt \\
	& \ge (1-2 \epsilon)(0.551 \cdot 0.91)(\qw(N)+\qw(C))-8 \epsilon\opt & \text{Since $\qw(C^\star) \ge 0.91 \qw(C)$.} \\
	  &\ge (1-2 \epsilon)(0.5014)(\qw(N)+\qw(C))-8 \epsilon\opt\\
	  &\ge(1-2 \epsilon)(0.5014)\opt-8 \epsilon\opt & \qw(N)+\qw(C)=\opt.\\
	  & \ge (0.5014-10\epsilon) \cdot \opt \,.
	\end{align*}
	Also, this fractional matching satisfies the blossom inequalities of size up to $1/\epsilon$. Therefore, by Lemma~\ref{lem:folklore}, the expected weight of the matching of this algorithm is $(1-\epsilon)(0.5014 -10 \epsilon) \opt$ and by setting $\epsilon$ small enough, it becomes at least $\bu \cdot \opt$.
\end{proof}

\bibliographystyle{alpha}
\bibliography{refs}

\clearpage

\appendix
\section{Appendix: Omitted Proofs}\label{sec:missingproofs}
\subsection{Proof of the Non-crucial Edges Lemma}

In this section, we provide the complete proof for Lemma~\ref{lem:non-cruciallemma}.

\begin{proof}[Proof of Lemma~\ref{lem:non-cruciallemma}]

Proof of the first and the second properties were already given in Section~\ref{sec:nonadaptive}. Here we prove the third property. We first start with the following claim.

\begin{claim}\label{claim:flargewhp}
	By the end of Algorithm~\ref{alg:nonadaptive}, we have $\E \Big[\sum_{e \in S \cap N} \min\{f_e, 2 \tau \} \cdot w_e \Big] \geq (1-\epsilon)\qw(N)$.
\end{claim}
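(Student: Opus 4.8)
The plan is to prove the claim one edge at a time. First I would observe that $f_e = 0$ whenever $e \notin S$, so the truncated terms for edges in $N \setminus S$ vanish and $\sum_{e \in S \cap N}\min\{f_e,2\tau\}w_e = \sum_{e \in N}\min\{f_e,2\tau\}w_e$; by linearity of expectation it then suffices to show that every non-crucial edge $e$ satisfies
\[
	\E\big[\min\{f_e, 2\tau\}\big] \ \ge\ (1-\epsilon)\,q_e .
\]
The relevant distributional fact is that $X := R f_e$ is a sum of $R$ independent indicators, each equal to $1$ with probability $q_e$ (the probability that $e$ belongs to the maximum matching of an independent realization $\mathcal{E}_r$), so $X \sim \mathrm{Bin}(R, q_e)$, $\E[f_e] = q_e$, and $q_e < \tau$ because $e$ is non-crucial.

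Next I would write $\min\{f_e,2\tau\} = f_e - (f_e - 2\tau)^+$, so that the target inequality becomes $\E[(f_e - 2\tau)^+] \le \epsilon\, q_e$. Using $(f_e-2\tau)^+ \le f_e\,\mathbbm{1}[f_e > 2\tau]$ and the elementary identity $k\binom{R}{k} = R\binom{R-1}{k-1}$, one gets, with $m := \lfloor 2\tau R\rfloor + 1$,
\[
	\E\big[X\cdot\mathbbm{1}[X \ge m]\big] \ =\ R\,q_e\cdot\Pr\big[Y \ge m-1\big], \qquad Y \sim \mathrm{Bin}(R-1, q_e),
\]
and hence $\E[(f_e - 2\tau)^+] \le q_e\cdot\Pr[Y \ge \lfloor 2\tau R\rfloor]$. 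So the whole claim reduces to the graph-independent tail estimate $\Pr[Y \ge \lfloor 2\tau R\rfloor] \le \epsilon$. I would emphasize here why this detour is the crux: a direct Chernoff bound on $\Pr[f_e \ge 2\tau]$ only yields $\E[(f_e-2\tau)^+] \le \Pr[f_e \ge 2\tau]$, and one would then need this probability to be at most $\epsilon\, q_e$; but $q_e$ can be as small as $\Theta(1/n)$ while the Chernoff bound is a fixed constant independent of $n$, so that route fails. The ``size-biasing'' identity above is exactly what peels off the extra factor $q_e$ and leaves behind a tail that need only beat the constant $\epsilon$.

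Finally I would discharge the tail bound by a standard multiplicative Chernoff estimate. Since $q_e < \tau$ we have $\E[Y] = (R-1)q_e < R\tau$, while the threshold obeys $\lfloor 2\tau R\rfloor \ge 2\tau R - 1 = (2 - \tfrac{1}{R\tau})\,R\tau > 1.99\,\E[Y]$, where I use the parameter computation $R\tau = \tfrac{100\log(1/\epsilon p)}{\epsilon} \ge 100$ (for $\epsilon$ suitably small). Thus $\lfloor 2\tau R\rfloor \ge (1+\delta)\E[Y]$ with $\delta \ge 0.99$, and $\Pr[Y \ge (1+\delta)\mu] \le \exp(-\mu\delta^2/(2+\delta))$ gives $\Pr[Y \ge \lfloor 2\tau R\rfloor] \le \exp(-c\,\lfloor 2\tau R\rfloor)$ for an absolute constant $c$ (e.g. $c = \tfrac17$). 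Since $\lfloor 2\tau R\rfloor \ge \tau R = \tfrac{100\log(1/\epsilon p)}{\epsilon}$, this is at most $\exp\!\big(-\Omega(\log(1/\epsilon p)/\epsilon)\big) \le \epsilon$ with enormous room, which closes the argument. The only genuinely delicate point is the one flagged above — choosing the size-biased binomial as the intermediate object so that the per-edge loss scales with $q_e$; everything after that is a routine substitution of the parameters $R$ and $\tau$.
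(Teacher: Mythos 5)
Your proof is correct, but the key probabilistic step is genuinely different from the paper's. The paper proves the stronger-looking per-edge statement $\Pr[f_e \geq 2\tau] \leq \epsilon\, q_e$ directly (its Claim 8.2), by applying a multiplicative Chernoff bound with deviation parameter $\delta = \tau/q_e$: because a non-crucial edge's mean $q_e$ sits below $\tau$, the tail at $2\tau$ is a deviation by a factor $1+\tau/q_e$, and the resulting bound $(1+\tau/q_e)^{-\Theta(\log(1/\epsilon p))}$ itself shrinks with $q_e$, from which the factor $\epsilon q_e$ is extracted after a long chain of parameter substitutions; the paper then combines this with $f_e \leq 1$ via the law of total expectation. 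You instead peel off the factor $q_e$ structurally, writing $\min\{f_e,2\tau\} = f_e - (f_e-2\tau)^+$ and using the size-biasing identity $\E[X\,\mathbbm{1}[X \geq m]] = Rq_e \Pr[Y \geq m-1]$ for $X \sim \mathrm{Bin}(R,q_e)$, $Y \sim \mathrm{Bin}(R-1,q_e)$, which reduces everything to the $q_e$-independent tail estimate $\Pr[Y \geq \lfloor 2\tau R\rfloor] \leq \epsilon$; that bound follows from any crude Chernoff form since $\lfloor 2\tau R\rfloor \geq \tau R = 100\log(1/\epsilon p)/\epsilon$. Your route buys a shorter and cleaner calculation (one constant-level tail bound replaces the paper's delicate $q_e$-dependent estimate), and in fact yields the slightly sharper bound $\E[(f_e-2\tau)^+] \leq q_e\Pr[Y \geq \lfloor 2\tau R\rfloor]$ rather than the paper's $\Pr[f_e > 2\tau]\cdot 1$. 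One remark in your write-up is inaccurate, though it does not affect the validity of your proof: you assert that bounding $\Pr[f_e \geq 2\tau]$ by $\epsilon q_e$ via Chernoff "fails" because the Chernoff bound is a constant independent of $q_e$. It is not: since the threshold $2\tau$ exceeds the mean $q_e$ by the multiplicative factor $2\tau/q_e$, the Chernoff exponent grows as $q_e$ shrinks, and this is exactly how the paper's proof succeeds; your size-biasing detour is an elegant simplification, not a necessity.
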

\begin{proof}
	We can think of the values of $f_e$ in the following way: For any edge $e$, $f_e$ is initially 0; then after each round $r$ of Algorithm~\ref{alg:nonadaptive}, we pick a matching $M(\mathcal{E}_r)$ and for any edge in this matching we update $f_e$ to be $f_e + 1/R$. Clearly by the end of the algorithm, the value of $f_e$ will be equal to the fraction of the matchings picked by the algorithm that contains $e$ which is precisely the definition of $f_e$. To argue that $\sum_{e\in S \cap N} f_e \cdot w_e$ is large, it suffices to show that the average weight of the matchings that are picked by Algorithm \ref{alg:nonadaptive} is close to $\expmatching{E}$. Let $M_1, M_2, \cdots, M_R$ be the random variables denoting the weights of the non-crucial edges in the matchings picked in each round of Algorithm~\ref{alg:nonadaptive}. For each $M_i$, we have
$$\E[M_i] = \sum_{e \in S \cap N} q_e \cdot w_e =  \qw(N)\,,$$
Further let $\bar M := (M_1, M_2, \cdots, M_R)/R$. One can easily confirm via linearity of expectation that
$\E[\bar M] = \qw(N)$. 
Note also that by definition of $f_e$, we have $\bar M = \sum_{e\in S \cap N} f_e \cdot w_e$. Hence,
$$
\E \Big[\sum_{e \in S \cap N} f_e \cdot w_e \Big ] = \sum_{e \in S \cap N} \E [ f_e  ] \cdot w_e = \qw(N) \,.$$

Next, we show that for every non-crucial edge $e$, the probability of $f_e$ exceeding $2 \tau$ is very small. Its proof is derived from the independence of the realizations taken by Algorithm~\ref{alg:nonadaptive}, the definition of $f_e$, and the fact that for all non-crucial edges $q_e < \tau$. Also, we assume that $\epsilon$ is a  small number and we have $\epsilon \le e^{-1}$.

\begin{claim}\label{cl:probltepsq}
For any non-crucial edge $e$, $f_e$ exceeds $2 \tau$ with probability at most $\epsilon \cdot q_e$.
\end{claim}
\begin{proof}
For an edge $e$, let $X_1, X_2, \cdots, X_R$ be random variables such that $X_i$ is $1$ if $e$ is picked in the maximum matching of round $i$ of Algorithm~\ref{alg:nonadaptive}, and is $0$ otherwise. Then we have $E[X_i] = q_e$ for each $X_i$. Recall that $f_e$ is the fraction of the matchings picked by the algorithm that contains $e$. Therefore, $f_e$ is the average of $X_1, X_2, \cdots, X_R$, i.e., $f_e = \frac{1}{R}(X_1+X_2+\cdots+X_R)$. Also, we have $\E[f_e] = q_e$. Let $X= X_1+X_2+\cdots+X_R$. It follows that $X= f_e \cdot R$, and we have
\begin{align*}
P [ f_e \ge 2 \tau ] &= P [ f_e - \tau \ge \tau ] \\
&\le P[ f_e - q_e \ge \tau] & \text{Since } e \text{ is non-crucial and } q_e < \tau .
\\& = P\Big[ f_e - \E[f_e] \ge \tau \Big]  
\\& = P\Big[ R \cdot f_e - \E[R \cdot f_e] \ge R \cdot \tau \Big]
\\& = P\Big[ X - \E[X] \ge R \cdot \tau \Big] & \text{Since $X= f_e \cdot R$.}
\\ & \le \exp\Big(-\frac{R \cdot \tau \cdot \log\big(1+(R \cdot \tau)/\E[X]\big)}{2}\Big) & \text{By Chernoff bound\footnotemark.}
\\ & \le \exp\Big(-\frac{R \cdot \tau \cdot \log(1+\frac{\tau}{q_e})}{2}\Big) & \text{$\E[X]= q_e \cdot R$.}
\\ & \le \exp\Big(-50 \log(1/\epsilon p) \log(1+\frac{\tau}{q_e})\Big) &\text{Since } R \cdot \tau > 100 \log(1/\epsilon p).
\\ & = \dfrac{1}{\exp\Big(50 \log(1/\epsilon p) \log(1+\frac{\tau}{q_e})\Big)}
\\ & = \dfrac{1}{(1+\frac{\tau}{q_e})^{50 \log(1/\epsilon p)}}
\\ & \le \dfrac{1}{(1+\frac{\tau}{q_e})(1+\frac{\tau}{q_e})^{49 \log(1/\epsilon p)}}
& \text{Since $\epsilon \le e^{-1}$ and $\log(1/\epsilon) \ge 1$.}
\\ & \le \dfrac{1}{(1+\frac{\tau}{q_e})\cdot 2^{49 \log(1/\epsilon p)}}
& \text{Since $\tau > q_e$ and  $1+\frac{\tau}{q_e}>2$.}
\\ & = \dfrac{1}{(1+\frac{\tau}{q_e})\cdot \exp\big({49/\log(2) \cdot \log(1/\epsilon p)}\big)}
\\ & \le \dfrac{1}{(1+\frac{\tau}{q_e})\cdot \exp\big(30\log(1/\epsilon p)\big)}
\\ & \le \dfrac{1}{(1+\frac{\tau}{q_e})\cdot e^{10} \cdot \exp\big(20\log(1/\epsilon p)\big)}
& \text{Since $\epsilon \le e^{-1}$ and $\log(1/\epsilon) \ge 1$.}
\\ & \le \dfrac{1}{20 (1+\frac{\tau}{q_e}) \cdot \exp\big(5 \log(1/\epsilon p)\big)}
\\ & \le \dfrac{1}{20 (1+\frac{\tau}{q_e}) \cdot \log(1/\epsilon p) \cdot \exp\big(4 \log(1/\epsilon p)\big)} 
& \text{ Since $e^{x} \ge x$ for all real numbers $x$.}
\\ & = \dfrac{\epsilon^4 p^4}{20 (1+\frac{\tau}{q_e}) \cdot \log(1/\epsilon p)}
\\ & \le \dfrac{\epsilon \tau} {(1+\frac{\tau}{q_e})} & \text{Since } \tau= \frac{\epsilon^3 p}{20 \log(1/\epsilon)}.
\\ & = \dfrac{\epsilon \cdot \tau \cdot q_e} {\tau+q_e}
\\ & \le \dfrac{\epsilon \cdot \tau \cdot q_e} {\tau}
\\ & = \epsilon \cdot q_e
\end{align*}
\footnotetext{By Chernoff bound we have $P \Big[ X \ge (1+\delta) E[X]\Big] \le \exp(-\frac{\delta \log(1+\delta) E[X]}{2})$.}
which proves the claim.
\end{proof}

By the claim above, we know that with probability at least $1-\epsilon q_e$, we have $f_e \le 2 \tau$. It follows that
	\begin{equation}\label{ieq:fealgo}
	 \E \Big[ \min \{ f_e, 2 \tau \} \Big] \ge \Pr \Big[f_e \le 2 \tau \Big] \cdot \E \Big [f_e \, |\, f_e \le 2 \tau \Big ] \,.
	\end{equation}
	On the other hand, we have
	\begin{equation}
	\label{ieq:fesum}
	q_e = E[f_e] = P \Big[f_e \le 2 \tau \Big] E \Big [f_e | f_e \le 2 \tau \Big ] + P \Big[f_e > 2 \tau \Big] E \Big [f_e | f_e > 2 \tau \Big ]\,.
	\end{equation}
	Combining (\ref{ieq:fealgo}) and (\ref{ieq:fesum}) gives
	\begin{align*}
	\E \Big[ \min \{ f_e, 2 \tau \} \Big] - E[f_e] &\le  P \Big[f_e > 2 \tau \Big] E \Big [f_e | f_e > 2 \tau \Big ]  \\
	&\le (\epsilon \cdot q_e) E \Big [f_e | f_e > 2 \tau \Big ]\\
	&\le (\epsilon \cdot q_e) & f_e \text{ is at most } 1.\\
	&= \epsilon E[f_e] \,.
	\end{align*}
	Therefore $\E \Big[ \min \{ f_e, 2 \tau \} \Big] \ge (1-\epsilon) E[f_e]$, and we have
	\begin{align*}
	\E \Big[\sum_{e \in S \cap N} \min\{f_e, 2 \tau \} \cdot w_e \Big] &
	= \sum_{e \in S \cap N} \E \Big[\min\{f_e, 2 \tau \}\Big] \cdot w_e\\
	& \ge \sum_{e \in S \cap N} (1-\epsilon)\E[f_e] \cdot w_e\\
	& =  (1-\epsilon) \sum_{e \in S \cap N} \E[f_e] \cdot w_e \\
	&= (1-\epsilon)\qw(N),
	\end{align*}
	which is our desired bound.
\end{proof}

\begin{claim}
\label{clm:non-crucial}
	By the end of step 1 of Procedure~\ref{proc:non-crucial}, we have $\E \big[ \sum_{e \in S_p \cap N} \tilde x^N_e . w_e \big] \geq (1-\epsilon)\qw(N)$.
\end{claim}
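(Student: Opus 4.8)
The plan is to derive this claim directly from Claim~\ref{claim:flargewhp}, which already bounds $\E\big[\sum_{e \in S \cap N} \min\{f_e, 2\tau\}\cdot w_e\big]$ from below by $(1-\epsilon)\qw(N)$. The only additional work is to pass from the sampled non-crucial edges $S \cap N$ to their realized portion $S_p \cap N$, and this pays exactly the factor of $p$ that is already compensated by the $1/p$ rescaling in the definition $\tilde x^N_e \gets \min\{f_e/p,\, 2\tau/p\}$ of step~1 of Procedure~\ref{proc:non-crucial}.

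The key observation I would state explicitly is an independence fact: the set $S$ and the fractions $f_e$ for $e \in S$ are functions solely of the internal randomness of Algorithm~\ref{alg:nonadaptive} --- the auxiliary realizations $\mathcal{G}_1, \ldots, \mathcal{G}_R$ --- which are drawn independently of the true realization $E_p$ of $G$ (the algorithm never queries $E_p$ while building $S$). Consequently, conditioned on $\mathcal{G}_1, \ldots, \mathcal{G}_R$, each edge $e \in S$ belongs to $S_p = S \cap E_p$ independently with probability exactly $p$, while $\tilde x^N_e$ is a determined quantity for every such $e$.

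Conditioning on the algorithm's randomness and using linearity of expectation over the independent realization bits, I would then compute
\[
\E\Big[\textstyle\sum_{e \in S_p \cap N} \tilde x^N_e \cdot w_e \,\Big|\, \mathcal{G}_1,\ldots,\mathcal{G}_R\Big]
= \sum_{e \in S \cap N} p \cdot \min\{f_e/p,\, 2\tau/p\} \cdot w_e
= \sum_{e \in S \cap N} \min\{f_e, 2\tau\} \cdot w_e,
\]
where the last step is just $p\cdot\min\{a/p, b/p\} = \min\{a,b\}$ for $p>0$. Taking the outer expectation over $\mathcal{G}_1,\ldots,\mathcal{G}_R$ and applying Claim~\ref{claim:flargewhp} gives $\E\big[\sum_{e \in S_p \cap N} \tilde x^N_e \cdot w_e\big] \geq (1-\epsilon)\qw(N)$, as claimed.

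I do not expect a real obstacle here: the statement is a one-line consequence of Claim~\ref{claim:flargewhp} plus the $1/p$ rescaling. The one point deserving care --- and the only thing I would be explicit about --- is the independence of $S$ and the $f_e$'s from $E_p$, since this is precisely what makes the conditional probability that $e \in E_p$ equal to $p$ simultaneously for all $e \in S$ and thus lets the factor $p$ be pulled out of the sum.
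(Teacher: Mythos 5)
Your proposal is correct and follows essentially the same route as the paper: the paper also reduces the claim to Claim~\ref{claim:flargewhp} by writing the sum over $S_p \cap N$ with realization indicators and using the independence of $f_e$ (a function of the auxiliary realizations $\mathcal{G}_1,\ldots,\mathcal{G}_R$) from the true realization $E_p$ to pull out the factor $p$, which cancels the $1/p$ in the definition of $\tilde x^N_e$. Your conditioning on the algorithm's randomness and applying the tower property is just a repackaging of that same independence-plus-linearity step.
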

\begin{proof}
	Note that for each edge $e \in S_p \cap N$, we assign $\min\{f_e/p, 2\tau /p \}$ to $\tilde x^N_e$ by the end of step 1. Thus,
	\begin{align*}
		\E \bigg[\sum_{e \in S_p \cap N} &w_e \cdot \min\{f_e/p, 2\tau /p\} \bigg]\\
		&= \frac{1}{p} \cdot \E\bigg[\sum_{e \in S_p \cap N} w_e \cdot \min\{f_e, 2\tau\}\bigg]\\
		&= \frac{1}{p} \cdot \E\bigg[\sum_{e \in S \cap N} w_e \cdot  \min\{f_e, 2\tau\} \cdot \mathbbm{1}_{E_p}(e)\bigg] & \text{($\mathbbm{1}_{E_p}(e) = 1$ if $e \in E_p$ and 0 otherwise.)} \\
		&= \frac{1}{p} \cdot \sum_{e \in S \cap N} \E\big[ w_e \cdot \min\{f_e, 2\tau\} \cdot \mathbbm{1}_{E_p}(e)\big] & \text{By linearity of expectation.}\\
		&= \frac{1}{p} \cdot \sum_{e \in S \cap N} w_e \cdot \E\big[\min\{f_e, 2\tau\}\big] \cdot \E\big[\mathbbm{1}_{E_p}(e)\big] & \text{\parbox{7cm}{\vspace{0.3cm}Since value of $f_e$ is independent of its realization.\vspace{0.3cm}}}\\
		&= \frac{1}{p} \cdot \sum_{e \in S \cap N} w_e \cdot  \E\big[\min\{f_e, 2\tau\}\big] \cdot p\\
		&= \sum_{e \in S \cap N} w_e \cdot \E\big[\min\{f_e, 2\tau\}\big].
	\end{align*}
	Recall by Claim~\ref{claim:flargewhp} that we have $\sum_{e \in S \cap N} w_e \cdot \E\big[\min\{f_e, 2\tau\}\big] \geq (1-\epsilon)\qw(N)$. Combining it with the inequality above, we get, 
	$$\E \big[ \sum_{e \in S_p \cap N} \tilde x^N_e . w_e \big] \geq (1-\epsilon)\qw(N)$$
	which is the desired bound.
\end{proof}
Considering the matchings picked by Algorithm \ref{alg:nonadaptive}, the expected weight of each of them is $\opt$. As we showed in the claim above by the end of step $1$ of Procedure \ref{proc:non-crucial}, we have
$$ 
\E \big[ \sum_{e \in S_p \cap N} \tilde x^N_e . w_e \big] \geq (1-\epsilon)\qw(N) \,.
$$ 
We claim that for every realized edge $e \in (S_p \cap N)$, the scaling-factor of this edge which is $s_e$ is at least $(1-5\epsilon)$ with probability at least $(1-4\epsilon)$. Formally, our claim is as follows.
\begin{claim}
\label{clm:non-crucialedgeschernoff}
Let $v$ be one of the end points of a realized edge $e \in (S_p \cap N)$, then with probability at least $1-2\epsilon$, we have 
$$\max\{q^N_v, \epsilon\}/\tilde x^N_v \ge 1-5\epsilon \,.$$
\end{claim}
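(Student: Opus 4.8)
The plan is to fix the realized sampled non-crucial edge $e$ together with one of its endpoints $v$, write $m := \max\{q^N_v,\epsilon\}$ (so $m \ge \epsilon$), and reduce the claim to an upper tail bound. Since $e \in S_p \cap N$ was sampled at least once we have $\tilde x^N_e > 0$, so the inequality $\max\{q^N_v,\epsilon\}/\tilde x^N_v \ge 1-5\epsilon$ is equivalent to $\tilde x^N_v \le m/(1-5\epsilon)$. Decompose $\tilde x^N_v = \tilde x^N_e + Y$, where $Y := \sum_{e' \ni v,\ e'\ne e,\ e'\in S\cap N} \min\{f_{e'}/p,\,2\tau/p\}\cdot\mathbbm{1}_{E_p}(e')$ is the contribution of the \emph{other} realized sampled non-crucial edges at $v$. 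Since $\tilde x^N_e \le 2\tau/p = \epsilon^3/(10\log(1/\epsilon)) \le \epsilon^3$ and, for small $\epsilon$, $m/(1-5\epsilon) - \epsilon^3 \ge m(1+5\epsilon) - m\epsilon = m(1+4\epsilon)$, it suffices to prove $\Pr[\,Y \ge m(1+4\epsilon)\,] < 2\epsilon$.

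The argument separates the two independent sources of randomness: the internal realizations $\mathcal G_1,\dots,\mathcal G_R$ of Algorithm~\ref{alg:nonadaptive}, which determine all the values $f_{e'}$ and the set $S$, and the true realization $E_p$ (whose restriction to the edges $e'\ne e$ is independent of the conditioning event $e\in E_p$). Conditioned on the run of the algorithm, $pY$ is a sum of independent random variables indexed by the edges $e'\ni v$ with $e'\ne e$ and $e'\in S\cap N$, each equal to $\min\{f_{e'},2\tau\}\in[0,2\tau]$ with probability $p$ and $0$ otherwise; hence its conditional mean is $\mu_Y := \sum_{e'\ni v,\ e'\ne e,\ e'\in S\cap N}\min\{f_{e'},2\tau\} \le \sum_{e'\ni v,\ e'\in N} f_{e'}$.

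The first step is to control $\sum_{e'\ni v,\,e'\in N}f_{e'}$ over the algorithm's randomness. This quantity equals $K/R$, where $K$ is the number of rounds $r$ in which $M(\mathcal{E}_r)$ contains some non-crucial edge incident to $v$: indeed, in any single round the events $\{e'\in M(\mathcal{E}_r)\}$ over $e'\ni v$ are mutually exclusive and each has probability $q_{e'}$, so a given round contributes $1$ to the sum with probability exactly $q^N_v$ and contributes $0$ otherwise, and different rounds are independent; thus $K$ is a sum of $R$ independent $\mathrm{Bernoulli}(q^N_v)$ indicators. Because the value $R = \tfrac{2000\log(1/\epsilon)\log(1/\epsilon p)}{\epsilon^4 p}$ is an enormous multiple of $1/q^N_v$ when $q^N_v\ge\epsilon$ (and $R\epsilon$ is a huge multiple of the mean when $q^N_v<\epsilon$), a Chernoff bound gives that with probability at least $1-\epsilon$ over the run one has $K/R\le m(1+\epsilon)$, hence $\mu_Y\le m(1+\epsilon)$.

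The second step, on the good event of Step~1, is a Chernoff bound for $pY$ given the run: $\E[pY\mid\text{run}] = p\mu_Y\le p\,m(1+\epsilon)$, while the target $p\,m(1+4\epsilon)$ is a $(1+\delta)$-factor above this conditional mean with $\delta\ge 2\epsilon$. Rescaling the $[0,2\tau]$-bounded summands and using $\Pr[pY\ge(1+\delta)\E[pY]]\le\exp\!\big(-\delta^2\E[pY]/((2+\delta)\cdot 2\tau)\big)$ — treating $\mu_Y\ge\epsilon/2$ directly and $\mu_Y<\epsilon/2$ as a large-deviation case with $\delta\ge1$ — bounds this probability by $\epsilon^{\Omega(1)}\ll\epsilon$; here the choice $\tau=\epsilon^3 p/(20\log(1/\epsilon))$ is used crucially, since its $\log(1/\epsilon)$ factor is exactly what makes the resulting exponent $\Omega(\log(1/\epsilon))$ rather than $O(1)$. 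A union bound over the Step~1 and Step~2 bad events then yields the claimed probability $1-2\epsilon$. The main obstacle is this nested use of concentration: one must verify that conditioning on $e\in S_p\cap N$ leaves the algorithm's run and the realizations of the remaining edges with their original product distribution (so that the inner Chernoff bound applies with the stated conditional mean), and one must track constants carefully enough to see that — unlike a naive threshold $\tau=\Theta(\epsilon^3 p)$, which only gives $\exp(-O(1))$ — the $\log(1/\epsilon)$ built into $\tau$ pushes the second tail below $\epsilon$.
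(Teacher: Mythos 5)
Your proposal is correct and follows essentially the same route as the paper's proof: an outer concentration bound over the $R$ independent rounds showing the total $f$-mass of non-crucial edges at $v$ is at most $(1+\epsilon)\max\{q^N_v,\epsilon\}$ (the paper's Claim~\ref{clm:fcloseq}, proved via Hoeffding rather than your binomial Chernoff view), followed by an inner Chernoff bound over the independent edge realizations for the $2\tau/p$-capped sum with the same case split on whether the conditional mean is below $\epsilon/2$, a union bound, and the same bookkeeping with $\tilde x^N_e \le 2\tau/p$ and the $\log(1/\epsilon)$ factor in $\tau$. The conditioning on $e \in S_p \cap N$ that you flag is treated in the paper at a comparable level of informality (it simply argues over the other $R-1$ rounds), so your attempt matches the paper's argument in both structure and rigor.
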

\begin{proof}
Since edge $e$ is realized, $\tilde x^N_e$ is $\min\{f_e/p, 2 \tau/p \}$ at step $1$ of the procedure. Let $\tilde x^N_v =\sum_{e: e \in (S_p \cap N), v \in e} \tilde x^N_e$.
Without looking at the realization of other edges, let $e_1, e_2, \cdots, e_k$ be the non-crucial edges in $S \cap N$ incident to $v$ except the edge $e$. For each edge $e_i$, let $X_i$ be a random variable which is $0$ if $e_i$ is not realized and otherwise is $\min\{f_{e_i}/p, 2\tau/p\}$. Then, for each edge $e_i$, we have
$$
\E[X_i] = p \cdot \min\{f_{e_i}/p, 2\tau /p\} = \min\{f_{e_i}, 2 \tau \} \,.
$$
Let $f^N_v = \sum_{e_i} f_{e_i}$, in the following claim we show that $f^N_v$ is a good approximate of $q^N_v$. Specifically, the claim is as follows.
\begin{claim}
\label{clm:fcloseq}
With probability at least $1-\epsilon$,
$$
\max\{f^N_v, \epsilon\} \le (1+\epsilon) \max\{q^N_v, \epsilon\} \,.
$$
\end{claim}
\begin{proof}
At each round of the algorithm, each edge $e_i$ is sampled with probability $q_i$. Therefore, the probability that vertex $v$ is matched using one of the edges $e_1, e_2, \cdots, e_k$ is at most $\sum_{e_i} q^N_{e_i} \le q^N_v$. Recall that $f_{e_i}$ is the fraction of the matching picked by Algorithm \ref{alg:nonadaptive} that contains $e_i$. Therefore, $f^N_v$ is the fraction of the matchings that vertex $v$ is matched using one of the edges $e_1, e_2, \cdots, e_k$. Therefore, $E[f^N_v] \le q^N_v$. By Hoeffding's inequality we have
$$
P \Big [ f^N_v - q^N_v \ge \epsilon^2 \Big ] \le \exp \big (-2 (R-1) \epsilon^4 \big ) \,.
$$ 
The reason that we have $R-1$ instead of $R$ in the inequality above is that we already know that edge $e$ is realized in one round of the algorithm and we are arguing on other rounds.
Therefore,
$$
P \Big [ f^N_v - q^N_v \ge \epsilon^2 \Big ] \le \exp \big (-2 (R-1) \epsilon^4 \big ) \le \epsilon \,.
$$
Therefore, with probability at least $1-\epsilon$, we have $f^N_v - q^N_v \le \epsilon^2$. It implies that with probability of at least $1-\epsilon$, we have
$$
\max\{f^N_v, \epsilon\} - \max\{q^N_v, \epsilon\}\le \epsilon^2 \,.
$$
Thus, with probability at least $1-\epsilon$,
$$
\max\{f^N_v, \epsilon\} \le \max\{q^N_v, \epsilon\} + \epsilon^2 \le (1+\epsilon) \max\{q^N_v, \epsilon\}.
$$
\end{proof}
For each edge $e_i$, we have $E[X_i] \le \min\{f_{e_i}, 2 \tau \} \le f_{e_i}$. Therefore,
$$
\sum_{e_i} \E[X_i] \le f^N_v ,.
$$
At end of step $1$ of Procedure \ref{proc:non-crucial}, $\tilde x^N_v$ is the sum of the $\tilde x^N_e$ for non-crucial edges in $S$ which are incident to $v$. It follows that
$$
\E[\tilde x^N_v] = \sum_{e_i} \E[X_i] + \tilde x^N_e \le \sum_{e_i} \E[X_i]+ 2\tau/p \,.
$$
If $\tilde x^N_v$ is more than the non-crucial budget of the vertex $v$ which is $\max\{q^N_v, \epsilon\}$, in steps $2$ and $3$ of Procedure \ref{proc:non-crucial}, we scale down the fractional matching such that no vertex violates its non-crucial budget. In the rest of the proof we show that the probability that vertex $v$ violates its budget by a large margin is very small. By the Claim \ref{clm:fcloseq}, we know that $f^N_v$ is very close to non-crucial budget of vertex $v$ and we use $f^N_v$ as a approximation of the budget of vertex $v$. More precisely, we show that with probability at least $1-\epsilon$, $\tilde x^N_v - \tilde x^N_e \le ( 1+ \epsilon) f^N_v$. We use $X$ to denote $\tilde x^N_v - \tilde x^N_e$. Let $\mu = \E[X]$, then $\mu = \E[X]= \sum_{i=1}^{k} \E[X_i]$. We use the variant of Chernoff bound that is given in Lemma~\ref{lem:chernoff}. Note that for each random variable $X_i$, we have $X_i \le \min\{f_{e_i}/p,2 \tau/p \}$ and $\E[X_i] = \min\{f_{e_i},2 \tau\}$. Therefore, $X_i \le \E[X_i] / p$. 
We consider two different cases on $\mu$. The first one is when $\mu \le \epsilon/2$, then $2 \mu \le \max\{f^N_v, \epsilon\}$, and we have
\begin{align*}
P \Big[ X > \max\{f^N_v, \epsilon\} \Big] &= P\Big[X - \max\{f^N_v, \epsilon\}/2 \ge \max\{f^N_v, \epsilon\}/2\Big] \\
&\le P \Big[ X - \mu \ge \max\{f^N_v, \epsilon\}/2 \Big]
\\ &\le \exp \Big(-\dfrac{ \max\{f^N_v, \epsilon\}/2 }{6 \tau/p} \Big) & \text{By Chernoff bound.}
\\ &\le \exp \Big(-\dfrac{\epsilon}{12 \tau/p}\Big)
\\ &\le \exp(-1/\epsilon^2)
\\ &\le \epsilon
\end{align*}
The remaining case is when $\mu > \epsilon/2$. In this case we have
\begin{align*}
P [ X > (1+\epsilon) \max\{f^N_v, \epsilon\} ] & \le P [ X > (1+\epsilon) \mu  ] \\
&\le \exp \Big(-\dfrac{\epsilon^2 \mu}{ 2 \tau /p }\Big) & \text{By Chernoff bound.}
\\ &\le \exp \Big(-\dfrac{\epsilon^3}{6 \tau / p }\Big)  & \text{Since } \mu > \epsilon/2
\\ &\le \exp \Big(-\dfrac{1}{\log (1/\epsilon)}\Big) \,.
\\ &= \epsilon \,.
\end{align*}
Which proves the last case. Therefore with probability at least $1-\epsilon$, we have $X \le (1+\epsilon)  \max\{f^N_v, \epsilon\}$. And, with probability at least $1-\epsilon$ we have
$$
\tilde x^N_v \le (1+\epsilon)  \max\{f^N_v, \epsilon\} + \tilde x^N_e \le (1+\epsilon)  \max\{f^N_v, \epsilon\} + \epsilon \le (1+2\epsilon)  \max\{f^N_v, \epsilon\} \,.
$$
Combining with Claim \ref{clm:fcloseq}, with probability at least $1-2\epsilon$ we have
$$
\tilde x^N_v  \le (1+2\epsilon) \max\{f^N_v, \epsilon\} \le (1+2\epsilon) (1+\epsilon) \max\{q^N_v, \epsilon\} \le (1+5\epsilon) \max\{q^N_v, \epsilon\}  \,.
$$
Therefore, with probability at least $1-2\epsilon$ we have
$$\max\{q^N_v, \epsilon\}/\tilde x^N_v \ge \dfrac{1}{1+5\epsilon} \ge 1-5\epsilon \,,$$
which proves the claim.
\end{proof}

By Claim \ref{clm:non-crucialedgeschernoff}, the probability that for an edge $e$, $\tilde x^N_e$ multiplied by a factor less than $1-5\epsilon$ by one of end points is at most $2\epsilon$. Therefore, by union bound, the probability that none of its end points multiply $\tilde x^N_e$ by a factor less than $1-5\epsilon$ is at least $1-4\epsilon$, i.e., with probability at least $1-4\epsilon$, $s_e \ge 1-5\epsilon$. Therefore,
\begin{align*}
\E \big[ \sum_{e \in S_p \cap N} x^N_e . w_e \big] &=  \sum_{e \in S_p \cap N} \E[x^N_e] . w_e \\
&= \sum_{e \in S_p \cap N} \E[\tilde x^N_e . s_e] . w_e \\
&\ge \sum_{e \in S_p \cap N} (1-4\epsilon)(1-5\epsilon) \E[\tilde x^N_e] . w_e &\text{By Claim \ref{clm:non-crucialedgeschernoff}.} \\
&\ge (1-9\epsilon) \sum_{e \in S_p \cap N} \E[\tilde x^N_e] . w_e \\
&\ge (1-9 \epsilon) (1-\epsilon) \qw(N) & \text{By Claim \ref{clm:non-crucial}.} \\
&\ge (1-10 \epsilon) \qw(N) \,.
\end{align*}
\end{proof}

\subsection{Other Omitted Proofs}\label{sec:otheromitted}
\begin{proof}[Proof of Lemma~\ref{lem:sampleallcrucial}]
	Let $e \in C$ be a crucial edge. We show that Algorithm~\ref{alg:nonadaptive} samples $e$ with probability at least $1-\epsilon$. Let $p^\star_e$ be the probability that Algorithm~\ref{alg:nonadaptive} samples $e$. By Observation \ref{obs:samplingprob}, we have
	$
		1-p^\star_e = (1-q_e)^R
	$. Since $e$ is crucial, we have $q_e \ge \tau$. Thus,
	$ 1-p^\star_e \le (1-\tau)^R $. Note that $R > \frac{\log(1/\epsilon)}{\tau}$. Therefore, $1-p^\star_e$ is at most
	\begin{align}
	\label{ieq:ps}
	1-p^\star_e \le (1-\tau)^\frac{\log(1/\epsilon)}{\tau} = ((1-\tau)^{(1/\tau)})^{\log(1/\epsilon)} \,.
	\end{align}
	We can use the fact that for $(1-x)^{1/x} \leq 1/e$ (see Lemma~\ref{lem:oneovere}) to simplify this bound. Combined with inequality (\ref{ieq:ps}), we have
	$$
	1-p^\star_e \le \Big((1-\tau)^{(1/\tau)}\Big)^{\log(1/\epsilon)} \le \Big(\frac{1}{e}\Big)^{\log(1/\epsilon)} = \dfrac{1}{e^{\log(1/\epsilon)}} = \epsilon \,.
	$$
	Therefore, we have $p^\star_e \ge 1-\epsilon$. Now that we know that each crucial edge is in $S$ with probability at least $1-\epsilon$, we can prove the lemma as follows:
	\begin{align*}
		E[\qw(S \cap C)] &= \E\Big[ \sum_{e \in C} \qw_e \cdot \mathbbm{1}_{S}(e)\Big] &  \text{($\mathbbm{1}_{S}(e) = 1$ if $e \in S$ and 0 otherwise.)} \\
		& =  \sum_{e \in C} \E[\qw_e \cdot \mathbbm{1}_{S}(e)] & \text{By linearity of expectation.} \\
		& = \sum_{e \in C} \qw_e \cdot \E[\mathbbm{1}_{S}(e)] \\
		& = \sum_{e \in C} \qw_e \cdot p^\star_e \\
		& \ge \sum_{e \in C} \qw_e \cdot (1-\epsilon)\\
		& = (1-\epsilon) \sum_{e \in C} \qw_e \\
		& = (1-\epsilon) \qw(C) \,. 
	\end{align*}
	as desired.
\end{proof}

\begin{proof}[Proof of Lemma~\ref{lem:mathratio}]
	We use induction on the value of $n$. 
	
	\paragraph{Base case.} Suppose for the base case that $n=1$.\footnote{To help sanity check the rather technical proof of the base case, we also refer the reader to \href{http://www.wolframalpha.com/input/?i=Maximize\%5B\%7B(h*l)\%2F(h+\%2B+0.5+l),+0+\%3C\%3D+h+\%3C\%3D+1,+0+\%3C\%3D+l+\%3C\%3D+1,++++h+\%2B+l+\%3C\%3D+1\%7D,+\%7Bh,+l\%7D\%5D}{this link on wolframalpha.com.}} We need to prove that for any $a, b \geq 0$ with $0 < a + b \leq 1$ we have $f(a, b) = \frac{ab}{a+b/2} \leq 6-4\sqrt{2}$. For this, we first argue that $f(a, b)$ is maximized when $a+b = 1$. To do this, we show that $f(1-b, b) - f(a, b) \geq 0$ for any $a$ and $b$ that satisfy the conditions above. If $b=0$ or $a=0$ both $f(1-b, b)$ and $f(a, b)$ will be zero and the equation is trivially true, thus assume $a \not= 0$ and $b\not=0$. We have
	\begin{align}
		\nonumber f(1-b, b) - f(a, b) &= \frac{(1-b)b}{1-b+b/2} - \frac{ab}{a+b/2} \\
		\nonumber &= \frac{(1-b)b}{1-b/2} - \frac{ab}{a+b/2}\\
		\nonumber &= \frac{(1-b)b (a+b/2) - ab(1-b/2)}{(1-b/2)(a+b/2)}\\
		\nonumber &= \frac{(ab-ab^2+b^2/2-b^3/2) - (ab - ab^2/2)}{a + b/2 - ab/2 - b^2/4}\\
		\nonumber &= \frac{ab-ab^2+b^2/2-b^3/2 -ab + ab^2/2}{a + b/2 - ab/2 - b^2/4}\\
		\nonumber &= \frac{-ab^2/2+b^2/2-b^3/2}{a + b/2 - ab/2 - b^2/4}\\
		&= \frac{-ab^2+b^2-b^3}{2a + b - ab - b^2/2}.\label{eq:BXKJ}
	\end{align}
	It suffices to show that both the numerator and the denominator of the fraction above are non-negative to show that $f(1-b, b) - f(a, b) \geq 0$. For the numerator, we should show that $-ab^2 + b^2 - b^3 > 0$ or equivalently $b^2 > ab^2 + b^3$. Due to our assumption of $0 < b$, we can divide both sides by $b^2$ to get $1 > a + b$ which is always true as it is part of our initial assumptions on the values of $a$ and $b$. For the denominator, we have to show $2a+b-ab-b^2/2 \geq 0$ or equivalently $2a+b \geq ab + b^2/2$. We have $2a > ab$ since $a, b \in (0, 1)$ and we have $b \geq b^2/2$ since $b \in (0, 1)$. Summing up the two inequalities we get our desired bound that $2a + b \geq ab + b^2/2$; concluding the claim that the fraction in (\ref{eq:BXKJ}) is non-negative and that $f(1-b, b) - f(a, b) \geq 0$.
	
	By the discussion above, to prove the base case, it suffices to find the minimum value of $g(b) := f(1-b, b)$ for $b \in (0, 1)$. Taking the derivative of $g$, we have $g'(b) = \frac{2b^2-8b+4}{(2-b)^2}.$ Setting this equal to zero to get the critical points, we get two solutions of $2-\sqrt{2}$ and $2+\sqrt{2}$. The latter is out of the $(0, 1)$ range and thus the only relevant critical point is when $b = 2-\sqrt{2}$. Therefore we have $$\max_{0\leq b \leq 1} g(b) = g(2-\sqrt{2}) = \frac{\big(1-(2-\sqrt{2})\big)(2-\sqrt{2})}{\big(1-(2-\sqrt{2})\big)+ (2-\sqrt{2})/2} = \frac{3 \sqrt{2}-4}{1/\sqrt{2}} = 6-4 \sqrt{2},$$ implying that for any $a, b \geq 0$ with $0 \leq a+b \leq 1$, we have $f(a, b) \leq g(b) \leq 6-4\sqrt{2}$ as desired for the base case.
	
	\paragraph{Induction step.} Fix numbers $a_1, \ldots, a_{n+1}$ and $b_1, \ldots, b_{n+1}$ that satisfy the conditions of the lemma. Suppose, as induction hypothesis, that we have
	\begin{equation}\label{eq:GPMU}
	\frac{\sum_{i=1}^{n}a_ib_i}{\sum_{i=1}^{n}a_i+\frac{b_i}{2}} \leq 6 - 4\sqrt{2}	
	\end{equation}
	or equivalently,
	\begin{equation}\label{eq:MEXC}
		 \sum_{i=1}^{n}a_ib_i \leq (6 - 4\sqrt{2})\Big(\sum_{i=1}^{n}a_i+\frac{b_i}{2}\Big).
	\end{equation}
	Our goal is to show that 
	\begin{equation}\label{eq:YCXJ}
		\frac{\sum_{i=1}^{n+1}a_ib_i}{\sum_{i=1}^{n+1}a_i+\frac{b_i}{2}} = \frac{\Big(\sum_{i=1}^{n}a_ib_i\Big) + a_{n+1}b_{n+1}}{\Big( \sum_{i=1}^{n}a_i+\frac{b_i}{2}\Big) + a_{n+1}+\frac{b_{n+1}}{2}} \overset{?}{\leq} 6 - 4\sqrt{2}.	
	\end{equation}
	Note that if either of $a_{n+1}$ or $b_{n+1}$ equals 0, then the inequality above is trivially true since the numerator would be equal to that of (\ref{eq:GPMU}) while the denominator is no less than that of (\ref{eq:GPMU}). Thus assume that both $a_{n+1}$ and $b_{n+1}$ are positive. As shown for the base case, we have
	\begin{equation*}
		\frac{a_{n+1}b_{n+1}}{a_{n+1}+\frac{b_{n+1}}{2}} \leq 6-4\sqrt{2},
	\end{equation*}
	which means,
	\begin{equation}\label{eq:ABXJ}
		 a_{n+1}b_{n+1} \leq (6-4\sqrt{2})\big( a_{n+1}+\frac{b_{n+1}}{2}\big).
	\end{equation}
	Replacing (\ref{eq:ABXJ}) and (\ref{eq:MEXC}) into the left-side of the inequality in (\ref{eq:YCXJ}) we get
	\begin{align*}
	\frac{\Big(\sum_{i=1}^{n}a_ib_i\Big) + a_{n+1}b_{n+1}}{\Big( \sum_{i=1}^{n}a_i+\frac{b_i}{2}\Big) + a_{n+1}+\frac{b_{n+1}}{2}} &\leq \frac{\bigg( (6 - 4\sqrt{2})\Big(\sum_{i=1}^{n}a_i+\frac{b_i}{2}\Big) \bigg) + (6-4\sqrt{2})\big( a_{n+1}+\frac{b_{n+1}}{2}\big)}{\Big( \sum_{i=1}^{n}a_i+\frac{b_i}{2}\Big) + a_{n+1}+\frac{b_{n+1}}{2}}	\\
	& \leq \frac{(6-4\sqrt{2})\bigg( \Big( \sum_{i=1}^{n}a_i+\frac{b_i}{2}\Big) + a_{n+1}+\frac{b_{n+1}}{2}\bigg)}{\Big( \sum_{i=1}^{n}a_i+\frac{b_i}{2}\Big) + a_{n+1}+\frac{b_{n+1}}{2}}\\
	& \leq 6-4\sqrt{2},
	\end{align*}
	which is the desired bound of inequality (\ref{eq:YCXJ}).
\end{proof}

\section{Appendix: Used Inequalities}

\begin{lemma}[Chernoff bound]\label{lem:chernoff} 
Given a real number $b>0$, let $X_1, X_2, \ldots X_n$ be $n$ random variables such that $0 \le X_i \le b$ for every $X_i$. Let $X= \sum_{i=1}^n X_i$ and $\mu= \E[X]$. Then for $0 \le \delta \le 1$,
$$
P [ X \ge (1+ \delta)\mu ] \le \exp\Big(-\dfrac{\delta^2 \mu}{3 b}\Big) \,.
$$
Also, for $\delta \ge 1$,
$$
P [ X \ge (1+ \delta)\mu ] \le \exp\Big(-\dfrac{\delta \mu}{3 b}\Big) \,.
$$
\end{lemma}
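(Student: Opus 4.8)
The plan is to derive this from the standard exponential-moment (Chernoff) argument after normalizing the variables to $[0,1]$. As is standard, and as the bound is applied throughout the paper, I take the $X_i$ to be independent (this is needed to factor the moment generating function). First I would set $Y_i := X_i/b$, so that $0 \le Y_i \le 1$, and write $Y := \sum_{i=1}^n Y_i = X/b$ with $\nu := \E[Y] = \mu/b$. Since $P[X \ge (1+\delta)\mu] = P[Y \ge (1+\delta)\nu]$, it suffices to prove the two bounds with $(X,\mu,b)$ replaced by $(Y,\nu,1)$; the case $\delta = 0$ is trivial (the right-hand side is $1$), so assume $\delta > 0$.

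Second, I would run the moment argument on $Y$. For any $t > 0$, Markov's inequality applied to $e^{tY}$ gives $P[Y \ge (1+\delta)\nu] \le e^{-t(1+\delta)\nu}\,\E[e^{tY}] = e^{-t(1+\delta)\nu}\prod_{i=1}^n \E[e^{tY_i}]$, using independence. For each $i$, convexity of $y \mapsto e^{ty}$ on $[0,1]$ gives the pointwise bound $e^{tY_i} \le 1 + Y_i(e^t-1)$, hence $\E[e^{tY_i}] \le 1 + \E[Y_i](e^t-1) \le \exp\!\big((e^t-1)\E[Y_i]\big)$ by $1+x \le e^x$. Multiplying over $i$ yields $\E[e^{tY}] \le \exp\!\big((e^t-1)\nu\big)$, so $P[Y \ge (1+\delta)\nu] \le \exp\!\big(\nu(e^t-1) - \nu t(1+\delta)\big)$. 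Minimizing the exponent in $t$ gives $t = \ln(1+\delta)$, and substituting produces the clean bound $P[Y \ge (1+\delta)\nu] \le \Big(\dfrac{e^\delta}{(1+\delta)^{1+\delta}}\Big)^{\nu} = \exp\!\big(\nu\,\psi(\delta)\big)$, where $\psi(\delta) := \delta - (1+\delta)\ln(1+\delta)$.

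Third, I would replace $\psi(\delta)$ by the simpler upper bounds in the statement. For $0 \le \delta \le 1$, I claim $\psi(\delta) \le -\delta^2/3$: let $f(\delta) := \psi(\delta) + \delta^2/3$, so $f(0) = 0$, $f'(\delta) = \tfrac{2}{3}\delta - \ln(1+\delta)$, $f'(0) = 0$, and $f''(\delta) = \tfrac{2}{3} - \tfrac{1}{1+\delta} = \tfrac{2\delta-1}{3(1+\delta)}$; thus $f'$ decreases on $[0,\tfrac12]$ and increases on $[\tfrac12,1]$, and since $f'(0) = 0$ and $f'(1) = \tfrac23 - \ln 2 < 0$ we get $f' \le 0$ on all of $[0,1]$, so $f$ is nonincreasing and $f(\delta) \le f(0) = 0$. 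For $\delta \ge 1$, I claim $\psi(\delta) \le -\delta/3$, equivalently $g(\delta) := (1+\delta)\ln(1+\delta) - \tfrac43\delta \ge 0$: here $g(1) = 2\ln 2 - \tfrac43 > 0$ and $g'(\delta) = \ln(1+\delta) - \tfrac13 \ge \ln 2 - \tfrac13 > 0$ for $\delta \ge 1$, so $g$ is increasing and stays nonnegative. Plugging these two estimates into the bound from the second step and recalling $\nu = \mu/b$ gives $P[X \ge (1+\delta)\mu] \le \exp(-\delta^2\mu/(3b))$ for $0 \le \delta \le 1$ and $P[X \ge (1+\delta)\mu] \le \exp(-\delta\mu/(3b))$ for $\delta \ge 1$, as desired. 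There is no genuine obstacle here: the only points requiring care are making the independence hypothesis explicit (needed to split the MGF) and the two elementary single-variable calculus verifications for $f$ and $g$ in this last step.
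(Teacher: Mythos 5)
Your proof is correct and complete: the normalization to $[0,1]$, the moment-generating-function/Markov step with the convexity bound $e^{ty}\le 1+y(e^t-1)$, the optimal choice $t=\ln(1+\delta)$ yielding the exponent $\psi(\delta)=\delta-(1+\delta)\ln(1+\delta)$, and the two calculus verifications $\psi(\delta)\le-\delta^2/3$ on $[0,1]$ and $\psi(\delta)\le-\delta/3$ for $\delta\ge 1$ all check out. Note that the paper itself states this lemma in its ``Used Inequalities'' appendix as a standard black-box fact and gives no proof, so there is no argument of the paper to compare against; your derivation is the standard one and would serve as a proof. You were also right to flag that independence of the $X_1,\ldots,X_n$ must be assumed even though the paper's statement omits it --- the bound is false without it, and the paper's applications (e.g., in the proof of the non-crucial edges lemma, where the summands are driven by independent edge realizations) do satisfy it.
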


\begin{lemma} \label{lem:oneovere}
 Let $f(x) = (1-x)^{1/x}$. Then, for any $0<x \le 1$, $f(x) \le \frac{1}{e}$. 
\end{lemma}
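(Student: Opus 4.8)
The plan is to reduce the claim to the elementary inequality $1 + t \le e^{t}$, valid for every real $t$. I would first dispose of the boundary case $x = 1$, where $f(1) = 0^{1} = 0 \le 1/e$ holds trivially. For the remaining range $0 < x < 1$ the base $1 - x$ is strictly positive, which is what makes raising it to the positive power $1/x$ legitimate and order-preserving.

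The main step is to instantiate $1 + t \le e^{t}$ at $t = -x$, which gives $1 - x \le e^{-x}$. Since both sides are positive and $1/x > 0$, the map $z \mapsto z^{1/x}$ is increasing on the positive reals, so
\[
(1-x)^{1/x} \le \bigl(e^{-x}\bigr)^{1/x} = e^{-1},
\]
which is exactly the assertion.

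If a self-contained argument is preferred, I would instead take logarithms: set $g(x) = \ln(1-x) + x$ on $[0,1)$, observe $g(0) = 0$ and $g'(x) = -\tfrac{1}{1-x} + 1 = -\tfrac{x}{1-x} \le 0$, so $g$ is non-increasing and hence $g(x) \le 0$, i.e. $\ln(1-x) \le -x$; dividing by $x > 0$ and exponentiating gives $f(x) \le e^{-1}$. There is essentially no obstacle here; the only two points needing a word of care are the separate handling of $x = 1$ (where the base vanishes) and the monotonicity of $z \mapsto z^{1/x}$ on the positive reals that licenses the exponentiation step.
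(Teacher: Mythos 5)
Your proof is correct, and it takes a different route from the paper. The paper argues directly about the function $f(x)=(1-x)^{1/x}$: it asserts that $f$ is decreasing on $(0,1]$ (via a derivative computation) and then bounds $f(x)$ by $\lim_{x\to 0^+} f(x) = 1/e$. Incidentally, the derivative the paper writes down, $f'(x) = -1/x$, is not the actual derivative of $(1-x)^{1/x}$ (the correct one is $f(x)\cdot\bigl(-\tfrac{1}{x^2}\ln(1-x)-\tfrac{1}{x(1-x)}\bigr)$, which is indeed negative on $(0,1)$, so the monotonicity conclusion survives, but the stated computation does not justify it). Your argument sidesteps all of this: instantiating $1+t\le e^{t}$ at $t=-x$ gives $1-x\le e^{-x}$, and raising both sides to the positive power $1/x$ (an order-preserving operation on positive reals) yields the bound immediately, with no limits and no differentiation of $f$ itself; your separate treatment of $x=1$ and your logarithmic variant via $g(x)=\ln(1-x)+x$ are both sound. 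What your approach buys is a shorter, fully rigorous, and self-contained proof; what the paper's approach would buy, if carried out carefully, is the sharper qualitative statement that $f$ is monotone decreasing with supremum exactly $1/e$, which is more than the lemma needs.
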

	
\begin{proof} [Proof of Lemma~\ref{lem:oneovere}]
We want to find the maximum value of $f$ for $0 < x \le 1$. By taking derivative with respect to $x$ we have
$$f'(x) = -\frac{1}{x} \,.$$
Therefore, $f$ is an decreasing function in $x$ and its maximum value is when $x$ is very close to $0$. Formally,
$$
f(x) \le \lim_{x \to 0} f(x)
$$
It can easily be verified that limit of $f$ as $x$ approaches $0$ is $1/e$. Therefore, 
$$f(x) \le \lim_{x \to 0} f(x) \le \frac{1}{e}.$$
\end{proof}
	
\end{document}